\definecolor{royalblue}{rgb}{0.2549019607843137, 0.4117647058823529, 0.8823529411764706}
\definecolor{crimson}{rgb}{0.8627450980392157, 0.0784313725490196, 0.23529411764705882}
\definecolor{forestgreen}{rgb}{0.13333333333333333, 0.5450980392156862, 0.13333333333333333}
\DeclareMathOperator*{\argmin}{argmin}
\DeclareMathOperator{\dir}{dir}
\DeclareMathOperator{\prj}{P}
\DeclareMathOperator{\prox}{prox}
\newcommand{\bhv}{\mathbb{T}_N}
\newcommand{\lnk}{\mathbb{L}_N}
\newcommand{\ort}{\mathbb{S}}
\newcommand{\perpdir}{\Sigma_{T}^\perp}
\newcommand{\prb}{P}
\newcommand{\prbalt}{Q}
\newcommand{\EE}{{\mathbb E}}
\newcommand{\NN}{{\mathbb N}}
\newcommand{\RR}{{\mathbb R}}
\newcommand{\TT}{{\mathbb T}}
\newcommand{\frf}{F_{\mathcal{T}}}
\newcommand{\mean}{\mu}
\newcommand{\smean}{\hat{\mean}}
\newcommand{\diff}{\mathrm{d}}
\newcommand{\stree}{\star}
\newcommand{\wst}[2][1]{\mathcal{P}^#1\left(#2\right)}
\newcommand{\iid}{{\stackrel{i.i.d.}{\sim}}}
\newcommand{\cO}{\mathcal{O}}
\newcommand{\cT}{\mathcal{T}}
\newcommand{\cE}{\mathcal{E}}
\newcommand{\cH}{\mathcal{H}}
\newtheorem{theorem}{Theorem}[section]
\newtheorem{proposition}[theorem]{Proposition}
\newtheorem{lemma}[theorem]{Lemma}
\newtheorem{corollary}[theorem]{Corollary}
\newtheorem{test}[theorem]{Test}
\theoremstyle{definition}
\newtheorem{definition}[theorem]{Definition}
\theoremstyle{remark}
\newtheorem{remark}[theorem]{Remark}
\newtheorem{example}[theorem]{Example}
\newcommand{\Comment}[1]{{$\star$\sf\textcolor{red}{#1}$\star$}}
\newcommand{\Commentl}[1]{{$\star$\sf\textcolor{magenta}{#1}$\star$}}
\definecolor{BrickRed}{cmyk}{0, .89, .94, .28}
\tikzstyle{process2} = [rectangle, 
\tikzstyle{arrow2} = [->, line width=.4mm]
\begin{document}
\title{Statistics for Phylogenetic Trees in the Presence of Stickiness}

\author{Lars Lammers, Tom M. W. Nye, Stephan F. Huckemann}
\maketitle

\tableofcontents

\begin{abstract}
Samples of phylogenetic trees arise in a variety of evolutionary and biomedical applications, and the Fr\'echet mean in Billera-Holmes-Vogtmann tree space is a summary tree shown to have advantages over other mean or consensus trees. 
However, use of the Fr\'echet mean raises computational and statistical issues which we explore in this paper. 
The Fr\'echet sample mean is known often to contain fewer internal edges than the trees in the sample, and in this circumstance calculating the mean by iterative schemes can be problematic due to slow convergence. 
We present new methods for identifying edges which must lie in the Fr\'echet sample mean and apply these to a data set of gene trees relating organisms from the apicomplexa which cause a variety of parasitic infections.
When a sample of trees contains a significant level of heterogeneity in the branching patterns, or topologies, displayed by the trees then the Fr\'echet mean is often a star tree, lacking any internal edges. 
Not only in this situation, the population Fr\'echet mean is affected by a non-Euclidean phenomenon called stickness which impacts upon asymptotics, and we examine two data sets for which the mean tree is a star tree. 
The first consists of trees representing the physical shape of artery structures in a sample of medical images of human brains in which the branching patterns are very diverse. 
The second consists of gene trees from a population of baboons in which there is evidence of substantial hybridization. 
We develop hypothesis tests which work in the presence of stickiness. 
The first is a test for the presence of a given edge in the Fr\'echet population mean; the second is a two-sample test for differences in two distributions which share the same sticky population mean. 
These tests are applied to the experimental data sets: we find no significant difference between male and female brain artery tree populations; in contrast, significant differences are found between subgroups of slower- and faster-evolving genes in the baboon data set.  
\end{abstract}

\section{Introduction}

The Billera-Holmes-Vogtmann (BHV) phylogenetic tree spaces are a class of metric spaces of phylogenetic trees, initially proposed in \cite{bhv}.
Phylogenetic trees are edge-weighted trees whose leaves represent present-day taxa and whose branching structure reflects the shared ancestry of taxa.
They can also be used to represent the physical shape of branching structures such as blood vessels. 
If every branch point in a tree is binary, the tree is called \emph{resolved}, and otherwise the tree is \emph{unresolved}.
The BHV tree space $\bhv$ is the set of all resolved and unresolved edge-weighted trees with leaves bijectively labelled $1,\ldots,N$. 
It is a stratified space, with one stratum for each tree topology; the topology of a tree is its structure modulo edge weights. 
As metric spaces, the BHV tree spaces have attractive geometrical properties: for example, the existence and uniqueness of a geodesic between any given pair trees, and convexity of the metric along geodesics.
Subsequently, an algorithm for computing geodesics in polynomial time \citep{OwenProvan11} with respect to $N$ was published.
Put together, the geometric properties of $\bhv$ and efficient computation of geodesics has enabled the development of a range of statistical methods for analysing samples of trees in BHV tree space.
Alternative spaces for phylogenetic trees have been proposed such as \emph{tropical tree space} by \cite{LinMonodYoshida2018} exploiting computational feasibility of tropical geometry \citep{MaclaganSturmfels2015} and \emph{wald space} by \cite{garba2021information, lueg2024foundations} which incorporates features of the models used to infer trees from genetic data.
However, BHV tree space has attracted the most statistical development to date due to its unique features. 

Specific methods developed for statistics in $\bhv$ include computational methods for
Fr\'echet means in BHV spaces \citep{bhvmean}, and principal component analysis \citep{bhvpca1, feragen13, bhvpca2}.
The Fr\'echet mean is a generalization of the mean of a probability distribution to metric spaces, defined in the following way. 
Given a probability distribution $\prb$ on a metric space $(M,d)$, the Fr\'echet function is defined as
\begin{align}\label{eq:Frechetfcn}
	F_\prb(x) = \frac{1}{2} \int_M d^2(x,y) \ \diff \prb(y)\,, \quad x \in M\,,
\end{align}
if the integral exists. 
The Fr\'echet mean of $\prb$ is then given by
\begin{align}\label{eq:Frechetmean}
	\mathfrak{b}(\prb) = \argmin_{x \in M} F_\prb(x)\,.
\end{align}
As \citet[p. 33]{sturm} noted, for the existence of Fr\'echet means on a complete metric space $(M,d)$ it suffices to require that $\prb\in \wst[1]{M}$ where
\begin{align}\label{eq:prob-spaces}
    \wst[k]{M} &:= \left\{ \prb \in \mathcal{P}(M) : \exists x \in M : \int_{M} d(x,y)^k \ \diff \prb(y)\right\}\,, k\in \NN
\end{align}
and $\mathcal{P}(M)$ denotes the set of Borel-probability distributions on $M$.
As established \cite[Lemma 4.1]{bhv}, BHV spaces are Hadamard spaces, ensuring the uniqueness of Fr\'echet means \cite[Proposition 4.3]{sturm}. 
Moreover, in \cite{brown2018mean} simulations were used to show that the Fr\'echet sample mean in BHV space offers advantages over mean trees or consensus trees defined in a different way.

However, use of the Fr\'echet mean is not without issues. 
On the one hand, if the topology of the mean is known, then there are algorithms which quickly determine the edge weights in the mean tree \cite{skwerer2018relative}. 
On the other hand, when the topology is unknown, iterative algorithms can be used to find the mean tree topology and edge weights \citep{bacak_alg,miller2015polyhedral,sturm}, but these algorithms can converge quite slowly. 
In particular, it has been observed that when the mean tree is less resolved than the trees in the sample, then the iterative algorithms keep changing the topology of the estimate of the mean, even after many iterations. 
In fact, we show in Theorem~\ref{thm:sturms-algo-not-singular} that under certain common conditions, Sturm's algorithm \citep{sturm} almost surely changes topology an infinite number of times. 
The main computational issue and open problem is therefore how to determine the topology of the Fr\'echet mean. 

In practice, a pragmatic approach has been to use an iterative algorithm for computing the Fr\'echet mean such as Sturm's algorithm, but then to ignore very short edges that come and go as the algorithm proceeds and as the topology repeatedly changes.  
In a similar way to a criterion due to~\cite{bhvmean}, in Theorem~\ref{thm:main-strata} we provide a sufficient condition for the presence of an edge in the Fr\'echet mean tree, and a related algorithm (Algorithm~\ref{alg:find-splits}) for finding these edges. 
Due to decomposition in Theorem \ref{thm:tancone}, this criterion involves directional derivatives of the Fr\'echet function~$\eqref{eq:Frechetfcn}$ at the star tree. 
Since the algorithm is not guaranteed to find all the edges in the Fr\'echet mean, information from directional derivatives can be used to add further edges to the topology of the proposed mean tree, which is the idea behind our Algorithms~\ref{alg:prox} and~\ref{alg:mindegrees}. 
Specifically, we minimize directional derivatives orthogonal to the topology of the proposed mean, and iteratively add edges to this topology. 

The second challenge presented by the BHV Fr\'echet mean is of statistical nature.
The asymptotic behavior of the sample Fr\'echet mean deviates from the classical case in Euclidean spaces. 
For some distributions, the sample Fr\'echet mean will be almost surely confined for large sample sizes to certain lower-dimensional strata of the BHV space containing unresolved trees \cite{barden18}.
This phenomenon is referred to as \emph{stickiness} of the Fr\'echet mean \cite{openbook, kale}
and poses new challenges.
For example, given two distributions with Fr\'echet means on the same lower dimensional stratum, the effectiveness of the sample Fr\'echet mean to discriminate between the two distributions is reduced. 
This is particularly problematic in the context of hypothesis testing.

Since our criterion for presence of an edge in the Fr\'echet mean tree applies to distributions as well as finite samples, we are able to define hypothesis tests for the population mean tree in the presence of stickiness. 
Given a proposed topology for the mean tree, we describe a hypothesis test for the presence of a single edge in the population mean tree and a joint test for multiple edges. 
Finally we propose a two-sample hypothesis test for equality of two distributions when they have the same sticky Fr\'echet population mean. 

We apply these algorithms and tests to experimental data from three studies with important biomedical applications: evolutionary trees for a set of apicomplexa (single-celled organisms associated with several parasitic infections) \citep{api1}; trees representing physical blood vessel structure in human brains \citep{skwerer2014tree}; and thirdly a large data set of evolutionary trees for a number of related populations of baboon, each tree corresponding to a different genomic locus \citep{baboons}. 
For the data set of apicomplexan gene trees, we confirm a previously published unresolved topology for the Fr\'echet mean which relied on arbitrary removal of short edges. 
For the latter two data sets the Fr\'echet sample mean is entirely unresolved, thereby representing a particular challenge to statistical analysis. 
We compare male and female artery trees: tests confirm that the Fr\'echet mean is the star tree for both populations, and moreover, that no significant difference can be detected between the two distributions. 
The Fr\'echet sample mean for the baboon trees is the star tree. 
In \cite{baboons} it was proposed that a high level of hybridization was present in the baboon populations. 
This non-vertical ancestry of genetic material causes a high level of topological heterogeneity in the trees from different loci, and as a result the fully-unresolved mean tree. 
Nonetheless, significant differences are found between subgroups of slower- and faster-evolving genes.

\begin{figure}
    \label{fig:enter-label}
    \centering
	\begin{tikzpicture}[node distance=2cm]
		\node (sample) [process2, ] {sample $\cT\subset \bhv$};
		\node (proposal) [process2,below of=sample, ] {splits in sample Fr\'echet mean};
		\node (verify) [process2, below of=proposal,] {verify proposed topology};
		\node (test) [process2, below of=verify,xshift = -2.5cm] {presence of splits in population Fr\'echet mean};
        \node (twosample) [process2, right of=test, xshift = 2.5cm] {discriminate between sticky distributions};

		\draw [arrow2, black] (sample) to node[anchor= west, xshift=.0cm,align=center] {Algorithm~\ref{alg:find-splits}}(proposal);
		\draw [arrow2, black] (proposal) to node[anchor= west, xshift=.0cm,align=center] {Algorithms\\\ref{alg:mindegrees} and \ref{alg:prox}}(verify);
		\draw [arrow2, black] (verify) to node[anchor= east, xshift=.0cm,align=center] {Test~\ref{test:one-sample} or \ref{test:one-sample-strata}}(test);
		\draw [arrow2, black] (verify) to node[anchor= west, xshift=.0cm,align=center] {Test~\ref{test:two-sample-star} or \ref{test:two-sample-strata}}(twosample);
	\end{tikzpicture}
	\caption{\it Our proposed tool chain for finding splits in the topology of the sample Fr\'echet mean and our new hypothesis tests for population  Fr\'echet means.}
	\label{fig:proc-chart}
\end{figure}
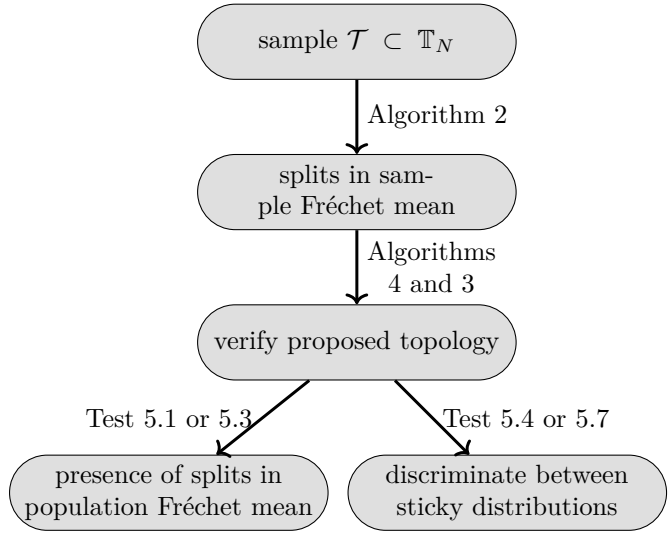

Figure \ref{fig:proc-chart} depicts the new tool chain based on our paper, which is structured as follows. In Section \ref{sec:BHV}, we establish
notation for BHV spaces and give a brief summary of results on geometry
of BHV spaces and the Fr\'echet mean.
In Sections \ref{sec:mean_top} and \ref{sec:min_dir_deriv}, we present our 
contributions to finding
the topology of the mean tree. Section \ref{sec:testing} is concerned
with possible applications in hypothesis testing.
Applications to experimental data are presented in Section \ref{sec:examples}.
All proofs are given in the Appendix \ref{sec:proofs}.

\section{Billera-Holmes-Vogtmann Phylogenetic Tree Spaces}
\label{sec:BHV}

\subsection{Phylogenetic Trees and Notation}

We call a directed acyclic graph with non-negatively weighted edges a \emph{phylogenetic tree} if every internal node is of degree of at least 3 and each exterior node is assigned a unique label, one of which is designated as \emph{root}. The remaining exterior nodes are referred to as \emph{leaves}. The weights of the edges are regarded as their lengths.

Edges of such a tree can be characterized by \emph{splits} and we characterize only \emph{interior} edges so. The removal of an internal edge results in the split of labels into disjoint sets $A$ and $B$, each of them containing at least two elements. We then write $s = A \vert B = B \vert A$ for that split. Two splits $A \vert B$ and $C\vert D$ are said to be \emph{compatible }if at least one of the intersections $A\cap C$, $A\cap D$, $B \cap C$ or $B \cap D$ is empty. Otherwise, we call them incompatible. Note that two incompatible splits cannot be present in the same tree.

The \emph{topology} of a phylogenetic tree is then given by the set of its splits. A topology of a tree with $N$ leaves is binary if it contains $(N-2)$ splits (interior edges). A tree which is binary is also called \emph{fully resolved}. A \emph{star tree} has no splits. 
Two examples of phylogenetic trees are displayed in Figure
\ref{fig:phylotrees}.

\begin{figure}[!h]
	\centering
	\subfloat[{\it A star tree, i.e. a tree without splits (interior edges).}]{
		\includegraphics*[width=5cm, height=5cm]{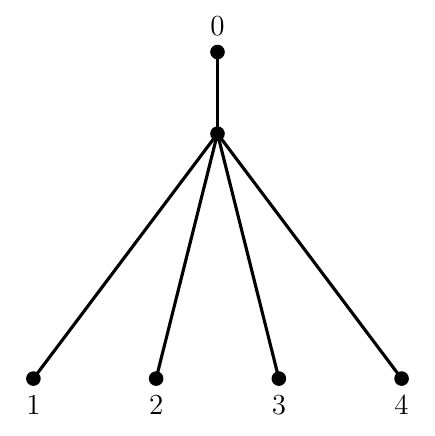}}
	\qquad
	\subfloat[{\it A binary tree with two splits \textcolor{royalblue}{$0,4 \vert 1,2,3$} and \textcolor{crimson}{$0,3,4 \vert 1,2$}.}]{
		\includegraphics*[width=5cm, height=5cm]{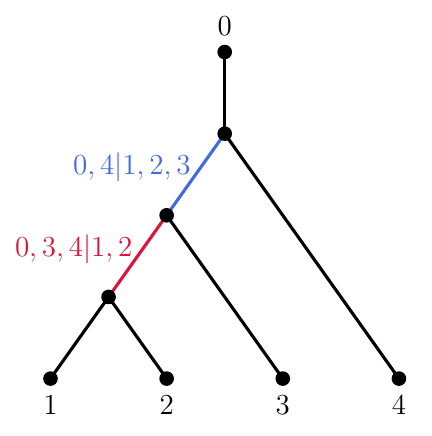}}
	\caption{\it Two phylogenetic trees with four leaves.}
	\label{fig:phylotrees}
\end{figure}

Throughout this paper, we shall use the following notation.
\begin{definition}\label{def:foundations}
	Suppose $N\geq 3$ leaves are given. Let $s$ be a split and $T$ be a
	phylogenetic tree.
	\begin{enumerate}
		\item $E(T)$ denotes the set of all splits of $T$. 
		\item $C(s)$ denotes the set of splits compatible with $s$ 
  and 
        $C(T) = \bigcap_{x \in E(T)} C(x)$ the set of splits compatible with
		      the topology of $T$ 
		\item Two sets $A, B$ of splits are compatible if all splits are pairwise
		      compatible.
		\item If $s \in C(T)\setminus E(T)$, write $T + \lambda \cdot s$ for the tree obtained by adding the split
		      s to $T \in \bhv$ with length $\lambda >0$ and set $T + 0 \cdot s = T$.
		\item $\lvert s \rvert_T$ denotes the length of $s$ in $T$ (possibly 0 if the split is not present) and write $$\rVert T \lVert := \sqrt{\sum_{x \in E(T)} \lvert x \rvert_T^2}\,.$$
	\end{enumerate}
\end{definition}

\subsection{Construction of BHV Spaces: Stratification and the Star Tree $\star$}\label{scn:constructing-bhv}

Next, we briefly introduce Billera-Holmes-Vogtmann tree spaces, giving only essential details. For a formal descirption,
see \cite{bhv}.
BHV spaces take split (interior edges) lengths into account, they ignore pendant edges lengths. For $N \in \mathbb{N}$ and $N \geq 3$ there are $M:=2^{N-1}-N-1$ possible splits, so the BHV space $\bhv$ of phylogenetic trees with $N$ leaves is the following subset of $\RR^M$: each topology is assigned a corresponding \emph{orthant} and within, the positive coordinates of a tree with that topology are given by the lengths of its splits. Binary trees are assigned corresponding $(N-2)$-dimensional
top-dimensional orthants and non-binary topologies occur at the boundaries of multiple top-dimensional orthants.
these orthants. In particular, non-binary topologies occur at the boundaries of multiple orthants. Thus the metric space $(\bhv,d_{\mathbb T_N})$ arises from embedding in the Euclidean $\RR^M$, i.e.  through gluing the orthants together at these boundaries and carrying the induced intrinsic metric.

\begin{definition}
	\label{def:stratum}

    For $N \geq 4$ and $1\leq l \leq N-3$, trees with common topology  given by the same $N-2-l$ splits, form a  \emph{stratum $\ort \subset \bhv$ of codimension $l$}. The metric projection onto its closure $\overline{\ort}$ is denoted by 
    $$\prj_{\overline{\ort}}: \bhv \to \overline{\ort}, \quad T \mapsto \argmin_{T^\prime \in \overline{\ort}} d_{\bhv}(T, T^\prime)\,.$$ 
 
\end{definition}

\begin{remark}
   Every stratum of codimension $1\leq l \leq N-2$, as it corresponds to a $(N-l-2)$-dimensional Euclidean orthant, is convex. In consequence the projection $\prj_{\overline{\ort}}$ to a closed convex set is well defined.
\end{remark}

This construction of \cite{bhv} 
yields a complete metric space $(\bhv,d)$ of global non-positive curvature -- a \emph{Hadamard space}.

All orthants are open, only the star tree is a single point in BHV space forming the stratum of codimension $N-2$, namely the origin of $\RR^M$. It also acts as the 'origin' of BHV space, as we will see in Section \ref{sec:directions}. In this work, the
star tree plays a key role, and will be denoted by $$\stree \in \bhv\,,$$
with trivial metric projection
 $$\prj_{\{\stree\}}: \bhv \to \{\stree\}, \quad T \mapsto \stree\,.$$

\subsection{Geodesics and Geodesics in BHV spaces}

\begin{definition}

	Let $(M,d)$ be a metric space. A  continuous curve $\gamma: [a,b] \to M$ is a \emph{geodesic} if there is a constant $c >0 $, called its \emph{speed}, such that for every $\kappa \in (a,b)$, there is $\epsilon > 0$ such that for every $\lambda_1, \lambda_2 \in [\kappa-\epsilon, \kappa+\epsilon]$
	\begin{align*}
		d(\gamma(\lambda_1), \gamma(\lambda_2)) = c \lvert \lambda_1 - \lambda_2 \rvert
	\end{align*}
 
	A geodesic is called \emph{minimizing} if the equation above holds for every $\lambda_1, \lambda_2$.
	We say a geodesic is of \emph{unit speed} if $c=1$.
 
	A \emph{geodesic segment} $G \subseteq M$ is the image of a geodesic, meaning there 	is a geodesic $\gamma: [a,b] \to M$ such that $G = \gamma([a,b])$.

\end{definition}

Let $\gamma: [a,b] \to \bhv$ be a continuous curve. Then its length  in the above introduced metric  $d_{\bhv}$ is given by
\begin{align*}
	L\left(\gamma\big\vert_{[\lambda_1,\lambda_2]}\right) = \sup_{\substack{ \lambda_1 \leq t_1 < \ldots < t_n \leq \lambda_2 \\
			n \in \mathbb{N}\setminus \{1\}
		}} \sum_{i=1}^{n-1} \lVert\gamma(t_{i}) - \gamma(t_{i+1})\rVert_2,
\end{align*}
where $\gamma_(t_i)$ and $\gamma_(t_i+1)$ are chosen to lie in the same closed orthant in which $ \lVert \cdot \lVert_2$ denotes the Euclidean norm.
Then, for $T_1, T_2 \in \bhv$, 
their distance in $\bhv$ is also given by
\begin{align*}
	d_{\bhv}(T_1, T_2) = \min_{\footnotesize \begin{array}{c} 
 \gamma: [a,b] \to \bhv\mbox{ continuous }\\\gamma(a) =T_1,  \gamma(b) = T_2
 \end{array}} L(\gamma)\,.
\end{align*}


Since BHV spaces are Hadamard spaces, as mentioned at the end of the previous Section \ref{scn:constructing-bhv}, geodesics are unique and so are Fr\'echet means \citep{sturm}.

With the notion of `support pairs'  \cite{owenphd} 
characterize geodesics and  \cite{OwenProvan11} develop methods to compute geodesics in BHV spaces in
polynomial time.

\begin{definition}

	Let $N \geq 3$ and $T_1, T_2 \in \bhv$. Let $\mathcal{A} = (A_0, A_1,
		\ldots A_k)$ be a partition of $E(T_1) \cup (C(T_1) \cap E(T_2))$ and
	$\mathcal{B} = (B_0, B_1, \ldots B_k)$ be a partition of $E(T_2) \cup
		(C(T_2) \cap E(T_1))$.

	The pair $(\mathcal{A}, \mathcal{B})$ is called a \emph{support pair} of
	$T_1$ and $T_2$ if
	\begin{enumerate}

		\item $A_0$ and $B_0$ are equal and contain all splits
		      that are either shared between $T_1$ and $T_2$ or compatible
		      with the other tree, i.e.
		      \begin{align*}
			      A_0 = B_0 = \left( E(T_1) \cap E(T_2) \right) \cup \left( E(T_1) \cap C(T_2) \right) \cup \left( E(T_2) \cap C(T_1)\right)\,,
		      \end{align*}
		      and
		\item $A_i$ is compatible with $B_j$ for all $1 \leq j < i \leq k$.
	\end{enumerate}
\end{definition}

\cite{Owen2011only} showed that the geodesic segment joining any two points $T_1, T_2$ corresponds to a unique support pair of the two trees $T_1, T_2$ with the additional property
\begin{eqnarray}\label{eq:bhv-geodesics1}\frac{\|A_i\|}{\|B_i\|}&\leq&  \frac{\|A_{i+1}\|}{\|B_{i+1}\|},\quad 1\leq i < k\,.\end{eqnarray}
Here,
$$ \|A_i\|:= \|A_i\|_{T_1} = \sqrt{\sum_{s\in A_i}\vert s\vert^2_{T_1}}\quad\mbox{ and }\quad\|B_i\| := \|B_i\|_{T_2} =\sqrt{\sum_{s\in B_i}\vert s\vert^2_{T_2}}\quad\mbox{ for }1\leq i \leq k\,.$$

Then the geodesic $\overline{\gamma}: [0,1] \to \bhv$ with $\overline{\gamma}(0) = T_1$ and
$\overline{\gamma}(1)=T_2$ can be split into the following segments

\begin{align*}
	\overline{\gamma}_i = \begin{cases}
		           \overline{\gamma}\left(\left[0, \frac{\lVert A_1 \rVert}{\lVert A_1 \rVert + \lVert B_1 \rVert}\right)\right)                                                                           & \quad \text{if }i =0\,,         \\
		           \overline{\gamma}\left(\left[\frac{\lVert A_i \rVert}{\lVert A_i \rVert + \lVert B_i \rVert}, \frac{\lVert A_{i+1} \rVert}{\lVert A_{i+1} \rVert + \lVert B_{i+1} \rVert}\right)\right) & \quad \text{if }1 \leq i < k\,, \\
		           \overline{\gamma}\left(\left[\frac{\lVert A_i \rVert}{\lVert A_i \rVert + \lVert B_i \rVert}, 1\right]\right)                                                                           & \quad \text{if }i =k\,,         \\
	           \end{cases}
\end{align*}
such that $\overline{\gamma}_i$ is a straight line in the closed orthant corresponding to the splits of
$A_0\cup B_1 \cup \ldots B_i \cup A_{i+1} \cup \ldots A_k$.

The length of a split $s$ at $\lambda \in [0,1]$ is then given by
\begin{align}
\label{eq:bhv_geod_splits}
	\lvert s \rvert_{\overline{\gamma}(\lambda)} = \begin{cases}
		                                    (1 - \lambda) \lvert s \rvert_{T_1} + \lambda \lvert s \rvert_{T_2} \quad                                          & \text{if } s \in A_0=B_0\,, \\
		                                    \frac{(1 - \lambda) \lVert A_i \rVert - \lambda \lVert B_i \rVert}{\lVert A_i \rVert} \lvert s \rvert_{T_1}
         \cdot 1_{\left[0,\frac{\|A_i\|}{\|A_i\|+\|B_i\|}\right]}(\lambda)         
                                      \quad  & \text{if } s \in A_i\,,1\leq i\leq k\,,     \\
		                                    \frac{ \lambda \lVert B_i \rVert - (1 - \lambda) \lVert A_i \rVert}{\lVert B_i \rVert} \lvert s \rvert_{T_2}
                                               \cdot 1_{\left[\frac{\|A_i\|}{\|A_i\|+\|B_i\|},1\right]}(\lambda) 
\quad & \text{if } s \in B_i\,,1\leq i\leq k\,.     \\
	                                    \end{cases}
\end{align}
In total, one obtains for the geodesic distance of $T_1$ and $T_2$
\begin{align}\label{eq:bhv-distance-explicit}
	d_{\bhv}(T_1, T_2) = \sqrt{\sum_{s \in A_0}  (\lvert s \rvert_{T_1} - \lvert s \rvert_{T_1})^2 + \sum_{i =1}^k(\lVert A_i \rVert + \lVert B_i \rVert)^2}\,.
\end{align}

Vice versa, \cite{OwenProvan11} showed that the above construction leads to a geodesic if and only if
\begin{eqnarray}\label{eq:bhv-geodesics2}
 \frac{\|I_1\|}{\|J_1\|}&>&  \frac{\|I_{2}\|}{\|J_{2}\|}
\end{eqnarray}
for all $i \in \{1,2 \ldots, k\}$ and every nontrivial (i.e. each containing at least one element)
		      partitions $A_i = I_1 \cup I_2$, $B_i = J_1 \cup J_2$ such that
		      $I_2$ and $J_1$ are compatible.





An illustration of geodesics is given in Figure \ref{fig:bhv_geods}.

\begin{figure}[!ht]
    \centering
    \includegraphics[scale=.6]{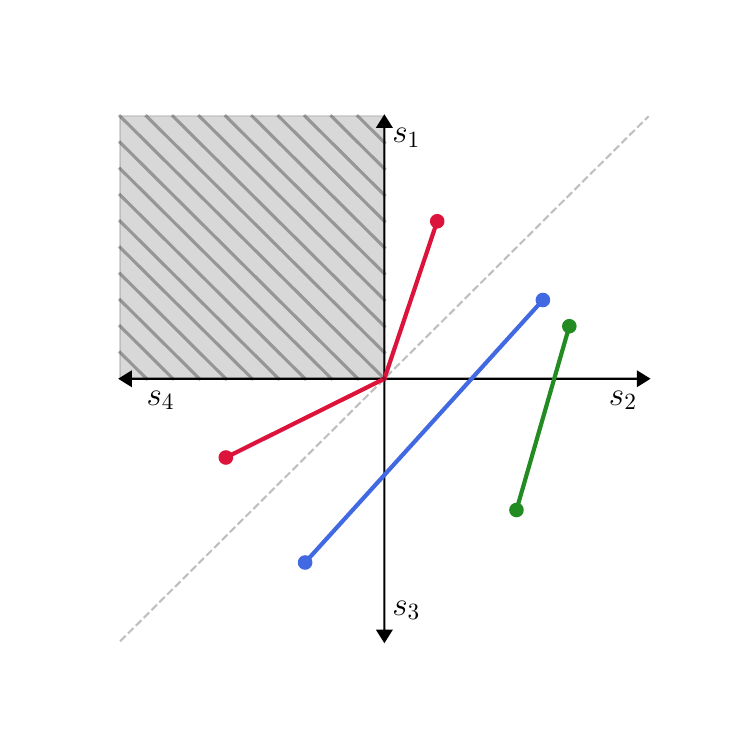}
    \caption{\it Three examples of geodesics in a subset of $\mathbb{T}_4$ from the top point to the bottom point, respectively. 
    The corresponding support pairs are \textcolor{crimson}{$\big((\emptyset, \{s_1,s_2\}),(\emptyset, \{s_3,s_4\})\big)$},
    \textcolor{royalblue}{$\big((\emptyset, \{s_1\},\{s_2\}),(\emptyset, \{s_3\},\{s_4\})\big)$} and \textcolor{forestgreen}{$\big((\{s_2\}, \{s_1\}),(\{s_2\}, \{s_3\})\big)$}.
    }
    \label{fig:bhv_geods}
\end{figure}

\begin{definition}\label{def:bhv-geodesics}

	Let $N \geq 3$. For $T_1, T_2 \in \bhv$, we denote by $\gamma_{T_1}^{T_2}:
			      [0, d(T_1, T_2)] \to \bhv$ the unique unit speed geodesic from $T_1$ to $T^\prime$.
		      The geodesic reparametrized on the unit interval is denoted by $\overline{\gamma}_{T_1}^{T_2} : [0,1] \to \bhv\,, t \mapsto \gamma_{T_1}^{T_2}(t \cdot d(T_1, T_2))$.
\end{definition}

\begin{remark}
    \label{rem:dist_single_split}

    Let $T \in \bhv$, $N \geq 3$, and $s$ be an arbitrary split of the appropriate labels.
    Then, the computation of the distance between $T$ and the tree $\stree + \lambda \cdot s$, $\lambda >0$, 
    is rather simple, as is easily verified by \eqref{eq:bhv-distance-explicit}.
    We have
    \[
        d^2(T,\stree + \lambda \cdot s) = \begin{cases}
             \left(\lambda + \sqrt{\sum_{C(s)\not\ni x \in E(T)}
				\lvert x \rvert_T^2}\right)^2 + \sum_{x \in E(T) \cap C(s)} \lvert x \rvert_T^2        &\text{if } s \not\in C(T)\,,  \\
             (\lambda - \lvert s \rvert_T)^2 + \sum_{s\neq x \in E(T)} \lvert x \rvert_T^2 \quad &\text{if } s \in E(T)\,,\\
             \lambda^2 + \sum_{x \in E(T)} \lvert x \rvert_T^2 &\text{else. }\\ 
             \end{cases}
    \]
\end{remark}

\subsection{The Space of Directions and the Tangent Cone}
\label{sec:directions}
Our results and proposed methods rely on the notion of directions of geodesics. For the following 
and further reading, we refer to \cite{barden18} for orthant spaces, a slightly more general notion of BHV spaces, and \cite[Chapter 3, 9]{burago} for a broader overview in metric geometry.

\begin{definition}

	Let $N \geq 3$ and let $T \in \bhv$ and let $\gamma: [0,a] \to \bhv$,
	$\gamma^\prime:[0,a^\prime] \to \bhv$ be two 
	$\gamma(0) = \gamma^\prime(0) = T$.
	\begin{enumerate}
		\item The \emph{Alexandrov angle} at $x$ between $\gamma$ and $\gamma^\prime$ is
		      given by
		      \begin{align*}
			      \angle_T(\gamma, \gamma^\prime) := \lim_{\lambda, \lambda^\prime \searrow 0} \arccos
			      \left( \frac{\lambda^2+ {\lambda'}^2 -d(\gamma(\lambda), \gamma^\prime(\lambda^\prime))}
			      {2 \lambda \cdot \lambda^\prime}\right)\,.
		      \end{align*}
		\item If $\angle_T(\gamma, \gamma^\prime) = 0$, the two geodesics have the \emph{same direction} $\sigma$ at $T$. The \emph{space of directions at $T$}, denoted by $\Sigma_T$, is given by the set of equivalence classes of geodesics with the same directions.
		\item The space $\mathfrak{T}_T = \Sigma_T \times \RR_0 / \sim$, where $(\sigma_1, r_1) \sim  (\sigma_2, r_2)$ if
		      $(\sigma_1 = \sigma_2 \land r_1 = r_2)$ or $(r_1=0=r_2)$, is called 		      the \emph{tangent cone} at $T$ and it is equipped with the distance
		      \begin{align*}
			      \tilde{d}_T((\sigma_1, r_1), (\sigma_2, r_2)) := \sqrt{r_1^2 + r_2^2 - 2 r_1 r_2 \cos(\angle_x(\sigma_1, \sigma_2))}\,.
		      \end{align*}
		      The class of $(\sigma, 0)$ is called the \emph{cone point} $\cO$.

	\end{enumerate}

\end{definition}

\begin{remark}
    \label{rem:al_angle}

    By construction, the Alexandrov angle confined to the range $[0,\pi]$.
\end{remark}

The construction of the spaces of directions and the tangent cone result in two metric spaces.

\begin{proposition}

	Let $N \geq 3$ and $T \in \bhv$. Then, both $(\Sigma_T, \angle_T)$ and
	$(\mathfrak{T}_T, \tilde{d}_T)$ are complete metric spaces.

\end{proposition}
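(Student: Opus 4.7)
The plan is to verify the metric-space axioms using general Hadamard (CAT(0)) theory for $\angle_T$ together with the standard Euclidean cone construction for $\tilde d_T$, and then to establish completeness by exploiting the finite orthant structure of $\bhv$.

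For the metric axioms of $(\Sigma_T, \angle_T)$: since $\bhv$ is a Hadamard space, the limit defining $\angle_T$ exists because comparison angles are monotone in $t, s$ in CAT(0) spaces, and the triangle inequality $\angle_T(\sigma_1, \sigma_3) \le \angle_T(\sigma_1, \sigma_2) + \angle_T(\sigma_2, \sigma_3)$ is a classical Alexandrov-geometry result (e.g.\ Burago--Burago--Ivanov, Chapter~9, or Bridson--Haefliger, II.3). Non-negativity, symmetry, and positive definiteness are immediate from the definition of $\angle_T$ and the equivalence relation defining $\Sigma_T$. For $(\mathfrak{T}_T, \tilde d_T)$, the formula is exactly the Euclidean cone metric over $(\Sigma_T, \angle_T)$. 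Non-negativity follows from $r_1^2 + r_2^2 - 2 r_1 r_2 \cos \angle \ge (r_1 - r_2)^2 \ge 0$; positive definiteness is built into the equivalence relation defining $\mathfrak{T}_T$; and the triangle inequality follows from that for $\angle_T$ by the standard embedding of any three cone points into a planar Euclidean model with the prescribed radii and angles.

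For completeness of $\Sigma_T$, the plan is to show it is compact. Since $\bhv$ is built from only finitely many orthants, for $\epsilon > 0$ small enough the closed ball $\overline{B_\epsilon(T)}$ is contained in the union of those finitely many closed orthants that have $T$ in their closure, each contributing a compact convex slice. Hence $\bhv$ is locally compact, and the metric sphere $S_\epsilon(T) := \{ x \in \bhv : d_{\bhv}(T, x) = \epsilon\}$ is compact. Using uniqueness of geodesics in Hadamard spaces, I would define $\Phi : S_\epsilon(T) \to \Sigma_T$ by $\Phi(x) = [\gamma_T^x]$. This map is continuous via the CAT(0) comparison inequality: $\angle_T(\Phi(x), \Phi(x'))$ is bounded above by the comparison angle at $T$, which tends to $0$ as $d_{\bhv}(x, x') \to 0$. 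It is surjective for $\epsilon$ small because every direction at $T$ has a representative geodesic of length bounded below uniformly. Hence $\Sigma_T$ is the continuous image of a compact space, itself compact, and therefore complete.

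Completeness of $\mathfrak{T}_T$ then follows from that of $\Sigma_T$. Given a Cauchy sequence $(\sigma_n, r_n)$, the inequality $|r_n - r_m| \le \tilde d_T((\sigma_n, r_n), (\sigma_m, r_m))$ gives $r_n \to r \ge 0$. If $r = 0$, the sequence converges to the cone point $\cO$; otherwise the radii are eventually bounded below, whence solving the cone formula for $\cos \angle_T(\sigma_n, \sigma_m)$ forces $\angle_T(\sigma_n, \sigma_m) \to 0$, so $(\sigma_n)$ is Cauchy in $\Sigma_T$, converges to some $\sigma$, and $(\sigma_n, r_n) \to (\sigma, r)$. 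The main obstacle is the surjectivity of $\Phi$: one must show that, uniformly in direction, every equivalence class of geodesics from $T$ contains a representative of length at least some $\epsilon > 0$. I would handle this by case analysis on which stratum contains $T$, exploiting that $\bhv$ is locally Euclidean within each orthant slice through $T$ and that only finitely many strata are adjacent to $T$.
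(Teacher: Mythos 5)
The paper does not actually prove this proposition---it is stated as a fact and deferred to standard references on metric geometry (Bridson--Haefliger and Burago--Burago--Ivanov), so your attempt is a genuinely novel route rather than a reconstruction of an argument in the paper. Your verification of the metric axioms (existence of the Alexandrov angle limit via monotonicity of comparison angles in CAT(0), the triangle inequality from Alexandrov geometry, and the Euclidean-cone triangle inequality for $\tilde{d}_T$) is correct, as is the reduction of completeness of $\mathfrak{T}_T$ to completeness of $\Sigma_T$: the inequality $|r_n - r_m| \le \tilde{d}_T\big((\sigma_n,r_n),(\sigma_m,r_m)\big)$ controls the radii, the cone-point case is trivial, and once radii are bounded below the cone formula forces $\angle_T(\sigma_n,\sigma_m)\to 0$.

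The genuine gap is the one you flag yourself: surjectivity of $\Phi : S_\epsilon(T) \to \Sigma_T$. Local compactness and continuity of $\Phi$ are correctly argued, but without surjectivity you have only shown that $\Sigma_T$ \emph{contains} a compact image, not that it \emph{is} compact. Your proposed ``case analysis by stratum'' is both unexecuted and more work than the situation demands. The cleaner way to close the gap uses the fact---standard for finite piecewise-Euclidean complexes and available here since $\bhv$ is a finite gluing of Euclidean orthants---that near any $T$ the space is \emph{locally isometric} to its tangent cone $\mathfrak{T}_T$, i.e.\ to the Euclidean cone over $\Sigma_T$, on a ball of radius $\epsilon(T)>0$. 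Within that ball the sphere $S_\epsilon(T)$ is exactly $\{\epsilon\}\times\Sigma_T$, which gives surjectivity (indeed bijectivity) for free and uniformly in direction. Alternatively, one can avoid $\Phi$ entirely by observing that $\Sigma_T$ is the geometric link of $T$ in a finite polyhedral complex, hence itself a finite spherical complex and therefore compact---this is effectively what the paper's cited references encode, and for $T$ in a codimension-$l$ stratum it is also made explicit by the decomposition $\Sigma_T \cong \Sigma_T^\parallel * \perpdir$ appearing later in the paper as Theorem~\ref{thm:tancone}. As written, your proof is incomplete at this one acknowledged step; with the local cone isometry supplied, the rest of the argument goes through.
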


From \cite[Theorem II.3.19]{bridson} we take the following desirable properties.

\begin{remark}\label{rm:Cat-1}
    For any $T\in \bhv$, in particular, $(\Sigma_T, \angle_T)$ is a CAT(1) space, implying that there are unique geodesics between points of distance $<\pi$. Moreover, $(\mathfrak{T}_T, \tilde{d}_T)$ is a Hadamard space.
\end{remark}


\begin{definition}\label{def:geodesics}

	Let $N \geq 3$.
	\begin{enumerate}
		\item Let $T \in \bhv$. For two directions $\sigma, \sigma^\prime \in \Sigma_T$ with $\angle_T(\sigma, \sigma^\prime) < \pi$,
		      we denote the unit speed geodesic from $\sigma$ to $\sigma^\prime$ by $\beta_\sigma^{\sigma^\prime} : [0, \angle_T(\sigma, \sigma^\prime)] \to \Sigma_T$.
		      We write $\overline{\beta}_\sigma^{\sigma^\prime} : [0,1] \to \Sigma_T\,, t \mapsto \beta_\sigma^{\sigma^\prime}(t \cdot \angle_T(\sigma, \sigma^\prime))$ for the reparametrized geodesic over the unit interval.

		\item For $T \in \bhv$, let $\dir_T: \bhv\setminus\{T\} \to \Sigma_T$ denote the map
		      that maps $T^\prime \in \bhv\setminus\{T\}$ to the direction of the geodesic $\gamma_T^{T^\prime}$ at $T$.
		\item The \emph{logarithm map} at $T \in \bhv$ is defined as
		      \begin{align*}
			      \log_T :  \bhv & \to \mathfrak{T}_T                                                             \\
			      T^\prime       & \mapsto \begin{cases}
				                               \mathcal{O}                        & \quad \text{if $T = T^\prime$,} \\
				                               (\dir_T(T^\prime), d(T, T^\prime)) & \quad \text{else.}
			                               \end{cases}
		      \end{align*}
	\end{enumerate}
\end{definition}

In view of the above definition of the logarithm  we can simply use points in $\bhv$ as directions, representing unit speed geodesics from another point in  $\bhv$ . For instance for $T_1, T_2 \in \bhv$, $T_1\neq T_2$ and $\sigma \in \Sigma_{T_1}$, we write
\[
    \angle_{T_1} (\sigma, T_2) := \angle_{T_1} (\sigma, \dir_{T_1}(T_2))\,.
\]
Furthermore, if T is not fully resolved, we write $\sigma_s = \dir_T(T + 1\cdot s) \in \Sigma_T$
for a split $s \in C(T)\setminus E(T)$.

\begin{figure}[!h]
	\centering
	\subfloat[{\it The Peterson graph.}]{
		\includegraphics*[width=7cm, height=7cm]{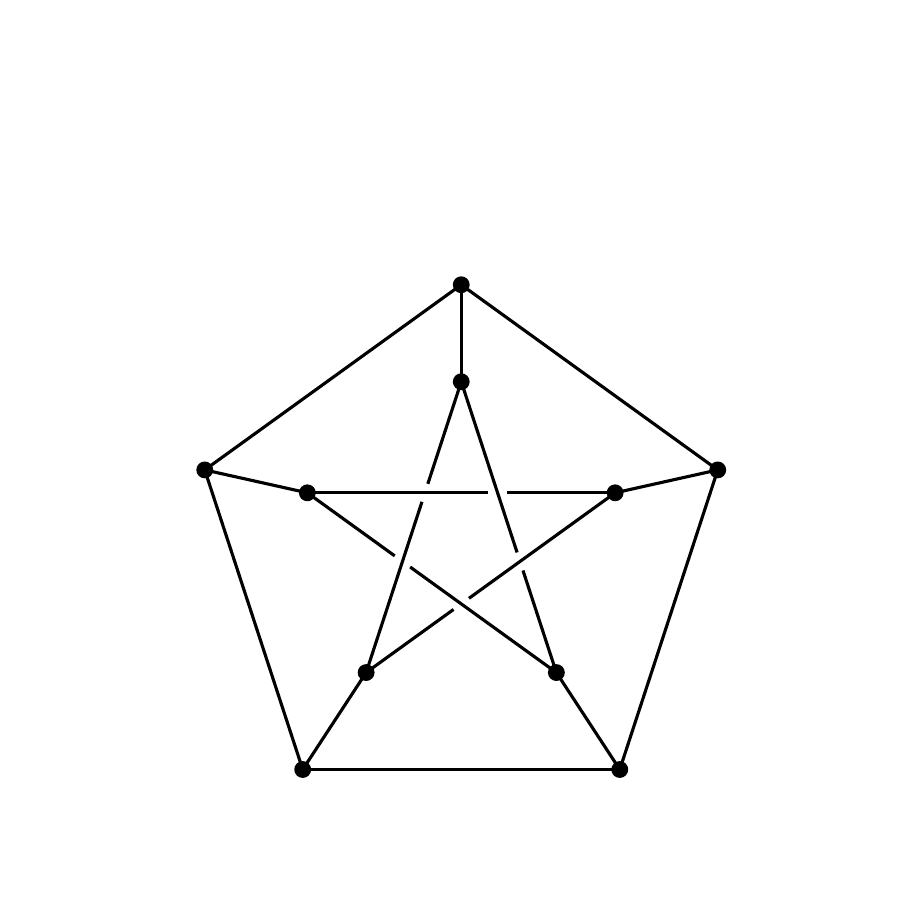}}
	\subfloat[{\it Three maximal orthants of $\mathbb{T}_4$ (gray) with corresponding part of the  link (black).}]{
		\includegraphics*[width=6cm, height=6cm]{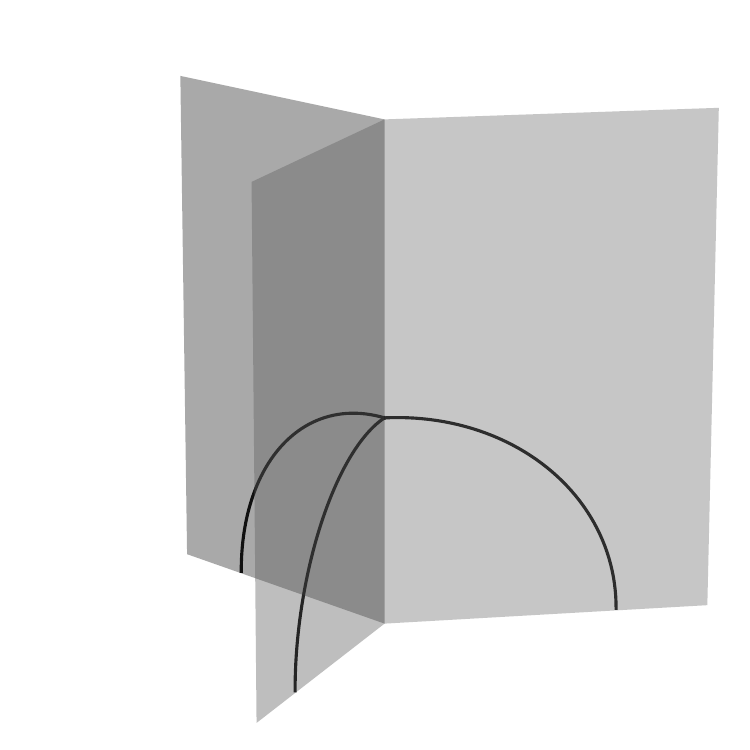}}
	\caption{\it The link $\mathbb{L}_4$ is the Peterson graph, in which each
		edge has length $\pi/2$.}
	\label{fig:BHVlink}
\end{figure}

It turns out that the space of directions at the star tree can be identified with the 
\emph{link}, 
\begin{align*}
	\lnk := \{ T \in \bhv \ : \ \lVert T \rVert = 1 \}\,,
\end{align*}
the unit sphere in BHV spaces.
A part of the link in $\mathbb{T}_4$ is depicted in Figure \ref{fig:BHVlink}.

In \cite[Section 4.1]{bhv}, an alternative way of constructing BHV spaces as cones over the respective links is discussed, with the star tree serving as the cone point. In consequence, the tangent cone at $\stree$ is just the space itself, and the link can be identified with the space of directions at $\stree$. We condense this in the following proposition.

\begin{proposition}
    \label{prop:bhv_cone}

	Let $N \geq 3$. Then, the following hold.
	\begin{enumerate}
		\item The map $\mathbb{L}_N \to \Sigma_\stree, T \mapsto \dir_\stree(T)$
		      is \emph{bijective}.
		\item The map $\log_\stree: \bhv \to \mathfrak{T}_\stree$ is an \emph{isometry}.
	\end{enumerate}
	Furthermore, one has for any $T_1, T_2 \in
		\bhv\setminus \{\stree\}$ that
	\begin{align*}
		\angle_\stree\left(\dir_\stree(T_1),\dir_\stree(T_2)\right) = \arccos\left(\frac{\lVert T_1 \rVert^2 + \lVert T_2 \rVert^2 -  d^2(T_1, T_2)}{2 \lVert T_1 \rVert \lVert T_2 \rVert}\right)\,.
	\end{align*}

\end{proposition}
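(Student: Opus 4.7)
The plan is to establish part (3) first by a direct computation, and then obtain parts (1) and (2) as immediate consequences.

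My first step is the observation that for any $T \in \bhv$ with $T \neq \stree$, the curve $\lambda \mapsto (\lambda/\|T\|)T$ is a continuous curve of length $\|T\|$ from $\stree$ to $T$ contained in a single closed orthant, hence is the unique unit speed geodesic $\gamma_\stree^T$ by uniqueness of geodesics in the Hadamard space $\bhv$. In particular every direction at $\stree$ is represented by such a ray, so $\dir_\stree \colon \mathbb{L}_N \to \Sigma_\stree$ is surjective.

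The second and central step is to establish the scaling identity
\begin{align*}
d_{\bhv}^2(\alpha T_1,\beta T_2) = \alpha^2 \|T_1\|^2 + \beta^2 \|T_2\|^2 - 2\alpha\beta\, C(T_1,T_2)
\end{align*}
for all $T_1,T_2 \in \bhv\setminus\{\stree\}$ and $\alpha,\beta>0$, where
\begin{align*}
C(T_1,T_2) := \sum_{s\in A_0}|s|_{T_1}|s|_{T_2} - \sum_{i=1}^k \|A_i\|_{T_1}\|B_i\|_{T_2}
\end{align*}
and $(\mathcal{A},\mathcal{B})$ is the support pair realizing the geodesic between $T_1$ and $T_2$. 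The key point is that the Owen--Provan criterion \eqref{eq:bhv-geodesics1}--\eqref{eq:bhv-geodesics2} involves only ratios $\|A_i\|/\|B_i\|$, so scaling $T_1$ by $\alpha$ and $T_2$ by $\beta$ uniformly rescales each such ratio by $\alpha/\beta$ and preserves validity of both conditions. Hence the same support pair describes the geodesic between $\alpha T_1$ and $\beta T_2$, and the identity follows by expanding \eqref{eq:bhv-distance-explicit} and using that $(A_0\cap E(T_1),A_1,\dots,A_k)$ partitions $E(T_1)$ and $(B_0\cap E(T_2),B_1,\dots,B_k)$ partitions $E(T_2)$, so the $\alpha^2$ coefficient collapses to $\|T_1\|^2$ and the $\beta^2$ coefficient to $\|T_2\|^2$.

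Applying this with $\alpha=\lambda_1/\|T_1\|$ and $\beta=\lambda_2/\|T_2\|$, so that $\gamma_\stree^{T_j}(\lambda_j) = (\lambda_j/\|T_j\|)T_j$, the expression inside the arccosine in the definition of the Alexandrov angle becomes
\begin{align*}
\frac{\lambda_1^2 + \lambda_2^2 - d_{\bhv}^2(\gamma_\stree^{T_1}(\lambda_1),\gamma_\stree^{T_2}(\lambda_2))}{2\lambda_1\lambda_2} = \frac{C(T_1,T_2)}{\|T_1\|\|T_2\|},
\end{align*}
which is independent of $\lambda_1,\lambda_2$. The triangle inequalities $|\lambda_1-\lambda_2|\leq d_{\bhv}(\gamma_\stree^{T_1}(\lambda_1),\gamma_\stree^{T_2}(\lambda_2))\leq \lambda_1+\lambda_2$ (from unit speed) force this quantity to lie in $[-1,1]$, so the limit exists trivially and $\angle_\stree(\dir_\stree(T_1),\dir_\stree(T_2)) = \arccos(C(T_1,T_2)/(\|T_1\|\|T_2\|))$. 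Specializing the scaling identity to $\alpha=\beta=1$ yields $C(T_1,T_2) = (\|T_1\|^2+\|T_2\|^2-d_{\bhv}^2(T_1,T_2))/2$, which is the formula in (3). Part (1) follows because two unit-norm trees with $\dir_\stree(T_1)=\dir_\stree(T_2)$ give cosine $1$ in (3), forcing $d_{\bhv}(T_1,T_2)=0$, hence $T_1=T_2$. For (2), the scaling identity at $\alpha=\beta=1$ combined with (3) gives exactly $d_{\bhv}^2(T_1,T_2) = \|T_1\|^2+\|T_2\|^2 - 2\|T_1\|\|T_2\|\cos\angle_\stree(\dir_\stree(T_1),\dir_\stree(T_2)) = \tilde d_\stree(\log_\stree T_1,\log_\stree T_2)^2$; surjectivity of $\log_\stree$ follows from (1) by sending $(\sigma,r)$ to $rT$ where $\sigma = \dir_\stree(T)$, $T\in\mathbb{L}_N$, and injectivity is automatic from distance preservation.

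The main obstacle is the bookkeeping in the second step: verifying that the expansion of \eqref{eq:bhv-distance-explicit} rearranges into the claimed scaling identity requires careful attention to the partitioning properties of the support pair, in particular the fact that $A_0$ simultaneously contains splits of $T_1$ compatible with $T_2$ \emph{and} splits of $T_2$ compatible with $T_1$, so that for $s \in A_0 \setminus E(T_1)$ one must remember $|s|_{T_1}=0$ when collapsing the $\alpha^2$ coefficient. Once the scaling identity is in place, everything else is essentially automatic.
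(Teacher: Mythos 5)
Your proof is correct. The paper, however, does not prove this proposition: it cites Section~4.1 of \cite{bhv}, where BHV space is explicitly constructed as the Euclidean cone over the link $\lnk$, and the proposition is then read off as a restatement of that cone structure (all three claims are immediate from the definition of the cone metric on $C_0\lnk$ and the identification of $\bhv$ with $C_0\lnk$). Your argument is a genuinely different, self-contained route: instead of invoking the cone construction, you derive everything directly from the Owen--Provan distance formula \eqref{eq:bhv-distance-explicit} by observing that the geodesic support pair is invariant under scaling $T_1 \mapsto \alpha T_1$, $T_2 \mapsto \beta T_2$ (because \eqref{eq:bhv-geodesics1}--\eqref{eq:bhv-geodesics2} involve only ratios), which yields the bilinear identity $d^2(\alpha T_1,\beta T_2)=\alpha^2\|T_1\|^2+\beta^2\|T_2\|^2-2\alpha\beta\,C(T_1,T_2)$. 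From there the Alexandrov angle cosine is a constant in $\lambda_1,\lambda_2$ and all three parts fall out. The trade-off: the paper's citation is economical and places the result in its proper historical context, whereas your approach is self-contained, makes the law-of-cosines structure completely explicit, and gives a reusable scaling identity that could serve elsewhere. Two small points of care you handled correctly: the support-pair partition must be split as $(A_0\cap E(T_1),A_1,\dots,A_k)$ of $E(T_1)$ with the zero split lengths $|s|_{T_1}=0$ for $s\in A_0\setminus E(T_1)$ absorbed harmlessly, and the typo in the paper's \eqref{eq:bhv-distance-explicit} (which reads $(|s|_{T_1}-|s|_{T_1})^2$ instead of $(|s|_{T_1}-|s|_{T_2})^2$) needs to be silently corrected, as you did.
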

\begin{figure}
	\centering
	\includegraphics[scale=.75]{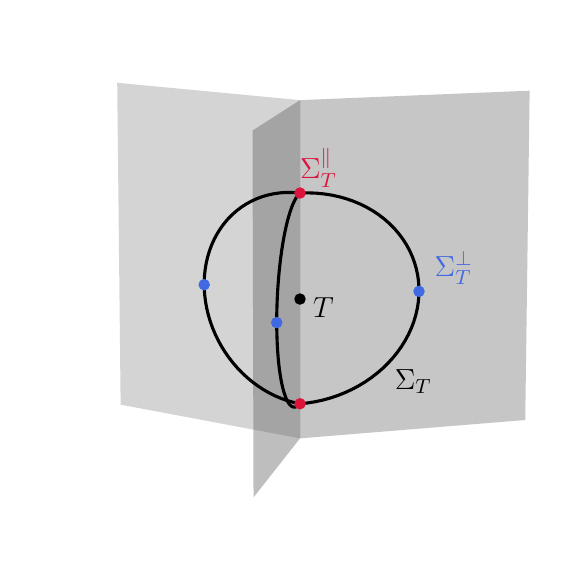}
	\caption{\it The space of directions at a point $T$ lying in a stratum of codimension 1
		in $\mathbb{T}_4$. There are two directions parallel and three directions
		perpendicular to the stratum in which $T$ resides.}
	\label{fig:directions}
\end{figure}

The directions perpendicular to a lower dimensional stratum, see Figure \ref{fig:directions}, play a key role for the limiting behavior of Fr\'echet means in Theorem \ref{thm:stickyorth} below and the characterization of their topologies, as we develop in the sequel.

\begin{definition}
    
	\label{def:directionsSpaceAtStratum}

	Let $N \geq 3$, $\ort \subset \bhv$ be a stratum with positive codimension
	$l \geq 1$ and $T \in \ort$.  Then we have the directions at $T$\emph{perpendicular} and \emph{parallel} to $\ort$, respectively, given by
	\begin{align*}
		\perpdir           & := \{ \dir_T(T^\prime) \vert T^\prime \neq T, \ T^\prime \in
		\prj_{\overline{\ort}}^{-1}(\{T\}) \},                                           \\
		\Sigma_T^\parallel & := \{ \dir_T(T^\prime) \vert T^\prime \neq T, \
		T^\prime \in \ort \}.
	\end{align*}
\end{definition}

\subsection{The Fr\'echet Mean in BHV Spaces}

    Recall the definition of the Fr\'echet $F_\prb$ function from (\ref{eq:Frechetfcn}) which is well defined for $\prb \in  \wst[2]{\bhv}$, and the definition of the Fr\'echet mean set (\ref{eq:Frechetmean}), which is well defined for  $\prb \in  \wst[1]{\bhv}$ since $\bhv$ is complete and, as remarked in the introduction,  it is a unique point 
    $$ \mu = \argmin_{T \in
			\bhv} F_\prb(T)\,,$$
    because $\bhv$ is a Hadamard space. 
    
    Similarly, for a finite set $\mathcal{T}\subset \bhv$ of trees, we have the \emph{sample Fr\'echet function}
\begin{align*}
	F_{\mathcal{T}}(T) := \frac{1}{ 2 \vert\cT\vert}
	\sum_{T' \in \cT}^n d^2(T, T')\,, \quad T \in \bhv\,,
\end{align*}
    where  $\vert\cT\vert< \infty$ is the cardinality of $\cT$. Also, we have a unique sample mean
    $$\hat \mu = \argmin_{T \in
			\bhv} F_\cT(T)\,.$$

For $\prb \in \wst[2]{M}$, $T \in \bhv$ and a direction $\sigma \in \Sigma_T$, the \emph{directional derivative} of $F_\prb$ in direction $\sigma$ is given by
\begin{align*}
	\nabla_\sigma F_\prb(T) = \lim_{\lambda \searrow 0} \frac{ F_\prb(\gamma_T^\sigma(\lambda)) - F_\prb(T) }{\lambda}\,,
\end{align*}
where $\gamma_T^\sigma:[0,a] \to \bhv$ is the unit speed geodesic with $\gamma_T^\sigma(0) = T$ and direction $\sigma$ at $T$. There is an explicit representation of this derivative and a characterization of the Fr\'echet mean detailed below.


\begin{theorem}[{\cite[Theorems 4.4 and 4.7]{stickyflavs_arxiv}}]
\label{thm:dir-dertivative}

	\label{thm:optimality}
	Let $N\geq 3$, $T \in \bhv$ and $\prb \in \wst[2]{\bhv}$. Then, 
 
   \begin{itemize}
        \item[(i)] 
for every
	$\sigma \in \Sigma_T$, we have
	\begin{align*}
		\nabla_\sigma F_\prb(T) = - \int_{\bhv} d(T, T^\prime) \cos(\angle_T(\sigma, \dir_T(T^\prime))) \ \diff \prb(T^\prime)\,,
	\end{align*}
        \item[(ii)] the Fr\'echet
	mean of $\prb$ is given by a point $\mu \in \bhv$ if and only if
	\begin{align*}
		\nabla_\sigma F_\prb(\mean) \geq 0 \quad \forall \sigma \in \Sigma_\mean\,,
	\end{align*}
   \end{itemize}
   \end{theorem}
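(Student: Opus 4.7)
The plan is to derive~(i) from a pointwise first-variation formula combined with a dominated-convergence argument that lets one interchange the derivative and the integral, and then to obtain~(ii) from convexity of $F_\prb$ along geodesics, which is automatic in the Hadamard space $\bhv$.

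For part~(i), I would fix $T \in \bhv$ and $\sigma \in \Sigma_T$ and study the map $\phi(\lambda, T^\prime) := d^2(\gamma_T^\sigma(\lambda), T^\prime)$. For each $T^\prime \neq T$ the first-variation formula for distance in a CAT(0) space, obtained by passing to a Euclidean comparison triangle and using the defining limit of the Alexandrov angle, gives
\begin{align*}
    \lim_{\lambda \searrow 0} \frac{\phi(\lambda, T^\prime) - d^2(T, T^\prime)}{2\lambda} = - d(T, T^\prime) \cos\angle_T(\sigma, \dir_T(T^\prime))\,,
\end{align*}
while for $T^\prime = T$ the map reduces to $\lambda \mapsto \lambda^2$ by unit speed and contributes $0$. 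To swap this pointwise limit with integration against $\prb$, the triangle inequality yields $|d(\gamma_T^\sigma(\lambda), T^\prime) - d(T, T^\prime)| \leq \lambda$, hence
\begin{align*}
    \left| \frac{\phi(\lambda, T^\prime) - d^2(T, T^\prime)}{2\lambda} \right| \leq d(T, T^\prime) + \frac{\lambda}{2}\,,
\end{align*}
which is a $\prb$-integrable dominating function for bounded $\lambda$, since $\prb \in \wst[2]{\bhv} \subset \wst[1]{\bhv}$ by Cauchy--Schwarz. Dominated convergence then delivers the stated formula for $\nabla_\sigma F_\prb(T)$.

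For part~(ii), I would use that in a Hadamard space the squared distance to a fixed point is convex along geodesics, so $F_\prb$ is itself convex and $\nabla_\sigma F_\prb(T)$ coincides with the right slope at $\lambda = 0$ of the convex function $\lambda \mapsto F_\prb(\gamma_T^\sigma(\lambda))$. If $\mu$ minimizes $F_\prb$, that right slope is non-negative for every $\sigma$. Conversely, assume $\nabla_\sigma F_\prb(\mu) \geq 0$ for all $\sigma \in \Sigma_\mu$, pick any $T^\prime \neq \mu$, and let $\gamma$ be the unit-speed geodesic from $\mu$ to $T^\prime$; convexity of $\lambda \mapsto F_\prb(\gamma(\lambda))$ on $[0, d(\mu, T^\prime)]$ gives
\begin{align*}
    F_\prb(T^\prime) - F_\prb(\mu) \geq d(\mu, T^\prime) \cdot \nabla_{\dir_\mu(T^\prime)} F_\prb(\mu) \geq 0\,.
\end{align*}
Hence $\mu$ minimizes $F_\prb$ and, by uniqueness in the Hadamard setting, is the Fr\'echet mean.

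The main obstacle I anticipate is the first-variation step itself: the CAT(0) inequality only supplies an upper bound on $\phi(\lambda, T^\prime)$ via the Euclidean comparison triangle, so establishing the matching lower bound that pins down the first-order coefficient requires carefully unpacking the defining limit of the Alexandrov angle and confirming the $o(\lambda)$ remainder is controlled uniformly enough in $T^\prime$ for the dominated-convergence step to survive. Once the pointwise derivative formula is in hand, the interchange of limit and integral and the convexity argument for~(ii) are standard consequences of the Hadamard structure of $\bhv$ and the moment condition $\prb \in \wst[2]{\bhv}$.
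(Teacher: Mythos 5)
The paper does not contain its own proof of this theorem: it is cited directly from \cite{stickyflavs_arxiv} as Theorems~4.4 and~4.7 there, and the appendix of the present paper gives no argument for it. So there is no in-paper proof to compare against. Your sketch is nonetheless a correct and standard way to establish the result in a Hadamard space: the first-variation formula for $\lambda \mapsto d^2(\gamma_T^\sigma(\lambda), T^\prime)$ supplies the pointwise derivative $-d(T,T^\prime)\cos\angle_T(\sigma,\dir_T(T^\prime))$, the unit-speed triangle-inequality bound $\lvert d(\gamma_T^\sigma(\lambda),T^\prime)-d(T,T^\prime)\rvert\leq\lambda$ gives a difference quotient dominated by $d(T,T^\prime)+\lambda/2$ which is $\prb$-integrable for $\prb\in\wst[2]{\bhv}$, and dominated convergence yields~(i); convexity of $F_\prb$ along geodesics then gives~(ii) by the usual right-derivative/secant-slope argument.

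One point worth making explicit, which you flag but downplay: the CAT(0) comparison inequality alone only bounds $d(\gamma(\lambda),T^\prime)$ from above, so the two-sided first-variation identity needs the full statement (e.g.\ \cite[Corollary II.3.6]{bridson}), which uses monotonicity of comparison angles and the defining limit of the Alexandrov angle. Once you invoke that result rather than re-deriving it, the concern you raise about matching lower bounds and uniformity in $T^\prime$ disappears — the dominated-convergence step only needs the pointwise limit together with the uniform integrable envelope, not any uniformity of the $o(\lambda)$ remainder across $T^\prime$.
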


Unlike the Euclidean case, where equality holds in (ii)  of Theorem \ref{thm:optimality}, the inequality can be strict in some directions if $\mean$ does not have a fully resolved topology. Notably, it can be strict for all directions at the star tree, or orthogonal to a stratum. In these cases, the behavior of sample means may deviate from the Euclidean law of large numbers as detailed in Theorem \ref{thm:stickyorth}. Such a distribution is called \emph{sample sticky}, see \cite{bhvsticky, stickyflavs_arxiv} for this and other "flavors of stickiness". 

\begin{theorem}[{\cite[Theorem 3, Corollary 7]{barden18}}]
	\label{thm:stickyorth}

	Let $\ort \subset \bhv$ be a stratum of codimension $l \geq 1$ and $\prb
		\in \wst{\bhv}$ with $\mean = \mathfrak{b}(\prb) \in \ort$.
	For every $\sigma \in \Sigma^\perp_\mean$, it holds that $\nabla_\sigma F_\prb(\mean) \geq 0$.

	If further $\nabla_\sigma F_\prb(\mean) > 0$ for every
	$\sigma \in \Sigma_\mean^\perp$, then for every arbitrary sequence
	$X_1, X_2 \iid \prb$, there almost surely exists a random
	$N \in \NN$ such that
	\begin{align*}
		\smean_n \in \ort \quad \forall n \geq N\,.
	\end{align*}

\end{theorem}

 For arbitrary Hadamard spaces \cite{sturm} proposed an algorithm computing \emph{inductive means} that converge in probability to the Fr\'echet mean, where, under bounded support (e.g. for sample means), convergence is even a.s.: Starting with a first random tree $\hat \mu_0$, the  $j$-th \emph{inductive mean} $\hat \mu_j$, $j\in \NN$ is given traveling the geodesic, parametrized by the unit interval, from $\hat \mu_{j-1}$ to the $j$-th random tree $Y_j$ only until $\frac{1}{j+1}$. For our purposes here is its sample version.

\begin{algorithm}
    \caption{Sturm's algorithm.}\label{alg:sturm}
    \KwData{Trees $T_1, T_2, \ldots, T_n$
    }
    $j \gets 0$;\\
    Draw $Y_0\sim \frac{1}{n}\sum_{i = 1} \delta_{T_i}$ and set $\hat\mu_0  \gets Y_0$;\\
    \Repeat{\rm convergence}{
    $j \gets j + 1$;\\
      Draw $Y_{j} \sim \frac{1}{n}\sum_{i = 1} \delta_{T_i}$ independent of the $Y_0,\ldots,Y_{j-1}$;\\
      $\hat \mu_{j} \gets \overline{\gamma}_{\hat \mu_{j-1}}^{Y_{j}}\left(\frac{1}{j+1} \right)$;
       }
\end{algorithm}

If the Fr\'echet mean is located on a lower-dimensional stratum, with some trees featuring splits in higher-dimensional strata, compatible to some in the lower dimensional stratum, the output of Sturm's algorithm will, while metrically close, not necessarily have the correct unresolved topology. This behavior is explained in the following theorem. 

\begin{theorem}\label{thm:sturms-algo-not-singular}

	Let $N \geq 3$ and $\mathcal{T}  \subset \bhv$ be a finite set of trees,
	with its Fr\'echet mean $\mean \in \ort$ lying in a stratum $\ort$ of codimension $l \geq 1$.
	Let $\hat{\mu}_j$, $j \in \NN$, denote the output of Sturm's algorithm and suppose that 
    $$\{T \in \mathcal{T} : \exists s \in E(T)\mbox{ with } s\in   C(\mean) \setminus E(\mu) \} \neq \emptyset\,.$$    
    Then,	\begin{align*}
		\mathbb{P} \left( \hat{\mean}_j \notin \ort \text{ $\infty$-often} \right) = 1\,.
	\end{align*}
\end{theorem}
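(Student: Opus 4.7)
My plan is to fix, by assumption, a tree $T^\ast \in \mathcal{T}$ together with a split $s^\ast \in E(T^\ast)$ satisfying $s^\ast \in C(\mu)\setminus E(\mu)$, and to show that (i) whenever $\hat\mu_{j-1}\in\ort$ and the algorithm happens to draw $Y_j = T^\ast$, the update $\hat\mu_j$ necessarily contains $s^\ast$ with positive length and therefore leaves $\ort$; and (ii) this mechanism is triggered infinitely often almost surely by a conditional Borel--Cantelli argument.

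\paragraph{Key geometric step.}
The main concrete step is the following claim: if $\hat\mu_{j-1}\in\ort$ and $Y_j=T^\ast$, then $\hat\mu_j\notin\ort$. To prove it, I write $T_1:=\hat\mu_{j-1}$ and $T_2:=T^\ast$. Since $\hat\mu_{j-1}\in\ort$, we have $E(T_1)=E(\mu)$, so $C(T_1)=C(\mu)$. By assumption $s^\ast\in E(T_2)\cap C(T_1)$, which places $s^\ast$ in the common part $A_0=B_0$ of every support pair for $(T_1,T_2)$. The explicit formula \eqref{eq:bhv_geod_splits} then gives, for $\lambda=1/(j+1)\in(0,1)$,
\begin{align*}
\lvert s^\ast\rvert_{\hat\mu_j}
   \;=\; (1-\lambda)\,\lvert s^\ast\rvert_{T_1} + \lambda\,\lvert s^\ast\rvert_{T_2}
   \;=\; \lambda\,\lvert s^\ast\rvert_{T^\ast} \;>\; 0,
\end{align*}
because $s^\ast\notin E(\mu)=E(T_1)$ while $\lvert s^\ast\rvert_{T^\ast}>0$. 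Hence $s^\ast\in E(\hat\mu_j)$, so $E(\hat\mu_j)\neq E(\mu)$ and $\hat\mu_j\notin\ort$.

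\paragraph{Probabilistic step.}
Let $\mathcal{F}_j=\sigma(Y_0,\dots,Y_j)$, set $A_j:=\{\hat\mu_{j-1}\in\ort\}\cap\{Y_j=T^\ast\}$ for $j\geq 1$, and set $B_j:=\{\hat\mu_j\notin\ort\}$. By the geometric step, $A_j\subseteq B_j$. Since $\hat\mu_{j-1}$ is $\mathcal{F}_{j-1}$-measurable and $Y_j$ is independent of $\mathcal{F}_{j-1}$ with $\mathbb{P}(Y_j=T^\ast)=1/\vert\cT\vert>0$,
\begin{align*}
\mathbb{P}(A_j\mid\mathcal{F}_{j-1}) \;=\; \tfrac{1}{\vert\cT\vert}\,\mathbbm{1}_{\{\hat\mu_{j-1}\in\ort\}}.
\end{align*}
Now consider two cases. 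If $\sum_{j}\mathbbm{1}_{\{\hat\mu_{j-1}\in\ort\}}<\infty$, then $\hat\mu_{j-1}\in\ort$ happens only finitely often, and therefore $\hat\mu_j\notin\ort$ for all sufficiently large $j$, so $B_j$ i.o. If instead $\sum_{j}\mathbbm{1}_{\{\hat\mu_{j-1}\in\ort\}}=\infty$, then $\sum_j\mathbb{P}(A_j\mid\mathcal{F}_{j-1})=\infty$, and the conditional second Borel--Cantelli lemma (L\'evy's extension) yields $A_j$ i.o.\ almost surely on that event, whence $B_j$ i.o. In both cases $\mathbb{P}(B_j\text{ i.o.})=1$, which is the claim.

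\paragraph{Anticipated obstacle.}
The only delicate point is the geometric step, specifically the verification that $s^\ast$ actually sits in the $A_0$-component of the support pair when we start from an arbitrary $\hat\mu_{j-1}\in\ort$ rather than from $\mu$ itself. This reduces to the identity $E(\hat\mu_{j-1})=E(\mu)$ that defines the stratum, together with the $s^\ast\in C(\mu)$ hypothesis; once this is in hand, \eqref{eq:bhv_geod_splits} does the work and the rest of the proof is a clean conditional Borel--Cantelli argument.
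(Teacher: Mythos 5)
Your proof is correct and follows essentially the same strategy as the paper's: the identical geometric key step via equation~\eqref{eq:bhv_geod_splits} (the compatible split lands in $A_0=B_0$ and acquires positive length $\lambda\lvert s^\ast\rvert_{T^\ast}$), followed by a Borel--Cantelli argument. The one difference is in the probabilistic step: the paper invokes the unconditional second Borel--Cantelli lemma to get $\{Y_j\in\cT_s\}$ infinitely often and then concludes rather tersely, leaving implicit the pathwise contradiction argument needed to pass from ``$Y_j\in\cT_s$ i.o.\ and the inclusion \eqref{eq:sturms-algo-oscialltes-proof}'' to ``$\hat\mu_j\notin\ort$ i.o.''; you instead make this explicit by a case split on $\sum_j\mathbbm{1}_{\{\hat\mu_{j-1}\in\ort\}}$ and L\'evy's conditional Borel--Cantelli, which is slightly heavier machinery but arguably cleaner, since it spells out the co-occurrence that the paper's inequality $\mathbb P\{Y_j\in\cT_s\ \text{i.o.}\}\leq\mathbb P\{\hat\mu_j\notin\ort\ \text{i.o.}\}$ quietly presupposes.
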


The goal of the following section will be, first, finding criteria identifying the Fr\'echet mean's correct topology, and then, develop algorithms for verification in practice.

\section{Finding the Topology of the Fr\'echet Mean}
\label{sec:mean_top}

In this section we first give our main result which at once leads to an algorithm to determine splits present in the topology of a Fr\'echet sample mean. While the main result is generally applicable to arbitrary strata, for the actual computation, only the corresponding result for the star tree stratum is of concern, and this rewrites in a simple form. The reason, why it suffices to consider star tree strata only is given in the second part where, among others the tangent cone at a stratum is decomposed into a product of the tangent space along the stratum and a product of lower dimensional BHV spaces -- of which only their star tree strata are of concern. Essentially, all hinges on directional derivatives orthogonal to the original stratum and for these, the offset inside the stratum is irrelevant.

 
\subsection{A Sufficient Condition for a Split in the Fr\'echet Mean}

While directional derivatives at the Fr\'echet mean yield a sufficient and necessary condition for a point to be the Fr\'echet mean, surprsingly, information about its topology can also be obtained from certain directional derivatives at the star tree. 
Recall the notation $\sigma_s = \dir_T(T + 1 \cdot s) \in \Sigma_T$
for $T \in \bhv$ and $s \in C(T)$.

\begin{theorem}
	\label{thm:main-strata}
	
	Let $N \geq 3$, $\ort\subset \bhv$ an orthant of codimension $0\leq l < N-2 $, $\prb \in \wst[2]{\bhv}$ with Fr\'echet mean 
    $\mean \not \in \ort$ and $T\in \ort$ so that $E(T)\subset E(\mean)$. Then the following hold:
    \begin{itemize}
        \item[(i)] 
if $s\in C(T)\setminus E(T)$ with $\angle_T(\mean, \sigma_s)) \geq \pi /2$, 
    then $\nabla_{\sigma_s} F_\prb(T) \geq 0$,
    \item[(ii)] if $\nabla_{\sigma_s} F_\prb(T) < 0$, then $s \in E(\mean)$.
    \end{itemize}
    \end{theorem}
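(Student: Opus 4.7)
My plan is to prove (ii) directly; (i) then follows as its contrapositive using the following geometric equivalence, valid under the hypotheses $E(T)\subset E(\mean)$ and $s\in C(T)\setminus E(T)$: the condition $\angle_T(\mean,\sigma_s)\geq \pi/2$ is equivalent to $s\notin E(\mean)$. To see this I would distinguish three cases based on the orthant structure at $T$. If $s\in E(\mean)\setminus E(T)$, then $\sigma_s$ and $\dir_T(\mean)$ both lie in the closure of the orthant containing $\mean$, and the latter has strictly positive $s$-coordinate, so the angle is $<\pi/2$. If $s\in C(\mean)\setminus E(\mean)$, both directions lie in the closed orthant with splits $E(\mean)\cup\{s\}$, but now $\dir_T(\mean)$ has zero $s$-coordinate, so the angle equals exactly $\pi/2$. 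Finally, if $s\notin C(\mean)$, the two directions lie in distinct factors of the tangent cone product decomposition of $\mathfrak{T}_T$ (Theorem~\ref{thm:tancone}), yielding an angle $>\pi/2$.

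The analytic content is the implication: if $s\notin E(\mean)$, then $\nabla_{\sigma_s}F_\prb(T)\geq 0$. I would first reduce this to the star-tree case. By Theorem~\ref{thm:tancone}, the tangent cone $\mathfrak{T}_T$ decomposes as a Euclidean factor along $\ort$ together with cones at star trees of lower-dimensional BHV spaces; since $\sigma_s$ is perpendicular to $\ort$, it sits entirely in one such factor. Using the formula from Theorem~\ref{thm:dir-dertivative}(i) and the fact that the offset in $\ort$ does not affect the directional derivative, the problem reduces to proving the corresponding statement for a lower-dimensional BHV space $\bhv'$ at its star tree $\stree'$: if $\mean'$ is the Fr\'echet mean of the projected distribution $\prb'$ and $s\notin E(\mean')$, then $\nabla_{\sigma_s}F_{\prb'}(\stree')\geq 0$.

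At the star tree the log map is an isometry (Proposition~\ref{prop:bhv_cone}), so Theorem~\ref{thm:dir-dertivative}(i) together with the explicit distance formulas of Remark~\ref{rem:dist_single_split} yields the exact expression
\[
\nabla_{\sigma_s}F_{\prb'}(\stree')=\int_{s\notin C(T')}\sqrt{\sum_{x\in E(T')\setminus C(s)}\lvert x\rvert^2_{T'}}\,d\prb'(T')-\int_{s\in E(T')}\lvert s\rvert_{T'}\,d\prb'(T').
\]
The goal is to show this is non-negative. My plan is to apply Fr\'echet mean optimality at $\mean'$ (Theorem~\ref{thm:dir-dertivative}(ii)) in the direction from $\mean'$ towards $\stree'+s$, combined with the CAT(0) law of cosines, to establish $\int \lVert T'\rVert\cos(\angle_{\stree'}(\sigma_s,\dir_{\stree'}(T')))\,d\prb'(T')\leq \lVert\mean'\rVert\cos(\angle_{\stree'}(\sigma_s,\dir_{\stree'}(\mean')))$, whose right-hand side is non-positive exactly when $s\notin E(\mean')$. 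I expect the principal obstacle to be this final inequality: the variance inequality $F_{\prb'}(x)\geq F_{\prb'}(\mean')+\tfrac12 d^2(x,\mean')$ by itself is too weak, and the argument must carefully exploit that the CAT(0) comparison is an \emph{equality} at the star tree (since $\log_{\stree'}$ is an isometry), in order to transfer the first-order condition at $\mean'$ back to a statement at $\stree'$.
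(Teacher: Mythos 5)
Your proposal takes a genuinely different route from the paper's, and it contains a real gap that I do not see how to close cheaply. The paper proves (i) directly. It first uses Lemma~\ref{lemma:const_deriv} to move the reference point to $T=\prj_{\overline\ort}(\mean)$, then splits into the three cases $\angle_T(\mean,\sigma_s)=\pi/2$, $=\pi$, or strictly in between. In each case the only tools are convexity of $F_\prb$ along geodesics and the first-order optimality condition from Theorem~\ref{thm:optimality}(ii), applied at $\mean$, or at $T$ itself viewed as an interior point of the geodesic from $\mean$ to $T+1\cdot s$, or at an auxiliary tree $\tilde\mean$ obtained from $\mean$ by deleting the splits incompatible with $s$. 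Statement (ii) then falls out as a short contrapositive using essentially the geometric fact you state, namely that under $E(T)\subset E(\mean)$ and $s\in C(T)\setminus E(T)$ one has $\angle_T(\mean,\sigma_s)<\pi/2$ precisely when $s\in E(\mean)$.

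The unaddressed gap in your argument is the passage from $\mean$ to $\mean'$. After the projection to $\TT_{k_j}$, your target is $\nabla_{\sigma_s'}F_{\prb'}(\stree')\geq 0$ with $\prb'=(\varpi_j\circ\log_T)_*\prb$. The reduction formula $\nabla_{\sigma_s}F_\prb(T)=\nabla_{\sigma_s'}F_{\prb'}(\stree')$ is indeed correct via Theorem~\ref{thm:tancone}(ii). But your star-tree argument is conditioned on $s\notin E(\mean')$ where $\mean'=\mathfrak{b}(\prb')$, while your actual hypothesis is $s\notin E(\mean)$ with $\mean=\mathfrak{b}(\prb)$. You never show $s\notin E(\mean)\Rightarrow s\notin E(\mean')$, and this is not obvious: $\log_T$ is $1$-Lipschitz but fails to be an isometry for $T\neq\stree$, so the Fr\'echet mean of the pushforward $\prb'$ need not be $\varpi_j(\log_T(\mean))$. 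The paper's proof sidesteps this entirely because it carries the optimality information from $\mean$ to $T$ along a geodesic in the original space, never having to compare Fr\'echet means of two different distributions.

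On your final inequality $\int\phi\,d\prb'\leq\phi(\mean')$: your instinct is actually correct, and it is worth recording the clean form. At the star tree $\phi(T'')=\lVert T''\rVert\cos(\angle_{\stree'}(\sigma_s',T''))$ equals minus the Busemann function of the ray $t\mapsto\stree'+ts$ (check this against Remark~\ref{rem:dist_single_split}); Busemann functions on Hadamard spaces are convex, so $\phi$ is concave, and your inequality is Jensen's inequality for barycenters in Hadamard spaces. The delicate point, which you sense but do not pin down, is that the identity $\phi=-(\text{Busemann})$ relies on the comparison angle at $\stree'$ coinciding with the Alexandrov angle, i.e.\ on $\log_{\stree'}$ being an isometry. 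At a general $T\in\ort$ one only has $d(T,T')\cos\angle_T(\sigma_s,T')\geq -b_T(T')$ by monotonicity of comparison angles, so the analogous function at $T$ need not be concave and you cannot run Jensen directly; the reduction to the star tree is therefore essential to your strategy. That is precisely why the $\mean$-versus-$\mean'$ gap in the reduction is the critical missing piece.
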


%
%

\begin{remark}
    \label{rem:const_dir}

    By Lemma \ref{lemma:const_deriv} in the following section, the directional derivatives do not depend on the 
    particular choice of the point $T \in \ort$, only on its topology.
\end{remark}

Theorem \ref{thm:main-strata} entails the following algorithm for finding splits present in the Fr\'echet mean's topology. 

\begin{algorithm}[!ht]
    \caption{An algorithm for finding splits in the Fr\'echet mean.}
	\KwData{trees $\mathcal{T}= \{T_1, \ldots, T_n\}$}
    $\tilde{\mean} \gets \stree$;\\
    $\cE \gets \bigcup_{T \in \cT} E(T) $;\\
    \Repeat{$E(\tilde\mean_\textrm{old}) = E(\tilde\mean)$}{
        $\tilde{\mean}_\mathrm{old} \gets \tilde{\mean}$;\\
    	\For{$s \in \cE$}{
            \If{$\nabla_{\sigma_s} F_\cT(\tilde\mean_\mathrm{old})< 0$}{$\tilde{\mean} \gets \tilde{\mean} + 1\cdot s$;}
    	}
        $\cE \gets \cE \cap \left(C(\tilde{\mean}) \setminus E(\tilde{\mean})\right)$;\\
    }
    \label{alg:find-splits}
\end{algorithm}

Applying Theorem \ref{thm:main-strata} to $\ort = \{\stree\}$ and sample means yields the following. 


\begin{corollary}
	\label{cor:sample}

	Let $\mathcal{T} 
    \subset \bhv$ be a finite collection of trees
	with sample Fr\'echet mean $\smean$.
	Then, $s \in E(\smean)$ if
	\begin{align}
		\label{eq:new_cond}
		\sum_{T \in \mathcal{T}}
        \sqrt{\sum_{C(s)\not \ni x \in E(T)} \lvert x \rvert_T^2}
		< \sum_{T \in \mathcal{T}} \lvert s \rvert_T \;.
	\end{align}
\end{corollary}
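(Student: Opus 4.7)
The plan is to apply Theorem~\ref{thm:main-strata}(ii) with $\ort = \{\stree\}$ (so $T = \stree$, and trivially $E(T) = \emptyset \subset E(\smean)$) and with $\prb$ taken to be the empirical distribution $\frac{1}{|\cT|}\sum_{T' \in \cT}\delta_{T'}$, for which $F_\prb = F_\cT$. It then suffices to show that the hypothesis~\eqref{eq:new_cond} is equivalent to $\nabla_{\sigma_s}F_\cT(\stree) < 0$.

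First I would expand the directional derivative using Theorem~\ref{thm:dir-dertivative}(i):
\begin{align*}
\nabla_{\sigma_s}F_\cT(\stree) = -\frac{1}{|\cT|}\sum_{T' \in \cT} d(\stree,T')\cos\bigl(\angle_\stree(\sigma_s,\dir_\stree(T'))\bigr).
\end{align*}
Since $d(\stree,T') = \|T'\|$ and $\|\stree + 1\cdot s\| = 1$, Proposition~\ref{prop:bhv_cone} gives
\begin{align*}
d(\stree,T')\cos\bigl(\angle_\stree(\sigma_s,\dir_\stree(T'))\bigr) = \tfrac{1}{2}\bigl(1 + \|T'\|^2 - d^2(\stree + 1\cdot s, T')\bigr),
\end{align*}
so the problem reduces to evaluating $1 + \|T'\|^2 - d^2(\stree + s, T')$ for each $T' \in \cT$.

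Next I would substitute the three cases from Remark~\ref{rem:dist_single_split}. A direct computation (expanding the squared binomial and telescoping against $\|T'\|^2$) shows that in all three cases the quantity reduces to the unified expression
\begin{align*}
1 + \|T'\|^2 - d^2(\stree + s, T') = 2\,|s|_{T'} - 2\sqrt{\sum_{C(s)\not\ni x \in E(T')} |x|_{T'}^2}\,.
\end{align*}
Indeed, when $s \in E(T')$ the square-root term vanishes (every split of $T'$ is then compatible with $s$, including $s$ itself) and one obtains $2|s|_{T'}$; when $s \notin C(T')$ we have $|s|_{T'} = 0$ and expanding the binomial $(1+a)^2$ with $a$ the square root leaves precisely $-2a$; the third ``else" case yields $0$ and both terms above are $0$.

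Plugging this back in yields
\begin{align*}
\nabla_{\sigma_s}F_\cT(\stree) = \frac{1}{|\cT|}\left(\sum_{T' \in \cT}\sqrt{\sum_{C(s)\not\ni x \in E(T')}|x|_{T'}^2} \;-\; \sum_{T' \in \cT} |s|_{T'}\right),
\end{align*}
and the hypothesis~\eqref{eq:new_cond} is exactly the statement that the right-hand side is negative. Applying Theorem~\ref{thm:main-strata}(ii) then gives $s \in E(\smean)$. The only real work is the bookkeeping in step two, unifying the three cases of Remark~\ref{rem:dist_single_split} into a single clean formula; once that algebraic identity is established, the rest is automatic.
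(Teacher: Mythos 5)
Your proof is correct, and the conclusion matches the paper's. Both proofs compute $\nabla_{\sigma_s}F_\cT(\stree)$ and invoke Theorem~\ref{thm:main-strata}(ii) with $\ort = \{\stree\}$, and both arrive at the identical closed form
\begin{align*}
\nabla_{\sigma_s}F_\cT(\stree) = \frac{1}{|\cT|}\left(\sum_{T\in\cT}\sqrt{\sum_{C(s)\not\ni x\in E(T)}|x|_T^2} - \sum_{T\in\cT}|s|_T\right).
\end{align*}
However, the route through the computation differs. The paper differentiates the Fr\'echet function directly: it expands $F_\cT(\stree + \lambda s)$ for general $\lambda > 0$ via Remark~\ref{rem:dist_single_split}, subtracts $F_\cT(\stree)$, and takes the limit $\lambda \searrow 0$ of the difference quotient. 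You instead start from the first-variation formula in Theorem~\ref{thm:dir-dertivative}(i), convert $d(\stree,T')\cos\angle_\stree(\sigma_s, T')$ via the law-of-cosines expression in Proposition~\ref{prop:bhv_cone} (exploiting that $\|\stree + 1\cdot s\| = 1$), and then apply Remark~\ref{rem:dist_single_split} at the single value $\lambda = 1$. Your unified identity $1 + \|T'\|^2 - d^2(\stree + s, T') = 2|s|_{T'} - 2\sqrt{\sum_{C(s)\not\ni x\in E(T')}|x|_{T'}^2}$ is verified correctly in all three cases (note that when $s\in E(T')$ or $s\in C(T')$, every $x\in E(T')$ lies in $C(s)$, so the square-root term vanishes; and when $s\notin C(T')$, $|s|_{T'}=0$). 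Both approaches have the same essential ingredient, Remark~\ref{rem:dist_single_split}; yours avoids the $\lambda$-limit at the cost of introducing the cosine identity, which is arguably a cleaner template if one wanted to evaluate directional derivatives at points other than $\stree$.
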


This result improves a weaker condition from \cite[Theorem 5]{bhvmean} stating
that a split
$s \in \cup_{T\in \mathcal{T}}E(T)$ is contained in the sample Fr\'echet mean if
\begin{align}
	\label{eq:old_cond}
	\sum_{T \in \mathcal{T}} \sum_{C(s)\not \ni x \in E(T)}
	\lvert x \rvert_T < \sum_{T \in \mathcal{T}} \lvert s \rvert_T\,.
\end{align}
We illustrate the two conditions (\ref{eq:new_cond}) and (\ref{eq:old_cond}) by example from \cite[Example 1 and 2]{bhvmean}. While our condition is stronger, it is still not necessary. 

\begin{example}
	\label{ex:improv}

    \cite[Example 1]{bhvmean} consider the sample mean $\mu$ of four trees $\mathcal{T} = \{T_1, T_2, T_3, T_4\} \subset \mathbb{T}_4$ as shown in Figure \ref{fig:example} where for $T_1$ the split $s_1$ has length $w>0$. As there is only one  tree, namely $T_4\in \mathcal{T}$ having splits  not in $C(s_1)$,  (\ref{eq:old_cond}) is equivalent with
    $$ 10 + 10 = \sum_{x\in E(T_4)} \lvert x\rvert_{T_4} < \sum_{T\in \mathcal{T}} \lvert s_1\rvert_{T} = w + 3 + 1\,,$$
    i.e. $s_1\in \mu$ if $w > 16$.

    In contrast, (\ref{eq:new_cond}) rewrites to 
    $$  \sqrt{10^2 + 10^2} = \sqrt{\sum_{x\in E(T_4)} \lvert x\rvert^2_{T_4}} < \sum_{T\in \mathcal{T}} \lvert s_1\rvert_{T} = w + 3 + 1\,,$$
    and we obtain the stronger result that $s_1\in \mu$ if $w > \sqrt{2} \cdot 10 - 4 \approx 10.142$.

    This is, however, not sharp, since  direct computation by  \cite[Example 1]{bhvmean} showed that $s_1\in \mu$ if $w > 9.82$ ( below that and until $9.51$, $\mu = \star$).


	\begin{figure}[!h]
		\centering
		\subfloat[\it Level curves of the Fr\'echet function and part of the link (red, for better visibility enlarged by a factor of 5).]{\includegraphics[width=7cm]{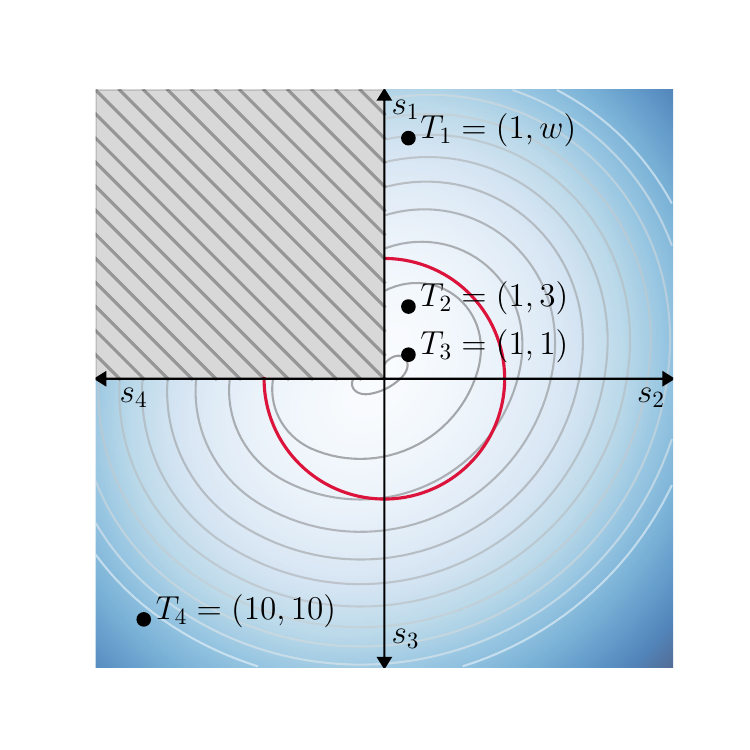}}\qquad
		\subfloat[\it Values of the directional derivatives of the Fr\'echet function along directions to the part of the link (red) 
    displayed on the left hand panel.]{\includegraphics[width=7cm]{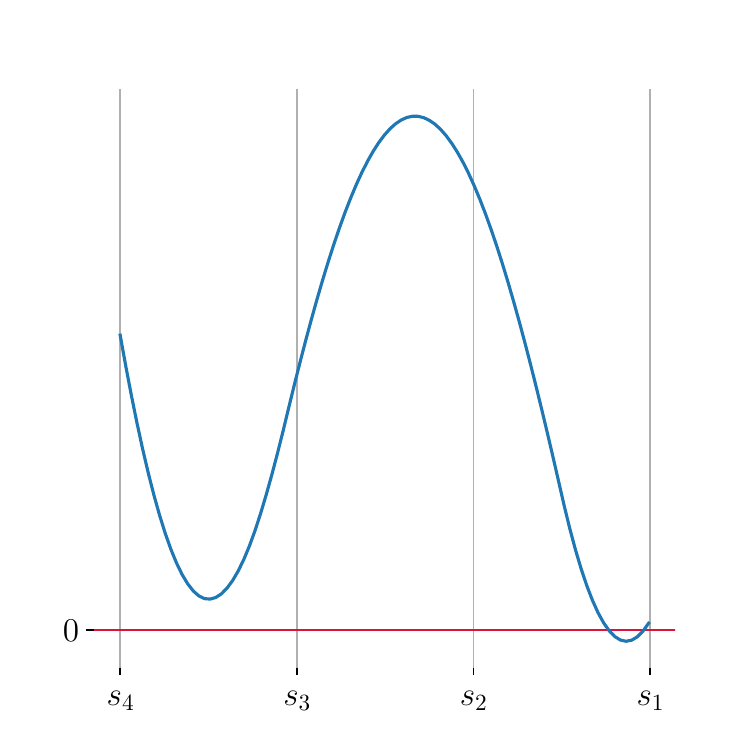}}\qquad

		\caption{\it Illustrating the Fr\'echet function and its directional derivatives of Example \ref{ex:improv} in the subspace 			inhabited by the trees $T_1,T_2,T_3,T_4 \in \mathbb{T}_4$ for the choice $w=10$.
  }
		\label{fig:example}
	\end{figure}

\end{example}

\subsection{Tangent Cone and Space of Directions at Lower Dimensional Strata}

Let $\ort \subset \bhv$ be a stratum with positive codimension $l \geq 1$. As was mentioned in \cite{barden18}, the spaces of directions of points in $\ort$ inherits the nature of a stratified space from $\bhv$. Theorem \ref{thm:tancone}, a more explicit version of \cite[Lemma 2]{bhvsticky}, shows how both the spaces of directions and the tangent cone decompose. To this end, introduce the following notion.

\begin{definition}
    \label{def:joint}
    
    For two metric spaces $(M_1, d_1), (M_2, d_2)$, their \emph{spherical join} is
    given by
    \begin{align*}
    	M_1  \ast M_2 = \left[0,\frac{\pi}{2}\right]\times M_1\times M_2/\sim \ \cong
    	\left\{(\cos \eta\, p_1,\sin \eta\, p_2): 0\leq \eta \leq
    	\frac{\pi}{2}, p_i\in M_i,i=1,2\right\}\,,
    \end{align*}
    i.e. the equivalence class of $(\eta,p_1,p_2)$ contains the single point $(\eta,p_1,p_2)$ unless $\eta =0,\pi/2$. For $\eta =0$, the class contains all of $M_2$ in the last component, hence the class is uniquely determined by $p_1\in M_1$, and for $\eta = \pi/2$, it contains all of $M_1$ in the second component, hence the class is uniquely determined by $p_2\in M_2$.  The spherical join is 
    equipped with the metric
    \begin{align*}
    	d\big((\eta,p_1,p_2),(\eta^\prime,p_1^\prime,p_2^\prime)\big) =
    	\arccos\left(\cos\eta\cos\eta' \cos(d_1(p_1,p'_1)) + \sin\eta
    	\sin\eta' \cos(d_2(p_2,p'_2))\right)\,,
    \end{align*}
    see e.g. \cite[Definition 5.13]{bridson}.
    
    For metric space $(M_1, d_1), \ldots, (M_k, d_k)$, $k\geq 3$
    define by induction the \emph{nested} spherical join
    \[
    M_1 \ast \Bigg( M_2 \ast \Big(M_3 \ast \big(\cdots  \ast( M_{k-1} \ast M_k)\cdots \big)\Big) \Bigg) = \left[0,\frac{\pi}{2}\right]^{k-1}\times M_1\times \ldots \times M_k/\sim\,.
    \]
    Points, i.e. equivalence classes are determined by their coordinates $(\eta_1, \ldots, \eta_{k-1}, p_1, \ldots, p_k)$ with $\eta_j \in [0,\pi/2]$, $j=1, \ldots, k-1$, and $p_j \in M_j$, $j=1,\ldots, k$ and coordinates uniquely determine their points if 
    $\eta_j \in (0,\pi/2)$ for all $j=1, \ldots, k-1$. Moreover, 
    $M_1$ will be identified with all coordinates having $\eta_1 =0$, and for $i=2,\ldots,k$, $M_i$ will be identified with all coordinates having  $\eta_j =\pi/2$ for $1\leq j \leq i-1$ and $\eta_i =0$. 
\end{definition}

    \begin{remark}
    It is easy to see that the 
     distance between points with coordinates
    \[p = (\eta_1, \ldots, \eta_{k-1}, p_1, \ldots, p_k), \quad p^\prime = (\eta_1', \ldots, \eta_{k-1}', p_1', \ldots, p_k')\] is then given by
    \begin{align}\label{eq:join-distance} \nonumber
        d\left(p, p^\prime\right) &= \arccos\Bigg(\sum_{i=1}^{k-1}\ \cos(\eta_i)\cos(\eta_i')\left(\prod_{j=1}^{i-1} \sin(\eta_j)\sin(\eta_j)^\prime \right) d_i(p_i, p_i^\prime) \\
        & \ + \prod_{i=1}^{k-1} \sin(\eta_i)\sin(\eta_i)^\prime d_k(p_k, p_k^\prime)\Bigg)\,.
    \end{align}
    \end{remark}

\begin{theorem}[Decomposition theorem]
	\label{thm:tancone}

	Let $N \geq 3$, $\ort \subset \bhv$ be a stratum with positive codimension $l \geq 1$ and $T \in \ort$.  Then the following hold:
 \begin{itemize}
     \item[(i)] 
 there are $2\geq m\in \NN$ and BHV spaces $\TT_{k_j}$ of dimensions $1\leq k_j$, $j=1,\ldots,m$ with \mbox{$\sum_{j=1}^m (k_j - 2) = l$} such that the tangent cone and the space of directions at $T$ decompose as follows:
	\begin{eqnarray*}
		\mathfrak{T}_T &\cong& \mathbb{R}^{N-l-2} \times \mathbb{T}_{k_1} \times \cdots \times \mathbb{T}_{k_m}\\
  \Sigma_T &\cong& \Sigma_T^\parallel \ast
		\perpdir \quad\mbox{ with }\quad \Sigma_T^\parallel ~\cong~ S^{N-l-3} \quad \mbox{ and }
  \\		\perpdir &\cong& \mathbb{L}_{k_1} \ast \Bigg( \mathbb{L}_{k_2} \ast \Big(\mathbb{L}_{k_3} \ast \big(\cdots  \ast( \mathbb{L}_{k_{m-1}} \ast \mathbb{L}_{k_m})\cdots \big)\Big) \Bigg)
   \,,
	\end{eqnarray*}
	where the links are equipped with the Alexandrov angle as metric;
     \item[(ii)] 
     with the notation from (i), the canonical projections    
 $$\varpi_j: \mathfrak{T}_T \to \mathbb{T}_{k_j},\quad j=1,\ldots,m\,,$$
 constructed in the proof of Lemma \ref{lemma:shared_edges} in the appendix,
and $ \sigma = (\eta_1, \ldots, \eta_{m-1},\sigma_1, \ldots \sigma_m ) \in \perpdir$ we have for arbitrary $T'\in \bhv$
%
	that
	\begin{align*}
		d(T, T^\prime) \cdot \cos(\angle_T(\sigma, T^\prime)) = & \sum_{i=1}^{m-1} \left(\prod_{j=1}^{i-1} \sin(\eta_j)\right)\cos(\eta_i) \cdot d(\stree, \varpi_i(\log_T(T^\prime))) \cdot \cos(\angle_{\stree}(\sigma_i, \varpi_i(\log_T(T^\prime)))) \\
		                                                        & + \left(\prod_{j=1}^{m-1} \sin(\eta_j)\right) \cdot d(\stree, \varpi_m(\log_T(T^\prime))) \cdot \cos(\angle_{\stree}(\sigma_m, \varpi_m(\log_T(T^\prime)))) \,.
	\end{align*}

 \end{itemize}
 
 
\end{theorem}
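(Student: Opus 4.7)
\medskip

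\noindent\textbf{Proof plan.} The plan is to establish (i) by locally parameterising a neighbourhood of $T$ via length-perturbations of existing splits together with independent resolutions of each multifurcation node, and then to obtain (ii) by decomposing $\log_T(T^\prime)$ along the resulting product factors and invoking the explicit spherical-join metric formula. Since $T$ has codimension $l$, its multifurcation nodes $v_1,\ldots,v_m$ (of degree $\geq 4$) account for all of its unresolvedness, and a standard degree-sum argument on their associated local BHV factors yields the claimed relation $\sum_j(k_j-2) = l$. First I would verify that each split $s \in C(T)\setminus E(T)$ is localised at a unique $v_j$, in the sense that $s$ arises from a non-trivial bipartition of the subtrees incident to $v_j$ while inducing only a trivial bipartition at every other multifurcation, and that two new splits localised at \emph{different} $v_j$ are automatically compatible. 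Consequently, the collection of splits localised at $v_j$, together with the star, realises $\mathbb{T}_{k_j}$.

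\medskip

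Using this decoupling, I would construct a map from a neighbourhood of $T$ in $\bhv$ into $\mathbb{R}^{N-l-2}\times \mathbb{T}_{k_1}\times\cdots\times\mathbb{T}_{k_m}$ sending $T$ to the origin, in which the $\mathbb{R}^{N-l-2}$-coordinate records signed length-perturbations of the splits in $E(T)$ (parallel to $\ort$) and the $\mathbb{T}_{k_j}$-coordinate records the newly-introduced splits localised at $v_j$. Within each maximal orthant incident to $T$ the BHV metric is Euclidean and these coordinate groups are pairwise $\ell_2$-orthogonal, so the squared distance from $T$ equals $\|\varepsilon\|^2 + \sum_j \|t_j\|^2$; since short enough geodesics from $T$ stay within orthants incident to $T$, this promotes to an isometry of tangent cones. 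The decomposition of $\Sigma_T$ then follows from the general identity $\Sigma_x(X\times Y) \cong \Sigma_xX \ast \Sigma_xY$ applied iteratively, together with Proposition~\ref{prop:bhv_cone} to identify the space of directions at $\stree$ in $\mathbb{T}_{k_j}$ with $\mathbb{L}_{k_j}$ and the direction space of $\mathbb{R}^{N-l-2}$ at the origin with $S^{N-l-3}$.

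\medskip

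For the formula in (ii), use the identification from (i) to write $\log_T(T^\prime) = (v,t_1,\ldots,t_m)$, so that $\varpi_j(\log_T(T^\prime)) = t_j$ and $d(T,T^\prime)^2 = \|v\|^2 + \sum_j \|t_j\|^2$. Because $\sigma \in \perpdir$ has zero $\Sigma_T^\parallel$-component, only the perpendicular summands contribute, and the unit perpendicular direction of $\log_T(T^\prime)$ has join coordinates $\eta_i^\prime$ determined by the ratios of the magnitudes $\|t_i\|$ together with directions $\dir_{\stree}(t_i) \in \mathbb{L}_{k_i}$. Substituting these into the explicit iterated-join metric \eqref{eq:join-distance} applied to $\Sigma_T^\parallel \ast \perpdir$ produces a weighted sum with weights $\cos(\eta_i)\prod_{j<i}\sin(\eta_j)$ for $i<m$ and $\prod_{j<m}\sin(\eta_j)$ for $i=m$; the factors $\cos(\eta_i^\prime)\prod_{j<i}\sin(\eta_j^\prime)$ (respectively $\prod_{j<m}\sin(\eta_j^\prime)$), once rescaled by the perpendicular magnitude, collapse to $\|t_i\| = d(\stree, \varpi_i(\log_T(T^\prime)))$, and Proposition~\ref{prop:bhv_cone} identifies $\angle_{\stree}(\sigma_i, \dir_{\stree}(t_i)) = \angle_{\stree}(\sigma_i, \varpi_i(\log_T(T^\prime)))$, yielding the stated identity.

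\medskip

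The hardest step is the isometry of tangent cones: one must verify that the combinatorial decoupling above lifts consistently across \emph{all} maximal orthants incident to $T$, not merely one at a time, and respects the BHV gluing at stratum boundaries where some coordinates vanish. Once that isometry is in place, the decomposition of $\Sigma_T$ as a spherical join is immediate, and the formula in (ii) reduces to bookkeeping via Definition~\ref{def:joint} and Proposition~\ref{prop:bhv_cone}.
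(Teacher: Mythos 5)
Your plan rests on a combinatorial decomposition by multifurcation vertices of $T$, which is conceptually the same object as the paper's bisection of $T_1,T_2\in\prj_{\overline{\ort}}^{-1}(\{T\})$ at the shared splits $\cE=E(T)$ in Lemma~\ref{lemma:shared_edges}: cutting at every internal edge produces one star per internal node, the degree-$3$ nodes contribute trivial factors that are discarded, and the degree-$\geq 4$ nodes give exactly the spaces $\TT_{k_j}$. The dimension count $\sum_j(k_j-2)=l$ and the claim that new splits at distinct multifurcations are pairwise compatible are both correct and routine. So the two routes to the product $\mathbb R^{N-l-2}\times\TT_{k_1}\times\cdots\times\TT_{k_m}$ are equivalent as a combinatorial matter.

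The genuine gap, which you flag but do not close, is the metric assertion. You derive $d(T,T^\prime)^2=\lVert\varepsilon\rVert^2+\sum_j\lVert t_j\rVert^2$ and then say this ``promotes to an isometry of tangent cones'' because short geodesics from $T$ stay in incident orthants. But the tangent-cone metric and the join angle $\angle_T(\sigma_1,\sigma_2)$ are governed by $d(\gamma_1(\lambda),\gamma_2(\lambda^\prime))$ for \emph{two} points moving away from $T$ in possibly different orthants; distances \emph{to} $T$ together with orthogonality inside one maximal orthant do not determine this. The geodesic between $\gamma_1(\lambda)$ and $\gamma_2(\lambda)$ generally traverses several orthants, and one must show that its support pair is constant in $\lambda$ and that the distance scales linearly and factors through the product. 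This is exactly what the paper does: it applies Lemma~\ref{lemma:shared_edges} (built on Owen's bisection theorem) to the pair $T_1,T_2$, not just to $(T,T^\prime)$, to obtain $d^2(T_1,T_2)=\sum_j d^2(T_1^j,T_2^j)$, and then checks via \eqref{eq:bhv-geodesics1}, \eqref{eq:bhv-geodesics2}, \eqref{eq:pull-out-lambda} that the support pair is invariant under rescaling, yielding $d(\gamma_1(\lambda),\gamma_2(\lambda))=\lambda\,d(T_1,T_2)$ and hence a limit-free formula for $\cos\angle_T(\sigma_1,\sigma_2)$. It also needs Lemma~\ref{lemma:proj} to know that membership in $\prj_{\overline{\ort}}^{-1}(\{T\})$ forces equal lengths on $E(T)$, which is what makes the parallel component drop out. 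None of this is supplied by your argument; it is the content of the proof, not a verification left for later. Once that factorisation is available, your treatment of (ii) via the explicit nested-join metric \eqref{eq:join-distance} and Proposition~\ref{prop:bhv_cone} is essentially the same bookkeeping the paper performs with the $\eta_i^j$ coordinates.
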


Immediately, we obtain the following corollary.

\begin{corollary}
	\label{cor:perpderivs}

	Let $N\geq 3$, $\ort \subset \bhv$ a stratum of codimension $l \geq 1$,
	$T \in \ort$ and $\prb \in \wst[2]{\bhv}$.
	Suppose $\mathfrak{T}_T \cong \RR^{N-l-2} \times \mathbb{T}_{k_1} \times \ldots \times \mathbb{T}_{k_m}$.
	Then,
	\begin{align*}
		\nabla_\sigma F_\prb(T) \geq 0 \quad \forall \sigma \in \perpdir
	\end{align*}
	if and only if for every $i\in \{1, \ldots, m\}$
	\begin{align*}
		\nabla_\sigma F_\prb(T) \geq 0 \quad \forall \sigma \in \mathbb{L}_{k_i} \subset \perpdir \,.
	\end{align*}

\end{corollary}

Since the directions in $\perpdir$ correspond to the addition of splits that
are compatible with the topology of a stratum $\ort$, one can naturally 
identify $\Sigma_{T_1}$ with $\Sigma_{T_2}$ for $T_1, T_2 \in \ort$. 
As the following lemma teaches, the perpendicular directional derivates 
of the Fr\'echet function do not depend on the particular reference point
in the orthant.

\begin{lemma}[{\cite[Corollary 1]{bhvsticky}}]
    \label{lemma:const_deriv}

    Let $N \geq 3$ and $\ort$ be a stratum of codimension $l \geq 1$. Then, 
    for any $T_1, T_2 \in \ort$ and $\sigma \in \Sigma_{T_1} \cong \Sigma_{T_2}$ that
    \[
        \nabla_\sigma F_\prb(T_1) = \nabla_\sigma F_\prb(T_2)\,.
    \]
    
\end{lemma}

\section{Minimizing the Directional Derivative}
\label{sec:min_dir_deriv}

In view of  Theorem \ref{thm:dir-dertivative}, in order to see that a tree $\mean$ is a Fr\'echet mean of a sample $\cT$, we minimize $\nabla_\sigma F_{\cT}(\mu)$ over $\sigma \in \Sigma_\mu$ and show that this minimum is nonnegative. Finding the minimizer, however, requires solving an optimization problem on $\Sigma_\mu$. Due to decomposition in Theorem \ref{thm:tancone} it suffices to obtain minima on the star stratum only.
The space $\Sigma_\stree$ inherits the non-smooth nature of the tree space $\bhv$. As the number of orthants is given by $(2N -3)!!$, searching for maxima for the direction corresponding to fixed orthants quickly becomes infeasible for larger leaf set sizes.

Deriving a proximal splitting algorithm algorithm and a stochastic gradient algorithm to determining the minimum, instead of traversing all of these many orthants, is the subject of this section. 

For a sample $\cT = \{T_1, \ldots T_n\}, \cT\subset  \mathbb{T}_N$ due to Theorem \ref{thm:dir-dertivative}, we want to minimize 
$$f(\sigma) := \frac{n^2 \, \nabla_\sigma F_\cT(\stree) }{\sum_{j=1}^{N} d(\stree, T_j)}= \sum_{i=1}^n w_i f_{T_i}(\sigma),\quad \sigma \in \Sigma_\stree$$
where 
$$	f_T(\sigma) = -  \cos(\angle_\stree(\sigma, T)),\quad w_i = \frac{d(\stree, T_i)}{\frac{1}{n}\sum_{j=1}^{N} d(\stree, T_j)}, \quad i=1,\ldots,n\,. $$
For fixed $0 < \nu \leq 1/w_i$ for all $i=1,\ldots,n$ (the reason becomes clear in the proof of the Lemma \ref{lem:prox}below), define the \emph{proximal splitting operator} 
$$ \prox_T(\tau) := \argmin_{\sigma \in \Sigma_\stree} \left\{ f_T(\sigma) + \frac{1}{2 \nu}\angle_\stree(\tau, \sigma)^2 \right\}\,,\quad \tau \in \Sigma_\stree\,.$$

The following lemma teaches that $\prox_T(\tau)$ is unique and that its computation is rather simple. To this end, recall that $\beta_{\tau}^{{\dir_\stree(T)}}$ is the geodesic in $\Sigma_\stree$ mapping $0$ to $\tau$ and $1$ to $T$, while $\bar \gamma_{T}^{T'}$ is the geodesic in $\TT_N$ mapping $0$ to $T$ and $1$ to $T'$.

\begin{lemma}\label{lem:prox}
For $N \geq 4$, $\tau \in \Sigma_{\stree}$, $T\in  \TT \setminus \{\stree\}$ with $\angle_{\stree}(\dir_\stree(T), \tau) \in [0,\pi]$ we have
\begin{enumerate}
    \item[(i)] 
$\prox_T(\tau) = \left\{\bar \beta_{\tau}^{{\dir_\stree(T)}}(\lambda)\right\}$
where 
$$\lambda = \argmin_{0\leq \lambda' \leq  \angle_{\stree}(\dir_\stree(T), \tau)} \left(\bar \beta_{\tau}^{{\dir_\stree(T)}}(\lambda') + \frac{\lambda^2}{2\nu}\right)\,, $$
and, in particular
$\lambda =0$ in case of $\angle_{\stree}(\dir_\stree(T), \tau) \in\{0,\pi\}$, i.e. $\prox_T(\tau) = \left\{\tau \right\}$ then;
\item[(ii)] in case of $\angle_{\stree}(\dir_\stree(T), \tau) < \pi$,
$$\bar{\beta}_{\tau}^{\dir_\stree(T)}(\lambda) = \dir_\stree \left(\overline{\gamma}_{T_\tau}^{T/\|T\|}\left( \lambda^\prime \right) \right), \quad \lambda \in [0,1] $$
where $T_\tau \in  \TT_N$ with $\dir_\stree(T_\tau) = \tau$ and $\|T_\tau\| = 1$, and 
	\begin{align*}
  \lambda^\prime= \frac{\sin(\lambda \cdot \angle_\stree(\sigma, \tau))}{2 \cdot \sin(\angle_\stree(\sigma, \tau)) \cdot \sin\left( \frac{\pi + (1-\lambda) \cdot \angle_\stree(\sigma, \tau)}{2}\right)}\,, \quad \lambda \in [0,1].
	\end{align*}
\end{enumerate}
\end{lemma}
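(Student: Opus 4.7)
The plan is to reduce part (i) to a one-dimensional optimization along the Alexandrov geodesic in $\Sigma_\stree$ from $\tau$ to $\dir_\stree(T)$, and then to derive the explicit correspondence in part (ii) from the flat triangle that the $\bhv$-geodesic between two unit-norm trees spans with the cone apex $\stree$. Throughout I set $\alpha := \angle_\stree(\tau, \dir_\stree(T))$.

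For part (i), I would first note that $\Sigma_\stree$ is compact (it is the link $\mathbb{L}_N$) and the objective $g(\sigma) := -\cos(\angle_\stree(\sigma,\dir_\stree(T))) + \tfrac{1}{2\nu}\angle_\stree(\tau,\sigma)^2$ is continuous, so a minimizer exists. The geometric heart of the argument is a projection-to-the-geodesic step: when $\alpha < \pi$, Remark \ref{rm:Cat-1} supplies a unique geodesic $\beta_\tau^{\dir_\stree(T)}$, and for arbitrary $\sigma \in \Sigma_\stree$ I define $\sigma' := \beta_\tau^{\dir_\stree(T)}(\min(\angle_\stree(\tau,\sigma),\alpha))$. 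By construction $\angle_\stree(\tau,\sigma') \leq \angle_\stree(\tau,\sigma)$; combining the geodesic additivity $\angle_\stree(\sigma',\dir_\stree(T)) = \alpha - \angle_\stree(\tau,\sigma')$ (or $0$) with the CAT(1) triangle inequality $\angle_\stree(\sigma,\dir_\stree(T)) \geq \alpha - \angle_\stree(\tau,\sigma)$ then gives $\angle_\stree(\sigma',\dir_\stree(T)) \leq \angle_\stree(\sigma,\dir_\stree(T))$. Since $-\cos$ is monotonically increasing on $[0,\pi]$, both terms of $g$ satisfy $g(\sigma') \leq g(\sigma)$, so the minimum is attained on the geodesic. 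This reduces the problem to minimizing the one-dimensional function $h(\lambda') := -\cos(\alpha - \lambda') + (\lambda')^2/(2\nu)$ on $[0,\alpha]$, and the step-size bound $\nu \leq 1/w_i$ supplies the convexity needed to pin down a unique minimizer. The edge cases $\alpha \in \{0,\pi\}$ I would handle directly: for $\alpha = 0$ the derivative $\sin u + u/\nu$ of the objective in $u = \angle_\stree(\tau,\sigma)$ is nonnegative on $[0,\pi]$, forcing $\sigma = \tau$; for $\alpha = \pi$ the triangle inequality gives $\angle_\stree(\sigma,\dir_\stree(T)) \geq \pi - \angle_\stree(\tau,\sigma)$, and the bound on $\nu$ again drives the minimum to $\tau$.

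For part (ii), I would invoke Proposition \ref{prop:bhv_cone} to identify $\bhv$ with the metric cone over $\mathbb{L}_N$ with apex $\stree$, so that the triangle $\{\stree, T_\tau, T/\|T\|\}$ together with the $\bhv$-geodesic between the two unit-norm trees is isometric to a planar Euclidean triangle with sides $1,\, 1,\, 2\sin(\alpha/2)$ and apex angle $\alpha$ at $\stree$. Placing $\stree$ at the origin, $T_\tau$ at $(1,0)$ and $T/\|T\|$ at $(\cos\alpha,\sin\alpha)$ in this plane, the $\bhv$-geodesic becomes the straight segment $\bar{\gamma}_{T_\tau}^{T/\|T\|}(\lambda') = (1 - \lambda'(1-\cos\alpha),\ \lambda'\sin\alpha)$, whose direction at $\stree$ makes angle $\theta(\lambda') = \arctan\!\bigl(\lambda'\sin\alpha / (1 - \lambda'(1-\cos\alpha))\bigr)$ with $T_\tau$. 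Setting $\theta(\lambda') = \lambda\alpha$ (the Alexandrov distance of $\bar{\beta}_\tau^{\dir_\stree(T)}(\lambda)$ from $\tau$) and solving for $\lambda'$, then simplifying the resulting $\sin(\lambda\alpha) + \sin((1-\lambda)\alpha)$ in the denominator via $\sin A + \sin B = 2\sin\tfrac{A+B}{2}\cos\tfrac{A-B}{2}$, produces the asserted closed-form expression. I expect the main obstacle to lie in the projection step of part (i): one must carefully apply the CAT(1) triangle inequality in the stratified, non-smooth $\Sigma_\stree$ and secure uniqueness of the 1D minimizer under the stated bound on $\nu$; part (ii) is then a planar trigonometric computation with nothing conceptually hidden.
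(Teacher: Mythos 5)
Your approach matches the paper's in both parts. In part (i) you carry out the same reduction: the CAT(1) triangle inequality together with geodesic additivity along $\bar{\beta}_\tau^{\dir_\stree(T)}$ pushes any candidate onto the geodesic (the paper presents this as two cases, $\lambda > \alpha$ giving an immediate contradiction and $\lambda \leq \alpha$ giving the projection inequality, starting from a presumed minimizer rather than your explicit projection map, but the estimates are the same), and the $\nu \leq 1$ bound handles the $\alpha=\pi$ edge case exactly as in the paper, via positivity of the second derivative $-\cos\lambda + 1/\nu$. Part (ii) is the same flat-triangle comparison at the cone apex $\stree$; you parametrize by $\arctan$ whereas the paper applies the law of sines, but both yield the same reparametrization.

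One point is under-argued. Your projection step shows the infimum is \emph{attained} on the geodesic, and strict convexity of $h(\lambda') = -\cos(\alpha-\lambda') + (\lambda')^2/(2\nu)$ gives a unique minimizing $\lambda$ there, but this alone does not rule out a second minimizer off the geodesic, which is what $\prox_T(\tau)$ being a singleton requires. To close this, note that any $\sigma$ achieving the minimum forces equality in both inequalities of your projection step, so $\angle_\stree(\tau,\sigma) + \angle_\stree(\sigma,\dir_\stree(T)) = \alpha$ with $\angle_\stree(\tau,\sigma)$ equal to the $\lambda$ of $\bar{\beta}_\tau^{\dir_\stree(T)}(\lambda)$; by uniqueness of geodesics of length $< \pi$ in the CAT(1) space $\Sigma_\stree$ (Remark \ref{rm:Cat-1}), $\sigma$ must coincide with $\bar{\beta}_\tau^{\dir_\stree(T)}(\lambda)$. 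This is precisely how the paper concludes uniqueness, and it should be made explicit in your sketch. As a side note, the expression you derive, $\lambda' = \sin(\lambda\alpha)/\bigl(\sin(\lambda\alpha)+\sin((1-\lambda)\alpha)\bigr)$, is correct (it returns $\lambda'=1$ at $\lambda=1$), and the sum-to-product identity turns the denominator into $2\sin(\alpha/2)\cos\bigl((1-2\lambda)\alpha/2\bigr)$; this agrees with the law-of-sines step in the paper's proof but not with the closed form printed in the lemma statement, which appears to contain a typographical error.
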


We denote by by $p_T(\tau) \in \prox_T(\tau)$ the unique element for $T\in \TT_N, \tau \in \Sigma_\stree$.
The functions $f_{T_i}: \{\tau \in \Sigma_\stree: \angle_{\stree}(\tau,\sigma) < \pi/4\} \to \RR$ are convex as the negative cosine is thus on $[0,pi/2]$,  hence we can apply \cite[Theorem 4]{prox_split_cat} yielding the following assertion for iterates of the corresponding proximal operator on CAT(1) spaces. 

\begin{theorem} 
    \label{thm:lausterluke}
    
    For $N \geq 4$, let $\{T_1, \ldots, T_n \}\subset \bhv$ and assume there is a direction $\sigma \in \Sigma_\stree$
    such that $\dir_\stree(T_i) \in B_{\pi/4}(\sigma):= \{\tau \in \Sigma_\stree: \angle_\stree(\tau,\sigma) < \pi/4\}$ for all $i = 1,\ldots,n$, that 
    \begin{align*}
        p: &B_{\pi/4}(\sigma) \to B_{\pi/4}(\sigma),\quad 
        \tau \mapsto \Big(
         p_{T_n}\circ \ldots \circ p_{T_1} 
        \Big)(\tau)
    \end{align*}
    has a nonempty set of fixed points $\mathrm{Fix}(p) $, and that there is a constant $c >0$ such that
    $\angle_\stree(\tau, \mathrm{Fix}(p)) < c \cdot \angle_\stree(\tau,p(\tau))$ for $\tau \in B_{\pi/4}$.
    Then, for $\tau_0 \in B_{\pi/4}(\sigma)$ sufficiently close to $\mathrm{Fix}(p)$, the sequence 
    \[ 
        \tau_{m} := \Big(p_{T_n}\circ \ldots \circ p_{T_1}\Big)(\tau_{m-1})\,,\quad m\in \NN\,,
    \]
    converges to a point in $\mathrm{Fix}(p)$.
\end{theorem}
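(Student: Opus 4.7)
The plan is to reduce the claim to a direct application of \cite[Theorem 4]{prox_split_cat}, which proves convergence of cyclic proximal iterations of convex functions on CAT(1) spaces under a metric subregularity estimate on the fixed-point set. Most of the work is therefore verifying that the hypotheses of that theorem are in place here. By Remark \ref{rm:Cat-1}, $(\Sigma_\stree, \angle_\stree)$ is a CAT(1) space. Since $B_{\pi/4}(\sigma)$ has radius strictly less than $\pi/2$, the standard CAT(1) theory (see e.g.\ \cite{bridson}) ensures that every pair of points in it is connected by a unique minimizing geodesic which stays inside the ball, so the ball is geodesically convex.

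Next I would check geodesic convexity of each summand $f_{T_i}$ on $B_{\pi/4}(\sigma)$. For any $\tau \in B_{\pi/4}(\sigma)$, the hypothesis $\dir_\stree(T_i) \in B_{\pi/4}(\sigma)$ together with the triangle inequality gives $\angle_\stree(\tau, \dir_\stree(T_i)) < \pi/2$, so $f_{T_i}(\tau) = -\cos(\angle_\stree(\tau, \dir_\stree(T_i)))$ is well-defined and continuous. On balls of radius $<\pi/2$ in a CAT(1) space the function $\tau \mapsto -\cos(\angle_\stree(\tau, x_0))$ is geodesically convex for every fixed $x_0$ inside the ball; this is a direct consequence of the CAT(1) comparison inequality applied to the spherical cosine rule. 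Hence each $f_{T_i}$ is proper, convex and continuous on $B_{\pi/4}(\sigma)$.

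The third ingredient is single-valuedness of the proximal maps. Lemma \ref{lem:prox} already provides an explicit formula for $\prox_{T_i}(\tau)$ and shows that the minimizer is unique. The upper bound $\nu \leq 1/w_i$ is exactly what is needed to guarantee strict convexity of the proximal objective in the CAT(1) setting, so that both the individual $p_{T_i}$ and the composition $p = p_{T_n}\circ \cdots \circ p_{T_1}$ are well-defined as single-valued maps $B_{\pi/4}(\sigma)\to B_{\pi/4}(\sigma)$ (the latter by assumption).

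At this point I have a CAT(1) ball of radius $<\pi/2$, convex continuous summands, well-defined proximal maps, a non-empty fixed-point set $\mathrm{Fix}(p)$, and the linear regularity estimate $\angle_\stree(\tau, \mathrm{Fix}(p)) < c\cdot \angle_\stree(\tau, p(\tau))$, which is precisely the metric subregularity input required by \cite[Theorem 4]{prox_split_cat}. Invoking that theorem yields local convergence of $\tau_m = p(\tau_{m-1})$ to some element of $\mathrm{Fix}(p)$ whenever $\tau_0$ is sufficiently close to $\mathrm{Fix}(p)$, which is the stated conclusion. The main obstacle is the verification of the geodesic convexity of the $f_{T_i}$ on $B_{\pi/4}(\sigma)$, since this is the structural property on which the whole proximal analysis rests; the remaining steps are essentially bookkeeping once the hypotheses of the theorem are matched to those of \cite{prox_split_cat}.
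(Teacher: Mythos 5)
Your proof takes essentially the same route as the paper: establish that each $f_{T_i}$ is geodesically convex on $B_{\pi/4}(\sigma)$ (the paper justifies this tersely by noting $-\cos$ is convex on $[0,\pi/2]$; you fill in the CAT(1)-comparison reasoning), use Lemma~\ref{lem:prox} for single-valuedness of the proximal maps, and then invoke \cite[Theorem 4]{prox_split_cat}. The paper records this chain in one sentence immediately preceding the theorem statement rather than as a separate appendix proof, so your write-up is a correct and somewhat more detailed rendering of the same argument.
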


Theorem \ref{thm:lausterluke} as well as Lemma \ref{lem:prox} motivate the following algorithm.

\begin{algorithm}
	\caption{ A proximal splitting algorithm for finding local minima
		of $\sigma \mapsto \nabla_\sigma F_\mathcal{T}(\stree)$.}\label{alg:prox}
	\KwData{trees $\mathcal{T}= \{T_1, \ldots, T_n\}$, initial guess $\sigma_0$, 
 $0 < \nu \leq 1$ }
    $j \gets 0$;\\
    \Repeat{\rm convergence}{
		$\sigma_{j+1} \gets \prox_{f_n, \nu} \circ \prox_{f_{n-1}, \nu} \circ \ldots \circ \prox_{f_1, \nu}(\sigma_j)$;\\
        $j \gets j +1 $
	}
\end{algorithm}

Alternatively, one might use Algorithm \ref{alg:mindegrees}, inspired by
geodesic gradient descent methods. The randomized approach of such an algorithm
might be better suited for finding more local minima by `exploring' the space.
We do not, however, provide a proof of convergence for this algorithm.
The approach here is that, in each iteration, the current position is updated
by shooting a geodesic from the current position to a randomly drawn direction from the
data set that is less than $\pi$ away.

\begin{algorithm}
	\caption{An algorithm for finding local minima of $\sigma \mapsto \nabla_\sigma F_\mathcal{T}(\stree)$,
		inspired by stochastic gradient descent.}\label{alg:mindegrees}
	\KwData{trees $\mathcal{T}= \{T_1, \ldots, T_n\}$, initial guess $\sigma_0$
 }
    $j \gets 0$;\\
    \Repeat{\rm convergence}{
	$I_j \gets \{ i \in \{1,2, \ldots,n\} : \angle_\stree (\sigma_j, T_i) < \pi \}$;\\
	Draw $\tau_{j+1} \sim \sum_{i \in I_j} \frac{d(\stree, T_i)}{\sum_{l \in I_j} d(\stree, T_l)} \delta_{\dir_\stree{T_i}}$;\\
	$\sigma_{j+1} \gets \overline{\beta}_{\sigma_j}^{\tau_{j+1}}\left(\frac{\sin(\angle_\stree(\sigma_j, \tau_{j+1}))}{j+1} 
    \right)$;\\
    $j \gets j + 1$:\\
	}
\end{algorithm}

\begin{remark}

	By design, Algorithms \ref{alg:prox} and \ref{alg:mindegrees} are only
	capable of finding local minima. This is easy to see if the initial
	guess is `isolated'. In each iteration, the current position is
	updated by going towards a direction from the data set that is less than
	$\pi$ away. Going back to Example \ref{ex:improv}, starting at $\tau_0 =
		\dir_\stree(T_4)$, the algorithms would remain stationary at
	the local minimizer $\dir_\stree(T_4)$, regardless of the choice of the
	parameter $w \geq 1$.
 
	Thus, we recommend using both algorithms multiple times with different
	initial guesses.

\end{remark}

\begin{remark}
    \label{rem:whichalg}

    In terms of performance, Algorithm \ref{alg:prox} requires fewer computations 
    of angles than Algorithm \ref{alg:mindegrees}. Determining $I_j$ in Algorithm 
    \ref{alg:mindegrees} is the major computational bottleneck -- it requires the
    computation of all angles between the current position and the directions 
    from the data set. This can be alleviated by keeping $I_j$ for multiple  
    iterations before updating at the cost of precision.
    
    On the other hand, different runs of Algorithm \ref{alg:mindegrees} 
    seems to converge against different local minima, whereas Algorithm 
    \ref{alg:prox} tends to converge against the same local minimum, 
    see also Figure \ref{fig:api_alg} in Section~\ref{sec:api}.
    
    Since the link $\mathbb{L}_N$ stretches across all $(2N-3)!!$ orthants, it can
    be advisable to perform multiple runs of Algorithm \ref{alg:mindegrees} 
    despite the higher computational cost.
\end{remark}

\section{Hypothesis Testing in the Presence of Stickiness}
\label{sec:testing}
\subsection{Testing for the Presence of Splits in the Fr\'echet Mean}

Building on Theorem~\ref{thm:main-strata}
we derive the following one-sample test for the hypotheses
\begin{align}\label{hypo:one-sample-test}
	\mathcal{H}_0: s \not\in E(\mean)\text{ vs. }
	\mathcal{H}_1 : s \in E(\mean)\,,
\end{align}
for the presence of a split $s$ in the population Fr\'echet mean $\mean$ of  probability distribution $\prb \in \wst[2]{\bhv}$.

\begin{test}[For the presence of a split in the Fr\'echet mean]\label{test:one-sample}
     Given a sample $T_1,\ldots,T_n \iid \prb$ and a level $0<\alpha < 1$, reject $\cH_0$ if
\begin{eqnarray*}
\hat Z := \frac{\sqrt{n}\,\bar X_n}{\widehat{\rm sdev}} < c_{\alpha}\mbox{ where }\bar X_n:= \frac{1}{n}\sum_{i=1}^n X_i,~\widehat{\rm sdev} := \sqrt{\frac{1}{n-1}\sum_{i=1}^n (X_i-\bar X_n)^2}\,.
\end{eqnarray*}
  Here, $$ X_i :=  -\cos(\angle_\stree(\sigma_s, T_i)) \cdot d(\stree, T_i)\,,\quad i \in \{1,2,\ldots,n\}\,,$$
  and $c_{\alpha}$ can be taken as the $\alpha$-quantile of the student $t_{n-1}$-distribution with $n-1$ degrees of freedom, i.e. $\mathbb P\{ Z \leq c_{\alpha}\} = \alpha$ for $Z\sim t_{n-1}$.
  
  Alternatively, $c_{\alpha}$ can be simulated by bootstrap sampling from the data centered by its sample mean (thereby simulating $\cH_0$): For $B\in \NN$ large (typically $1,000$) and each $b=1,\ldots,B$, let $n_b \in \NN$ ($=n$ in $n$-out-of-$n$-bootstrap) sample $X_1^*,\ldots,X_{n_b}^* \iid \frac{1}{n} \sum_{i=1}^n \delta_{X_i-\bar X_n}$ to obtain
  $$ Z^*_b := \frac{\sqrt{n_b}\, \bar X^*_b}{\sqrt{ \frac{1}{n_b-1}\sum_{i=1}^{n_b}(X^*_i-\bar X^*_b)^2}},\mbox{ where } \bar X^*_b := \frac{1}{n_b}\sum_{i=1}^{n_b}X^*_i\,,$$
  and
    \[
    	c_{\alpha} = \max\left\{x \in \RR: \frac{1}{B} \sum_{b=1}^{B} 1_{\left(-\infty,x\right]}\left(Z^*_b\right) \leq \alpha\right\}\,.
    \]

\end{test}

\begin{remark}
    \label{rem:one-sample}
    
    1) Since 
    $$-\cos(\angle_\stree(\sigma_s, T)) \cdot d(\stree, T) = \frac{1}{2}\,\nabla_s \,d(\stree,T)^2\,,$$
    for every $T \in \bhv$, cf. Theorem \ref{thm:dir-dertivative} and thus $\EE[X_i] = \nabla_s F_{\prb}(\stree)$ for all $1\leq n$, Test \ref{test:one-sample} is a classical student $t$-test for 
\begin{align}\label{hypo:one-sample-test-prime}
	\mathcal{H}'_0: \nabla_s F_\prb(\stree) \geq 0 \text{ vs. }
	\mathcal{H}'_1 : \nabla_s F_\prb(\stree) < 0,
\end{align}
    which is robust under nonnormality, i.e. keeping asymptotically the level $\alpha$ (e.g. \citet[Section 11.3]{RomanoLehmann2005}).
    Although $\nabla_s F_{\prb}(\stree)< 0$ implies $s\in E(\mu)$, due to Theorem \ref{thm:main-strata},$ \cH'_0$ may be true without $ \cH_0$ being true, cf. Example \ref{ex:improv}, the true level of Test \ref{test:one-sample} for  (\ref{hypo:one-sample-test}) may be higher than its nominal level, making it more liberal.

    2) The necessary condition of \cite{bhvmean}, see Inequality \eqref{eq:old_cond}, was derived for the Fr\'echet mean of a finite set of trees. Our improved condition in Theorem~\ref{thm:main-strata}, see Inequality \eqref{eq:new_cond}, however, is applicable to general probability distributions 
$\prb \in \wst[2]{\bhv}$.
\end{remark}

We can adapt this test to a more general setting \emph{if some splits of the population mean are already known}.

\begin{test}[For the presence of a split in the Fr\'echet mean if other splits are known]\label{test:one-sample-strata}
   Given a sample $T_1,\ldots,T_n \iid \prb$ and a level $0<\alpha < 1$, assume we have
   have a stratum $\ort \subset \bhv$ such that for every $T^\prime \in \ort$, one has 
   $E(T^\prime) \subset E(\mean)$, where $\mean$ is the unique Fr\'echet mean of
   $\prb$. 
   Further, suppose $s \in C(T^\prime)$ and $\mathfrak{T}_{T^\prime} \cong \RR^{N-l-2} \times \mathbb{T}_{k_1} 
   \times \ldots \times \mathbb{T}_{k_m}$ as in Theorem \ref{thm:tancone}.
   Let $\varpi_r : \mathfrak{T}_T \to \TT_{k_r}, r\in \{1,\ldots,m\}$,
   be the canonical projection such that $\varpi_r(T^\prime + 1 \cdot s) \neq \stree$.
   By projecting, we obtain the sample $\cT^\prime = \{{T'_1},\ldots, {T'_{n}}\}$, where 
     \begin{align*}
        T_i^\prime &= \varpi_r(\log_\mean(T_i))\,, \qquad i=1,\ldots,n\,.
     \end{align*}
   Reject $\mathcal{H}_0$ from (\ref{hypo:one-sample-test}) if Test \ref{test:one-sample} rejects $\cH'_0$ (\ref{hypo:one-sample-test-prime})for $\cT^\prime$ and 
   $\sigma_s = \dir_\stree(\varpi_r(T^\prime + s))$.
\end{test}

\subsection{A Two Sample Test for Distributions with Same and Sticky Mean}

Building on the central limit theorem for directional derivatives from \cite[Theorem 6.1]{stickyflavs_arxiv}, see also \cite[Theorem 2]{mattingly2024central_tanfields}, for two probability distributions $\prb,\prbalt\in \wst[2]{\bhv} $ with common Fr\'echet mean $\mu$ on a stratum $\mathbb S$ of codimension $1 \leq l \leq N-2$ we propose the following two-sample test for the hypotheses
\begin{align}\label{hypo:two-sample-test}
	\mathcal{H}_0:  \prb = \prb^\prime\text{ vs. }
	\mathcal{H}_1 : \prb \neq \prb^\prime\,,
\end{align}
for the equality of $\prb$ and $\prb^\prime$, which is motivated by the procedure described in \cite[Section 3.7.1]{vdvaart}.

The second test below treats the general case by reducing it via Theorem \ref{thm:tancone} to the special case of $\ort = \{\stree\}$, treated by the first test below.

\begin{test}[For equality in the presence of stickiness to $\stree$]\label{test:two-sample-star}
    Let $\Sigma \subseteq \Sigma_\stree$.

     Given two independent samples $\cT=\{T_1,\ldots,T_n\}, T_i \iid \prb$ ($i=1,\ldots,n$) and $\cT^\prime=\{T^\prime_1,\ldots,T^\prime_{n'}\}$, $T^\prime_j \iid \prb^\prime$ ($j=1,\ldots,n$'), and a level $0<\alpha < 1$,
     let 
    \begin{align*}
        X_{i,\sigma} :=  -\cos(\angle_\stree(\sigma, T_i)) \cdot d(\stree, T_i)\,,\quad i \in \{1,\ldots,n\}\,,\\
        X^\prime_{j,\sigma} :=  -\cos(\angle_\stree(\sigma, T^\prime_i)) \cdot d(\stree, T^\prime_j)\,,\quad j \in \{1,\ldots,n'\}\,,  
    \end{align*}
     Then reject $\cH_0$ if
    \begin{eqnarray*}
        Z_{(\cT, \cT^\prime)}:=\sup_{\sigma \in \Sigma } \lvert \bar X_{\sigma} - \bar X^\prime_{\sigma}\rvert > c_{1-\alpha} \mbox{ where } \bar X_{\sigma}:= \frac{1}{n}\sum_{i=1}^n X_{i,\sigma}, \ \bar X^\prime_{\sigma}:=\frac{1}{n'}\sum_{j=1}^{n'}X^\prime_{j,\sigma}\,.
    \end{eqnarray*}
    Here, $c_{1-\alpha}$ is obtained through permutation of samples.
    By sampling from $\cT \cup \cT'$ without replacement, we generate pairs
   $(\cT^1, {\cT'}^1),\ldots, (\cT^B, {\cT'}^B)$, for large $B\in \NN$ (typically $1,000$)
    with 
    \[  
        \lvert \mathcal{T}^b\rvert = n \text{ and } \lvert {\mathcal{T}'}^b\rvert =n'\,, \qquad b=1,\ldots, B\,.
    \]
    For these permuted samples, we evaluate the  statistics 
    $Z_{(\mathcal{T}^1, {\mathcal{T}'}^1)}, \ldots, 
    Z_{(\mathcal{T}^B, {\mathcal{T}'}^B)} $ and determine 
    \[
    	c_{\alpha} = \max\left\{x \in \RR: \frac{1}{B+1} \left(1 + \sum_{b=1}^{B} 1_{\left(-\infty,x\right]}\left(Z_{(\mathcal{T}^b, {\mathcal{T}'}^b)}\right)\right) \leq \alpha\right\}\,.
    \]
    Here, in addition to \cite[Section 3.7.1]{vdvaart} we have added  one to the sum in order to avoid p-values of zero (see \cite{permutation_zero}).
%
%
    Then, the p-value is given by 
    \[
        \mathrm{p} = 
        \frac{1}{B+1} \left(1 + \sum_{b=1}^{B} 1_{\left(-\infty,x\right]}\left(Z_{(\mathcal{T}^b, {\mathcal{T}'}^b)}\right)\right)\,.
    \]
\end{test}

Ideally, one would choose $\Sigma = \Sigma_\stree$ in Test~\ref{test:two-sample-star}.
In the absence of suitable numerical methods, this approach quickly becomes
computationally infeasible with larger $N$, as the space of
directions at the star tree corresponds to a sphere stretching across all
$(2N-3)!!$ orthants. Instead, we propose using the following finite 
selections of directions.

\begin{definition}
	Let $N \geq 3$ and $\cT, \cT'\subset \bhv$ be two finite subsets of trees. Then let
	\begin{align*}
		\Sigma_1 & = \{ \dir_\star(T) \ : \ T \in \cT \cup \cT' \}\,, \text{ and}                     \\
		\Sigma_2 & = \{ \dir_\star(\star + 1 \cdot s) \ : \ \exists T \in \mathcal{T} \cup \mathcal{T}'\mbox{ with }s \in E(T) \} \,.
	\end{align*}
\end{definition}

\begin{remark}[Computational complexity]
\label{rem:comp_twosample}

Let us briefly discuss the computational complexity of Test \ref{test:two-sample-star}
for both choices of directions.
Given a direction $\sigma \in \Sigma_1\cup \Sigma_2$ (of course this also holds for all $\sigma \in \Sigma_\stree$), we have that
\[
\lvert \bar X_{\sigma} - \bar X^\prime_{\sigma}\rvert = \left\lvert \frac{1}{n} \sum_{i=1}^{n} \lVert  T_i \rVert \cos(\angle_\stree(\sigma, T_i))- \frac{1}{n'} \sum_{j=1}^{n'} \lVert T'_j \rVert \cos(\angle_\stree(\sigma, T'_j))\right\rvert\,.
\]
Here, the main computational burden lies in computing the angles, which can be done via the Euclidean law of cosines for
a tree representing a direction (see Proposition \ref{prop:bhv_cone}).
For $\Sigma_2$ we need to computate of all angles of the type $\angle_\stree(\stree + 1 s, T)$, $T \in \cT \cup \cT'$. 
By Remark \ref{rem:dist_single_split}, this requires determining the pairwise compatibility of all splits.
As each tree has at most $N-2$ splits, the number of splits is bounded from above by $(n + n') \cdot (N-2)$.
Recall verifying the compatibility $A_1 \vert A_2$ and $B_1 \vert B_2$ is done
by computing the intersections $A_i \cap B_j$, $i,j \in \{1,2\}$. 
Thus, determining the compatibility of two splits is of complexity $\mathcal{O}(N)$.
In total, we obtain a complexity of $\mathcal{O}(N^3 \cdot (n_x + n_y)^2)$ for computing
the pairwise compatibility of the splits.

In case of $\Sigma_1$, the directions correspond directly to the data, and thus,
all pairwise distances in $\cT \cup \cT^\prime$ need to be computed. 
As was shown in \cite[Theorem 3.5]{OwenProvan11}, computing the distance between two trees 
is of complexity $\mathcal{O}(N^4)$. 
Thus, we obtain a complexity of $\mathcal{O}(N^4 \cdot (n + n')^2)$ for $\Sigma_1$.

\end{remark}

Due to the decomposition from Theorem \ref{thm:tancone}, we can at once use Test \ref{test:two-sample-star} to construct the following generally applicable test.

\begin{test}[For equality in the presence of stickiness to strata]\label{test:two-sample-strata}
   Given two independent samples $\cT=\{T_1,\ldots,T_n\}, T_i \iid \prb$ ($i=1,\ldots,n$) and $\cT^\prime=\{T^\prime_1,\ldots,T^\prime_{n'}\}, T^\prime_j \iid \prb^\prime$ ($j=1,\ldots,n$'), and a level $0<\alpha < 1$,
     let $\ort\subset \bhv$ be a stratum of codimension $l\geq 1$, $\mean \in \ort$ and suppose 
     $\mathfrak{T}_T \cong \RR^{N-l-2} \times \mathbb{T}_{k_1} \times \ldots \times \mathbb{T}_{k_m}$ as in Theorem \ref{thm:tancone} with canonical projections $\varpi_r : \mathfrak{T}_T \to \TT_{k_r}, r=1,\ldots,m$.
     Then, we obtain samples $(\cT^r, {\cT'}^r) = (\{T_1^r,\ldots, T_n^r\}, \{{\cT'_1}^r,\ldots, {\cT'_{n'}}^r\})$, $r=1,\ldots,m$, where
     \begin{align*}
        T_i^r &= \varpi_r(\log_\mean(T_i))\,, \qquad i=1,\ldots,n\,,\\ 
        {\cT'_j}^r &= \varpi_r(\log_\mean(T'_j))\,,\qquad j=1,\ldots n'\,,
     \end{align*}
     Letting $\mathrm{p}_r$ be the p-value obtained from conducting Test \ref{test:two-sample-star} for the pair of samples 
     $(\cT^r, {\cT'}^r)$, $r =1,\ldots, m$, reject $\cH_0$ if there exists $r \in \{1,2\ldots,m\}$
     such that
     \[
        \mathrm{p}_{(r)} < \frac{\alpha}{m+1-r}\,,
     \]
    where $\mathrm{p}_{(1)},\ldots,\mathrm{p}_{(m)}$ are the p-values sorted from lowest to highest, see \cite{holm-correction}.
\end{test}


\begin{remark}

    1) Note that $\mu\in \ort$ can be arbitrarily chosen.

    2) In \cite{brown2018mean}, a two sample test was proposed for discriminating
    two probability distributions in BHV spaces based on the difference of their
    Fr\'echet means. If the both distributions are sticky,  however, at the star tree, say,
    the sample Fr\'echet means will be almost
    surely exactly at the star tree beyond some random finite sample size (see Theorem \ref{thm:stickyorth}), rendering their approach infeasible in case.

    3) While Tests \ref{test:two-sample-star} and \ref{test:two-sample-strata} have been motivated by the phenomenon of stickiness, they are, of course, applicable for general distributions.  In particular, they are very meaningful for cases where both sample Fr\'echet means are close to the same stratum (yielding \emph{finite sample stickiness} as discussed by  \cite{ulmer2023exploring-GSI}).
\end{remark}



\section{Applications and Simulations}
\label{sec:examples}
\subsection{Apicomplexa}
\label{sec:api}
\begin{figure}[!h]
	\centering
    \includegraphics[width=.4\linewidth]{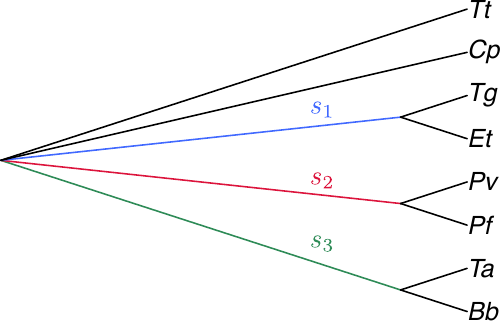}
	\caption{\it The  topology 
 asserted in \cite{bhvpca2}.}
	\caption{\it Proposed topologies of the Fr\'echet mean of
		the apicomplexa data. The abbreviations stem from the species' binary nomenclature.}

	\label{fig:api_top}
\end{figure}

Apicomplexa are a phylum of parasitic alveolates containing a number of important pathogens, such as the causative agents of malaria and toxoplasmosis. The data set we investigate here, originally presented by \cite{api1} and analyzed by \cite{api2},\cite{bhvpca2}, consists of 252 rooted trees with 8 taxa (leaves): \textit{C. parvum} (Cp), 
\textit{T. thermophila} (Tt), \textit{T. gondii} (Tg), \textit{E. tenella} (Et),
\textit{P. vivax} (Pv), \textit{P. falciparum} (Rf),
\textit{B. Bovis} (Bb) and \textit{T. annulata} (Ta).

In particular \cite{bhvpca2} determined the Fr\'echet mean by running Bac{\'a}k's algorithm and
pruning small edges from the output, resulting in a not fully topology, comprising three splits, namely $s_1,s_2,s_3$ from Table \ref{tab:splits}. The topology is also shown in Figure \ref{fig:api_top}. 
{\def\arraystretch{2}\tabcolsep=10pt
\begin{table}[!ht]
	\centering
	\begin{tabular}{ c|c|c} 
        split  & directional derivative & p-value\\
        \hline
		$s_1=\mathrm{Tg},\mathrm{Et} \ \big\vert \  \mathrm{Tt},\mathrm{Cp},\mathrm{Pv},\mathrm{Pf},\mathrm{Ta}, \mathrm{Bb}$ & $\nabla_{\sigma_{s_1}} F_\mathcal{T}(\stree)\approx -1.2\cdot10^{-4}$ & $ < 10^{-15}$\\ 
		$s_2=\mathrm{Pv},\mathrm{Pf} \ \big\vert \  \mathrm{Tt},\mathrm{Cp},\mathrm{Tg},\mathrm{Et},\mathrm{Ta}, \mathrm{Bb}$ & $\nabla_{\sigma_{s_2}} F_\mathcal{T}(\stree) \approx -3.4\cdot10^{-4}$ & $< 10^{-15}$\\ 
		$s_3=\mathrm{Ta},\mathrm{Bb} \ \big\vert \  \mathrm{Tt},\mathrm{Cp},\mathrm{Tg},\mathrm{Et},\mathrm{Pv}, \mathrm{Pf}$ & $\nabla_{\sigma_{s_3}} F_\mathcal{T}(\stree) \approx -2.1\cdot10^{-1}$ & $< 10^{-15}$\\ 
	\end{tabular}
	\caption{\it Splits (first column) comprising the Fr\'echet mean of the apicomplexa data set from \cite{api1} in Section~\ref{sec:api}; splits $s_1,s_2,s_3$ have been found by \cite{bhvpca2}. The second column lists their directional derivatives (negative values imply presence in the mean). 
    The last column shows the    p-values of the corresponding one-sample tests Test \ref{test:one-sample} first three tests were numerically indistinguishable from 0.}\label{tab:splits}
\end{table}
}
We verified the presence of these splits in the sample Fr\'echet mean by showing that the corresponding directional derivatives are negative at the star tree (cf. Theorem~\ref{thm:main-strata}).


Algorithms \ref{alg:prox} and \ref{alg:mindegrees} were run for a hundred different initial orthogonal directions pointing to data, where these directions have been projected according to Theorem \ref{thm:tancone}.
They were not able to find a negative directional derivative, indicating that
the non-fully resolved topology $E(\mu) = \{s_1,s_2,s_3\}$ for the Fr\'echet mean $\mu$ is indeed correct. Their outputs are displayed in Figure \ref{fig:api_alg}.

We used the implementation of Sturm's algorithm of \cite{miller2015polyhedral} 1000 times on our data set with default configurations. 
Despite the Fr\'echet mean being unresolved, all 1000 proposals for the Fr\'echet mean were fully resolved,
which is not coming as a surprise in light of Theorem \ref{thm:sturms-algo-not-singular}.
The splits $s_1,s_2,s_3$ were present in all 1000 outputs. Besides these three splits, we observed 8 other splits in total.
Amongst them, two splits stood out: the splits
$\mathrm{Tt},\mathrm{Cp}\ \big\vert \  \mathrm{Tg},\mathrm{Et}, \mathrm{Pv},\mathrm{Pf},\mathrm{Ta}, \mathrm{Bb}$ and 
$\mathrm{Tg},\mathrm{Et}, \mathrm{Tt},\mathrm{Cp}\ \big\vert \  \mathrm{Pv},\mathrm{Pf},\mathrm{Ta}, \mathrm{Bb}$
were present in 990 and 986 out of 1000 trees, respectively, despite not being in the correct topology we determined.
The other splits were featured less than ten times. This highlights that caution is required when dealing with 
the output of Sturm's algorithm, as even multiple runs on the same data can lead to misleading conclusions.

After having determined the topology of the sample mean, we infer on the topology of the population mean. To this end
we conduct Test \ref{test:two-sample-star} three times at level $0.05$ to test the hypotheses
\begin{align*}
	\mathcal{H}_1^\prime & :   \nabla_{\sigma_{s_1}} F_\prb(\stree) \geq 0\,, \quad
	\mathcal{H}_2^\prime   :   \nabla_{\sigma_{s_2}} F_\prb(\stree) \geq 0\,, \quad
	\mathcal{H}_3^\prime   :   \nabla_{\sigma_{s_3}} F_\prb(\stree) \geq 0\,. 
\end{align*}
After a Holm's correction \cite{holm-correction}, we can reject all three hypotheses at level 0.05.
By Theorem \ref{thm:main-strata}, this implies the presence of splits $s_1, s_2, s_3$ in the population
Fr\'echet mean.


\begin{figure}[!h]
	\centering
	\subfloat[\it Results from running Algorithm \ref{alg:prox}.]{\includegraphics[width=7cm]{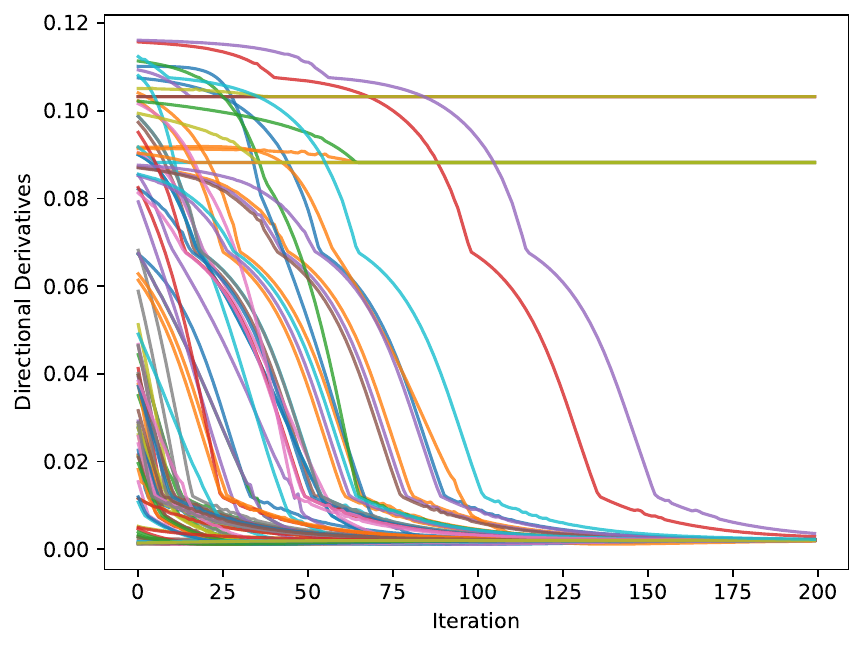}}\qquad
	\subfloat[\it Results from running Algorithm \ref{alg:mindegrees}.]{\includegraphics[width=7cm]{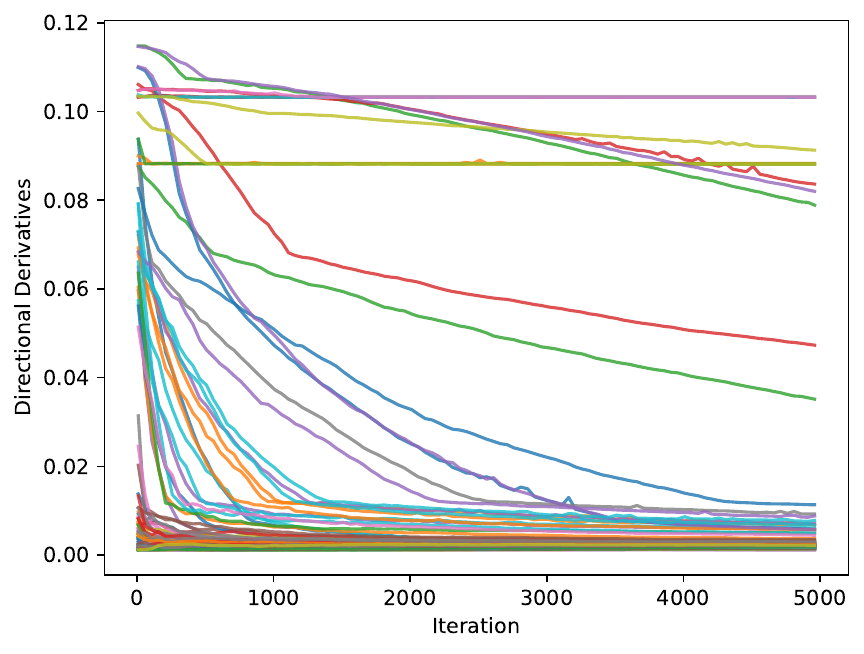}}\qquad

	\caption{\it Directional derivatives for the apicomplexa data set. Each colored line represents starting at one of the first 100 trees: no directions with negative derivative were found.}

	\label{fig:api_alg}
\end{figure}

\subsection{Brain Arteries and Cortical Landmarks}

In \cite{skwerer2014tree}, brain artery trees were analyzed by mapping them to points in
BHV tree space. In human brains, (usually) four brain artery trees emerge from the
circle of Willis. They reconstructed these subtrees from Magnetic Resonance images
and artificially connected these to a root, creating a single brain artery tree.
Furthermore, 128 labelled correspondence points on
the cortex were determined and then connected to the closest vertex in the brain artery tree.
All non-labeled leaves were then pruned, resulting in rooted 85 trees with 128 labelled leaves.
Computing BHV sample Fr\'echet means via Sturm's algorithm,
they observed that the output was very close to the star tree but did not converge, even 
after 50,000 iterations. Rather the topology frequently changed in the last iterations, indicating
the star tree to be the mean of the data.

S. Skwerer and J. S. Marron provided 84 of such trees (one had to be removed) which we split into two data sets 
corresponding to 41 male and 44 female patients. We performed
Algorithm \ref{alg:mindegrees} on both data sets as shown in Figure \ref{fig:brain_data}.
Our analysis suggests that the Fr\'echet mean of both data sets is indeed the star tree. Our two sample tests could not detect a difference between male and female brain trees, which in the light of  \cite{} finding a significant but not highly significant difference by topological data analysis methods suggest that sex differences seem less distinct.

\begin{figure}[!h]
	\centering
	\subfloat[\it Results from running Algorithm \ref{alg:mindegrees} for the female patients.]{\includegraphics[width=7cm]{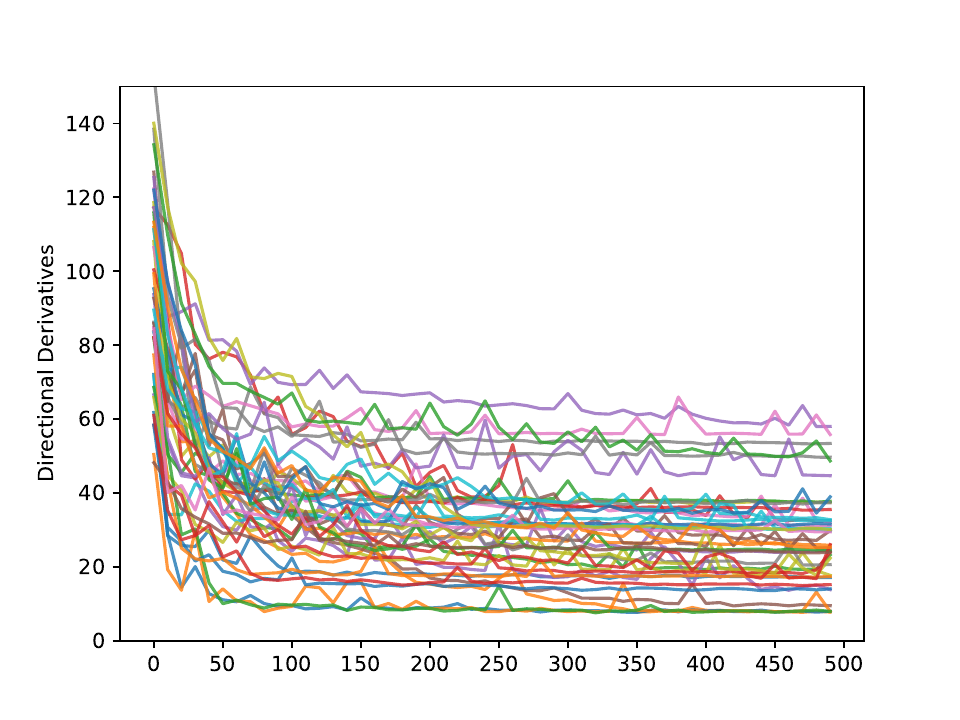}}\qquad
	\subfloat[\it Results from running Algorithm \ref{alg:mindegrees} for the male patients.]{\includegraphics[width=7cm]{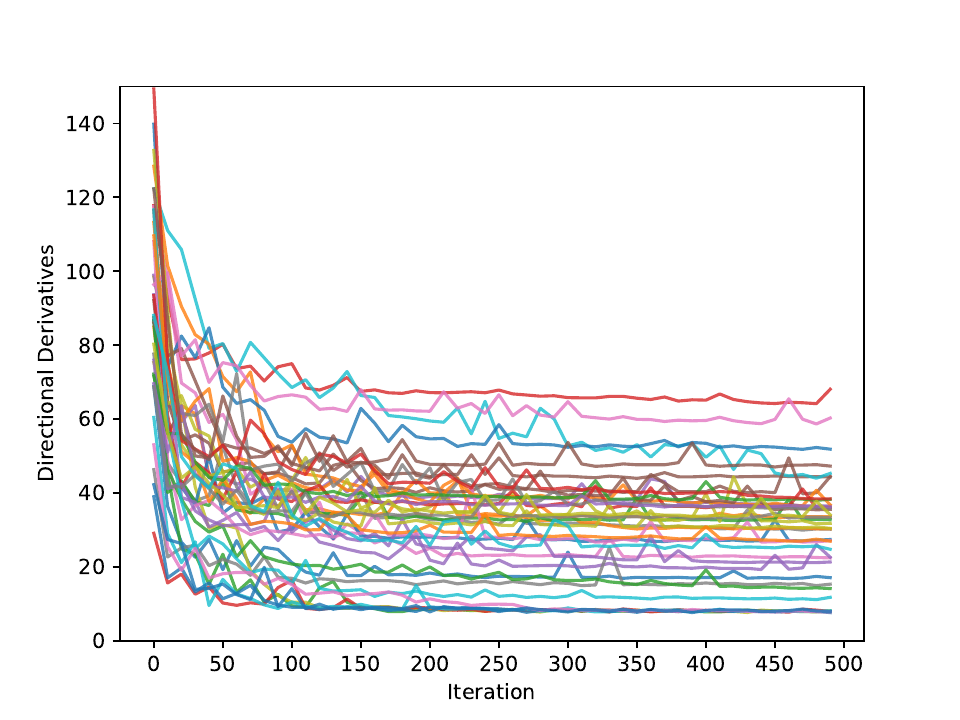}}\qquad

	\caption{\it Due to the rather high leafcount of 128, we chose to perform multiple runs of 
    Algorithm \ref{alg:mindegrees} for different initial positions, c.f. Remark \ref{rem:whichalg}.
    The results support the claim that the star tree is the Fr\'echet mean
	of both data sets corresponding to male and female patients: we did not find negative directional derivatives.}

	\label{fig:brain_data}
\end{figure}

\subsection{Stickiness and Hybridization in Baboon Populations}

We saw in Theorem \ref{thm:stickyorth} that the Fr\'echet mean of a distribution sticks
to a lower-dimensional stratum if and only if the directional derivatives in directions
perpendicular to the stratum are non-negative. This, by Corollary \ref{cor:perpderivs},
is equivalent to stickiness at the star tree in lower-dimesional tree space. 

By Theorems \ref{thm:optimality} and \ref{thm:stickyorth}, a distribution 
$\prb \in \wst[2]{\bhv}$ sticks to the star tree $\stree \in \bhv$ if and only if
\begin{align*}
	0 <  \int - d(\stree, T) \cdot \cos(\angle_\stree(\sigma, T)) \, d\prb(T) \quad \forall \sigma \in \Sigma_\stree \,.
\end{align*}
Therefore, distributions that are spread across multiple orthants of very different topologies, will generally tend to have negative directional derivatives.

A biological factor that could lead to such a distribution is hybridization.
Hybridization (or interbreeding between populations) gives rise to lateral gene transfer 
between different species. A hybridization event
is often modelled as a subtree prune and regraft (SPR) operation on phylogenies: a subtree is removed
from a tree and reattached to another edge (see e.g. \cite{subtreeops} or \cite{stprhyb}).

Depending on the gene sequence and the species involved, the effect of hybridization
on the inferred phylogenetic tree can lead to trees that are very close in
terms of the SPR distance (see e.g. \cite{sprdist}), but far away in BHV distance.
To undo such a SPR operation in the BHV space, one would need to shrink
at least all splits connecting the subtree to its previous position, before regrowing the now missing edges.
An effect of hybridization in a data set of trees can then be that the Fr\'echet mean
becomes non-fully resolved and, most likely, sticky.

The data set we investigate here is a collection of 4260 trees with 19 taxa.
18 of the taxa are baboon populations, and the 19th taxon is an \emph{outgroup}. The
Fr\'echet mean of the data set was found to be the star tree, even failing to detect
the outgroup. The data set was provided by the DPZG and was previously analyzed in \cite{baboons}. 
There, it was found that hybridization occured between the baboon populations, most likely causing
topological heterogeneity between gene trees, and leading to the unresolved Fr\'echet sample mean.
Nevertheless, the directional derivatives of the Fr\'echet function at the star tree
might still be useful in the presence of hybridization.

We computed the median of the overall lengths of the trees and split the data set accordingly into trees from slower- and faster-evolving genetic loci.
The output of Sturm's algorithm for both data sets appeared to be the star tree, motivating a two sample test.
We conducted Test \ref{test:two-sample-star} based on
the directional derivatives of the Fr\'echet function at the star tree. We considered
directions corresponding to single splits and performed 49998 random permutations.
The p-value was estimated to be $\approx 2 \cdot 10^{-5}$, indicating that the
two groups differ significantly, see Figure \ref{fig:baboons}. 
Although this division of the data set by total evolutionary length is artificial, it illustrates how our two-sample test can be applied to experimental data.
\begin{figure}
	\centering
	\includegraphics[scale=.45]{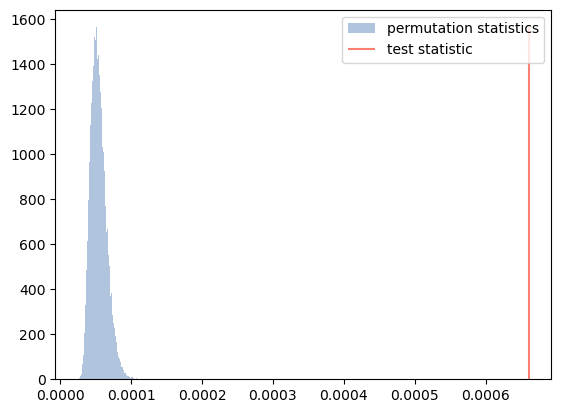}
	\caption{A histogram for the values of the test statistic
		after 49998 random permutations of the 4260 trees.}
	\label{fig:baboons}
\end{figure}

\subsection{Simulations for the Two-Sample Test}

For the two-sample test, we conducted a number of simulations to investigate the power of the test
in different scenarios. We compared the different choices of directions against each other and
against a permutation test based with the Wasserstein distance as test statistic.

Using the Wasserstein distance as test statistic is motivated by the work of
\cite{stickyflavs_arxiv} and 
\cite{bhvsticky}, where the authors proved that stickiness
of distributions corresponds to robustness of the Fr\'echet mean against small changes
in the Wasserstein space of probability distributions, i.e. 
every other distribution that is sufficiently close in Wasserstein distance has a Fr\'echet mean lying on the same stratum.

We can observe in the experiments displayed in Figure \ref{fig:power1}, that tests
based on the directional derivatives appear to be more powerful than tests based on the Wasserstein distance.
Figure \ref{fig:power1a} shows that this result depends considerably on the choice of directions,
all choices outperform the Wasserstein distance in the more anisotropic case of Figure \ref{fig:power1b}.

\begin{figure}[h]%
	\centering
	\subfloat[\it
		The x-axis displays the
		scales chosen for drawing edge lengths the second sample.]{\includegraphics[scale=.45]{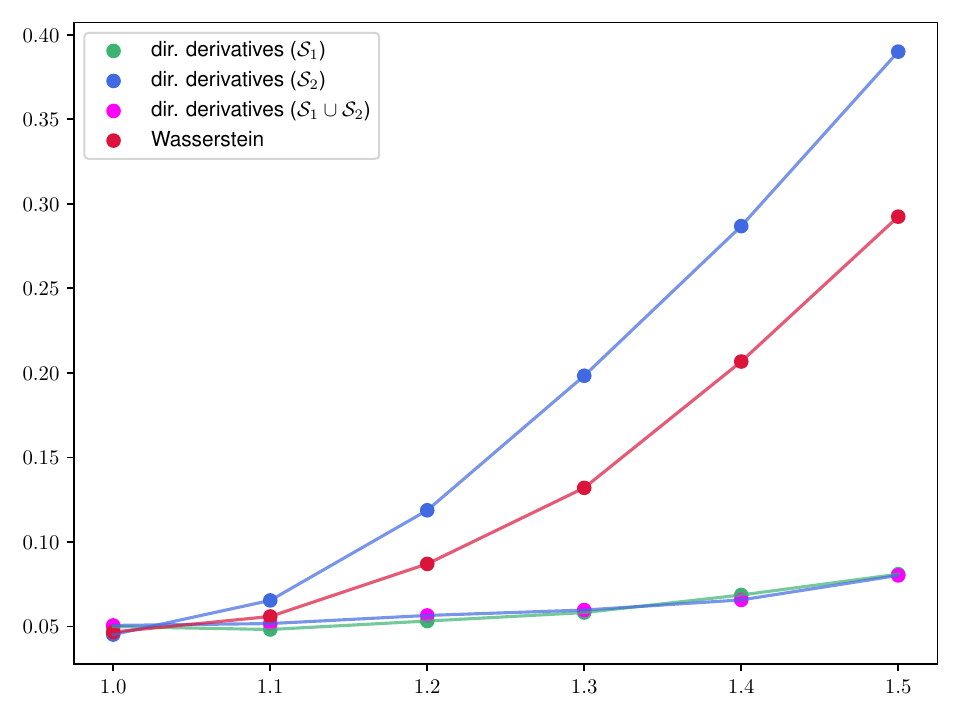}\label{fig:power1a}} \hspace{.15cm}
	\subfloat[\it
		For eight predetermined topologies, edges are drawn from exponential distributions with scale 1.
		The chosen scale for the remaining seven topologies is displayed on the x-axis.]{\includegraphics[scale=.45]{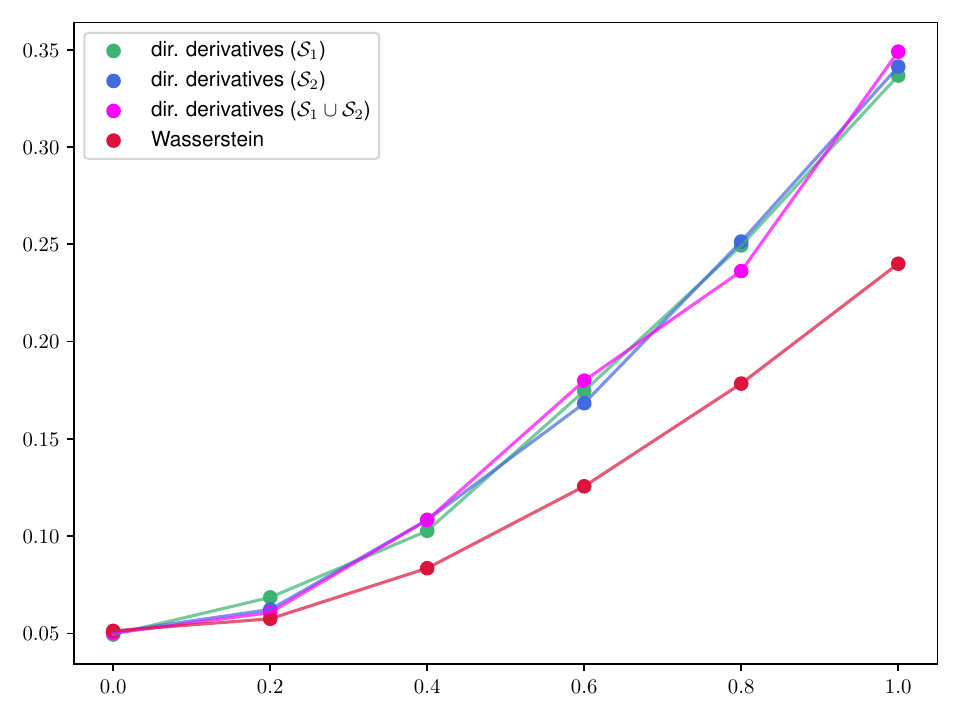}\label{fig:power1b}}
	\caption{\it In both numerical experiments, two samples in $\cT, \cT' \subset \mathbb{T}_4$
		were generated, both having an equal sample size $\lvert \cT \rvert =
			\lvert \cT' \rvert = 50$. For both samples, the topologies were drawn uniformly and
		the edge lengths in the first sample follow exponential distributions with scale 1.
		Figure \ref{fig:power1a} and Figure \ref{fig:power1b} correspond to two different
		ways of drawing the edge lengths of the second sample. The level for all tests
		was set to $0.05$.}
	\label{fig:power1}%
\end{figure}

It appears, however, that the test is not optimal for discrete probability
distributions as can be seen in Figure \ref{fig:power2}.
In this scenario, the test based on the Wasserstein distance easily outperforms
the tests based on directional derivatives.
For such distributions, a Wasserstein based test might be more appropriate.

We also want to highlight that our test has a computational advantage.
For each permutation, the evaluation of the Wasserstein distance requires
solving a linear program, whereas our test statistic only requires summation and
subtraction of the permuted directional derivatives.

\begin{figure}
	\centering
	\includegraphics[scale=.5]{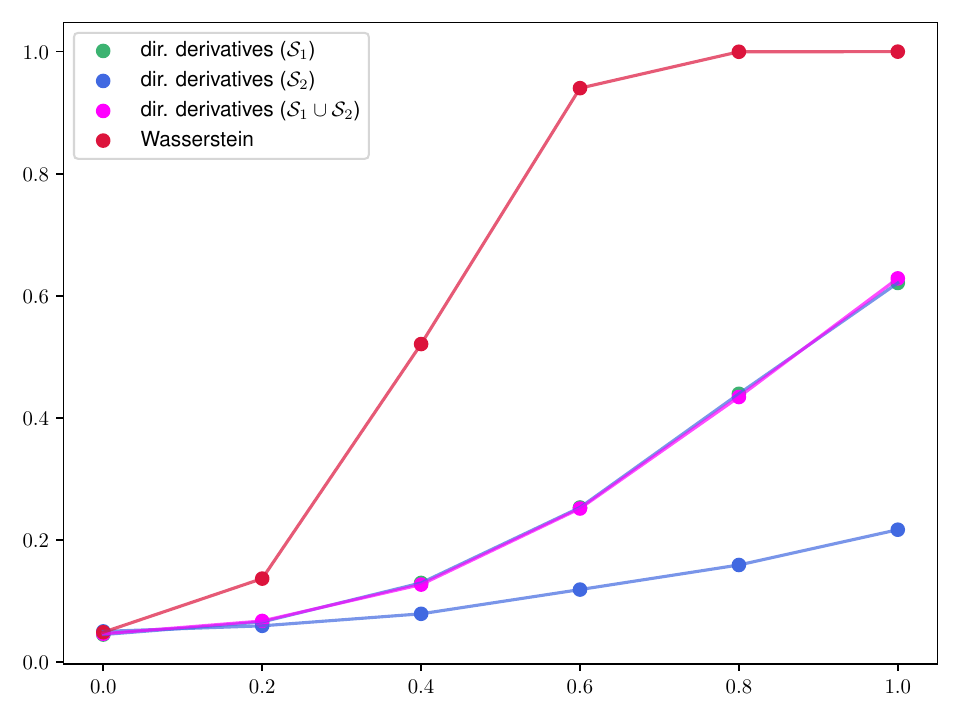}
	\caption{\it In these numerical experiments, two samples in $\cT, \cT' \subset \mathbb{T}_4$
		were generated, both having an equal sample size $\lvert \cT\rvert =
			\lvert \cT' \rvert = 50$. For both samples, the edge lengths are fixed at
		1. The topology of the first sample were drawn uniformly.
		For the second sample, three predetermined topologies were drawn with
		probability $(1 + \lambda \cdot 4)/15$, the remaining topologies with
		probability $(1 - \lambda) /15$. The parameter was chosen from $\lambda \in \{0, 0.2,0.4,0.6,0.8,1\}$.}
	\label{fig:power2}
\end{figure}

\section{Discussion}
\label{sec:discussion}
%

In this paper we have illustrated that
despite the attractive features of BHV tree space, use of the Fr\'echet mean can be challenging due to (i) computational issues calculating the sample mean and (ii) stickiness of the population mean affecting asymptotic tests. 

 The new sufficient condition we have derived for presence of edges in the Fr\'echet mean tree applies to both sample and population means. 
As illustrated in Example \ref{ex:improv}, this condition is a strict improvement on the condition in \cite[Theorem 1]{bhvmean}.

 Our rigorous test and algorithm for identifying splits in  the population and sample mean is a statistically valid method as opposed to arbitrary removal of short edges, illustrated by the Apicomplexa example and the brain data example. 

 Further, our new two-sample test can distinguish distributions which share the same sticky mean, illustrated by the baboon example. 


 The baboon data also suggests a link between lack of resolution and underlying biological mechanisms such as 
populations believed to have undergone extensive hybridization. 
The effect on gene trees can be represented by SPR which results in highly dispersed samples in BHV tree space and unresolved Fr\'echet means. 
Although lack of resolution in the Fr\'echet mean suggests a loss of information (e.g. the sample mean does not move when data change), the directional derivatives of the Fr\'echet function may retain information about the biological processes relating species trees to gene trees. 


\bibliographystyle{Chicago}
\bibliography{references}

\newpage
\appendix

\section{Proofs}
\label{sec:proofs}

\subsection{Proof of Theorem \ref{thm:sturms-algo-not-singular}}
\label{sec:proof_sturm}

\begin{proof}

	Let $Y_j, j \in \NN$ be the sequence of i.i.d. random variables that is used to generate the inductive means $\hat \mu_j$ of Sturm's algorithm. 
   By hypothesis there is a split $s$ such that
    $$ \cT_s : =\{T\in \cT: E(T) \ni s \in C(\mu) \setminus E(\mu)\}\neq \emptyset\,.$$
   Further, by construction, if $\hat{\mean}_{k-1} \in \ort$ and $Y_j \in \mathcal{T}_s$. i.e. $\vert s\vert_{Y_j} = \lambda$ for some $\lambda>0$, then $\vert s\vert_{\hat{\mean}_j} = \frac{\lambda}{j+1}$ by \eqref{eq:bhv_geod_splits}, 
   i.e. $\hat{\mean}_j \notin \ort$. In consequence
   \begin{eqnarray}\label{eq:sturms-algo-oscialltes-proof} \{\hat\mu_{j-1}\in \ort, Y_j \in \cT_s \}&\subseteq& \{\hat{\mean}_j \notin \ort\} \,.\end{eqnarray}
    Since 
	\begin{align*}
		\sum_{j=1}^{\infty} \mathbb{P}\{Y_j \in \mathcal{T}_s\} = \sum_{j=1}^{\infty} \frac{\lvert \mathcal{T}_s\rvert}{\vert \cT\vert} = \infty\,,
	\end{align*}      
      application of the Borell-Cantelli Lemma, e.g.  \citet[Theorem 2.3.7]{durrett}, yields in conjunction with (\ref{eq:sturms-algo-oscialltes-proof}), as asserted:
      $$ 1 = \mathbb P\{Y_j \in \cT_s\, \infty\mbox{ often}\} \leq \mathbb P \{\hat\mu_{j}\not\in \ort\, \infty\mbox{ often} \}\,. $$
 

\end{proof}

\subsection{Proof of Theorem \ref{thm:main-strata}}
\begin{figure}
	\centering
	\subfloat[\it Case 1.]{\includegraphics[width=7cm]{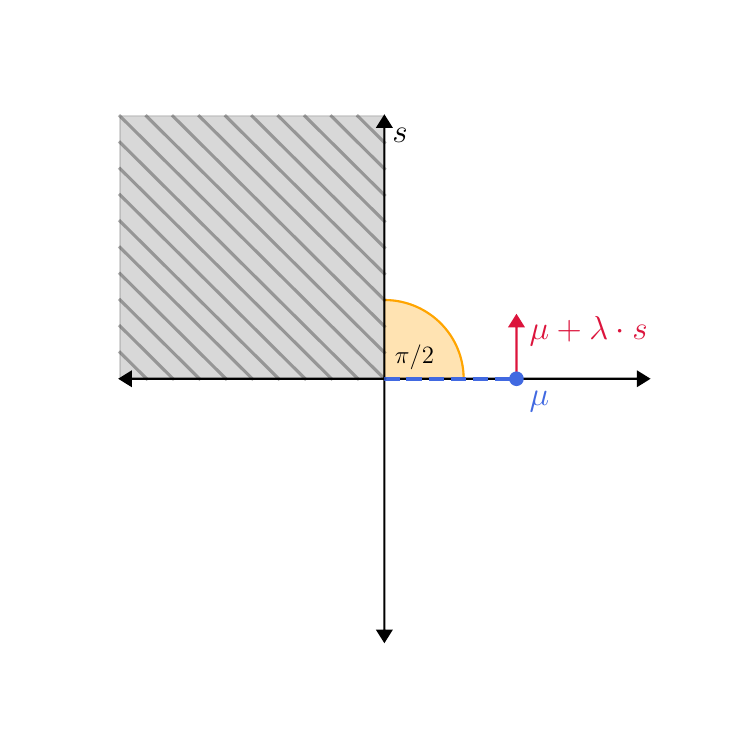}}\qquad
	\subfloat[\it Case 3.]{\includegraphics[width=7cm]{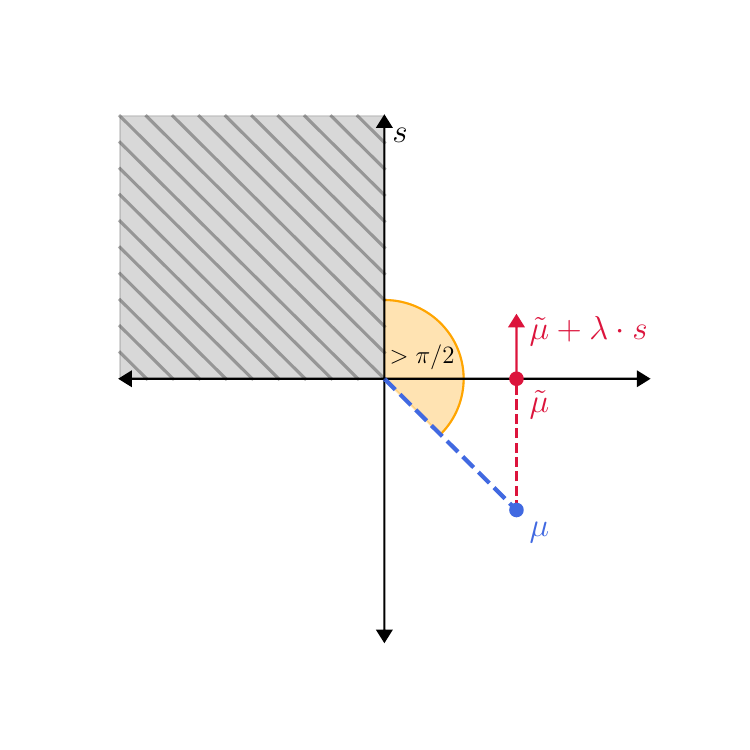}}\qquad

	\caption{\it An illustration of cases 1 and 3 in the proof of Theorem \ref{thm:main-strata} in $\mathbb{T}_4$.}
	\label{fig:proof}
\end{figure}


    We first show that (i) implies (ii). Indeed, negating the statement in (i), we obtain $\nabla_{\sigma_s} F_\prb (T) < 0 \implies \angle_T(\sigma_s, \mean) < \pi/2$.
    This, in turn, implies that $\mean$ lies in an stratum bordering $T + 1 s$. Since $E(T) \subset E(\mean)$, trees in the 
    stratum of $\mean$ must contain $s \in E(\mean)$, yielding (ii).

    Next we show (i).
	By 
    Lemma~\ref{lemma:const_deriv}, the directional derivative 
	$\nabla_{\sigma_s} F_\prb(T)$ does not depend on the position of 
	$T \in \ort$. Therefore, we can assume that 
	$T = \prj_{\overline{\ort}} (\mean) \in \ort$.
    We distinguish between three cases, see Figure~\ref{fig:proof} for visualisations
    of cases 1 and 3.

	\noindent \textbf{Case 1:} $\angle_T(\mean, s) = \pi/2$.

	If $\angle_T(\mean, \sigma_s) = \pi/2$, the topology of $\mean$ cannot be
	fully resolved and $s \in C(\mean)$. Therefore, it follows that
	$\nabla_{\sigma_s} F_\prb(\mean) \geq 0$ by 
    Theorem~\ref{thm:optimality}.
	Now, let $\gamma_T^\mean$ denote the unit speed
	geodesic from $T$ to $\mean$.
	By Lemma~\ref{lemma:const_deriv},
    we also have for every $\kappa \in (0, d(T, \mean)]$
	that $\nabla_{\sigma_s} F_\prb(\gamma_T^\mean(\kappa)) = 
	\nabla_{\sigma_s} F_\prb(\mean) \geq 0$. 
	Since  $F_\prb$ is convex, e.g. \cite[proof of Proposition 4.3]{sturm}, the derivative traversing 
	from $\gamma_T^\mean(\kappa)$ in direction $\sigma_s$ is nondecreasing, 
	i.e. for all $\lambda >0 $ and all $\kappa \in (0, d(T, \mean)]$, we 
	have $F_\prb(\gamma_T^\mean(\kappa)) \leq F_\prb(\gamma_T^\mean(\kappa) + \lambda \cdot s)\,.$
	Taking the limit $\kappa \to 0$, we obtain by continuity of $F_\prb$ that 
	$F_\prb(T) \leq F_\prb(T + \lambda \cdot s)$.
	Thus,
	\begin{align*}
		\nabla_{\sigma_s} F_\prb(T) = \lim_{\lambda \searrow 0} \frac{F_\prb(T + \lambda \cdot s) - F_\prb(T)}{\lambda} \geq 0 \,.
	\end{align*}

	\noindent \textbf{Case 2:} $\angle_T(\mean, \sigma_s) = \pi$.

	In this case, the unit speed geodesic $\gamma_\mean^{T + 1 \cdot s}$ from $\mean$
	to $T + 1 \cdot s$ passes through the star tree. As $\mean$ is the
	unique minimizer of the convek $F_\prb$, the function $F_\prb \circ
		\gamma_\mean^{T + 1 \cdot s}$ is non-decreasing. Consequently,
	we have that
	\begin{align*}
		\nabla_{\sigma_s} F_\prb(T) = \lim_{\lambda \searrow 0} \frac{F_\prb\big(\gamma_\mean^{T + 1 \cdot s}(d(T, \mean) + \lambda)\big) - F_\prb(\overbrace{\gamma_\mean^{T + 1 \cdot s}(d(T, \mean)}^{=T})}{\lambda} \geq 0 \,.
	\end{align*}

	\noindent \textbf{Case 3:} $\pi / 2 < \angle_T(\mean, \sigma_s) < \pi$.

	In this case, $s$ must be compatible with some the splits in $E(\mean) 
	\setminus E(T)$ but not with all of them. Consider the tree $\tilde{\mean}$ 
	obtained by removing the splits of $\mean$ incompatible with $s$. 
	At $\tilde{\mean}$, we have that $\angle_{\tilde{\mean}}(\sigma_s, \mean) = 
	\pi$ due to the incompatibility of $s$ and the deleted splits. 
	Since $\angle_{\tilde{\mean}}(\sigma_s, \mean) = \pi$, the geodesic
	$\gamma_{\mean}^{\tilde{\mean}+1\cdot s}$ must pass through $\tilde{\mean}$.
	Furthermore, since $\mean$ is the unique Fr\'echet mean, 
	$F_\prb \circ \gamma_{\mean}^{\tilde{\mean}+1\cdot s}$ is non-decreasing. 
	Hence, $\nabla_s F_\prb(\tilde{\mean}) \geq 0$.
 
	Next, consider the geodesic $\gamma_T^{\tilde{\mean}}$. Again, we
	have that for every $\kappa \in (0, d(T, \tilde{\mean})]$ that
	$\nabla_{\sigma_s} F_\prb(\gamma_T^{\tilde{\mean}}(\kappa)) = \nabla_{\sigma_s} F_\prb(\tilde{\mean}) \geq 0$.
	Consequently, we have for every $\lambda >0$ and $\kappa \in (0, d(T, \tilde{\mean})]$
	by the convexity of $F_\prb$ that $F_\prb(\gamma_T^{\tilde{\mean}}(\kappa)) \leq F_\prb(T + \lambda \cdot s)$
	Taking the limit $\kappa \to 0$, we see $F_\prb(T) \leq F_\prb(T + \lambda \cdot s)$
	for every $\lambda >0$. Thus, $\nabla_{\sigma_s} F_\prb(T) \geq 0$, which is assertion (i).
\qed

\subsection{Proof of Corollary \ref{cor:sample}}

	The Fr\'echet function at the star tree is given by
	\begin{align*}
		\frf (\stree)= \frac{1}{2 \cdot \lvert \mathcal{T} \rvert}
		\sum_{T \in \mathcal{T}} \sum_{x \in E(T)}
		\lvert x \rvert_T^2\,.
	\end{align*}
	Furthermore, by Remark~\ref{rem:dist_single_split}
	\begin{align*}
		2 \cdot \lvert \mathcal{T} \rvert \cdot \frf (\stree + \lambda \cdot s) = \quad &
		\sum_{\footnotesize\begin{array}{c}T \in \mathcal{T}\\ s \in E(T)\end{array}} \left( (\lambda - \lvert s \rvert_T)^2
		+ \sum_{s\neq x \in E(T)} \lvert x \rvert_T^2 \right)
		+ \sum_{\footnotesize\begin{array}{c}T \in \mathcal{T}\\ s \in C(T) \\
  s\not\in E(T)\end{array}} \left( \lambda^2 +
		\sum_{x \in E(T)} \lvert x \rvert_T^2 \right)                                                 \\
		+                                                                               & \sum_{\footnotesize\begin{array}{c}T \in \mathcal{T}\\ s \not\in C(T) \end{array}} \left(
		\left(\lambda + \sqrt{\sum_{C(s)\not\ni x \in E(T)
  }
				\lvert x \rvert_T^2}\right)^2 + \sum_{\footnotesize\begin{array}{c}x \in E(T)\\ x\in C(s) \end{array}}
		\lvert x \rvert_T^2 \right) \,.                                                                                                 \\
	\end{align*}
	Thus, we get
	\begin{align*}
		\frf(\stree + \lambda \cdot s)- \frf(\stree) = \frac{\lambda^2}{2} + \frac{\lambda}
		{\lvert \mathcal{T} \rvert} \cdot \left(\sum_{T \in \mathcal{T}}
		\sqrt{\sum_{C(s)\not\ni x \in E(T)}  \lvert x \rvert_T^2} -
		\sum_{T \in \mathcal{T}} \lvert s \rvert_T \right)\,.
	\end{align*}
	And hence
	\begin{align*}
		\nabla_{\sigma_s} F_\mathcal{T} (\stree) = \frac{1}{\lvert \mathcal{T} \rvert} \left(\sum_{T \in \mathcal{T}}
		\sqrt{\sum_{C(s)\not\ni x \in E(T)} \lvert x \rvert_T^2}
		-\sum_{T \in \mathcal{T}} \lvert s \rvert_T \right)\,.
	\end{align*}
	Theorem \ref{thm:main-strata} then yields the claim.
\qed

\subsection{Proof of Theorem \ref{thm:tancone}}

We first state and show two auxiliary lemmata and then prove the theorem.

\begin{lemma}
	\label{lemma:proj}

	Let $T \in \ort$. If $\bhv \ni T^\prime \in \prj_{\overline{\ort}}^{-1}(\{T\})$, we
	have $\lvert s \rvert_{T} = \lvert s \rvert_{T^\prime}$ for every $s \in E(T)$.
\end{lemma}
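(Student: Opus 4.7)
The plan is to exploit the first-order optimality of the metric projection along a simple one-parameter family, and then read off the conclusion from the explicit distance formula \eqref{eq:bhv-distance-explicit}. Fix $s \in E(T)$. Since $T \in \ort$, the split $s$ is one of the splits defining the topology of $\ort$, so varying only the length of $s$ keeps us inside $\overline{\ort}$: for $\lambda \geq 0$, let $T_\lambda$ agree with $T$ on all split lengths except that $\lvert s \rvert_{T_\lambda} = \lambda$. Then $T_\lambda \in \overline{\ort}$ for every $\lambda \geq 0$, with $T_{\lvert s \rvert_T} = T$. Because $T = \prj_{\overline{\ort}}(T')$, the function
\[
\phi(\lambda) := d(T', T_\lambda)^2, \qquad \lambda \geq 0,
\]
attains its minimum over $[0,\infty)$ at $\lambda = \lvert s \rvert_T$, which lies in the interior of $[0,\infty)$ since $\lvert s \rvert_T > 0$.

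Next I would compute how $\phi$ depends on $\lambda$ through the support-pair formula \eqref{eq:bhv-distance-explicit}. For $\lambda > 0$, the topology of $T_\lambda$ coincides with that of $T$, so the combinatorial role of $s$ in any geodesic support pair of $(T_\lambda, T')$ is determined by $s$ and $T'$ alone, and falls into exactly one of three cases: (1) $s \in E(T')$, in which case $s \in A_0 = B_0$ contributes $(\lambda - \lvert s \rvert_{T'})^2$ to $\phi$; (2) $s \in C(T') \setminus E(T')$, in which case $s \in A_0$ contributes $\lambda^2$ to $\phi$; or (3) $s \notin C(T')$, in which case $s$ lies in some $A_i$ with $i \geq 1$, contributing through $\lVert A_i(\lambda) \rVert^2 = \lambda^2 + C$ inside the term $(\lVert A_i(\lambda) \rVert + \lVert B_i \rVert)^2$, for some constant $C \geq 0$.

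In cases (2) and (3) the derivative of the $\lambda$-contribution is strictly positive on $(0,\infty)$, the case-(3) check being $\tfrac{d}{d\lambda}(\sqrt{\lambda^2+C}+\lVert B_i\rVert)^2 = 2(\sqrt{\lambda^2+C}+\lVert B_i\rVert)\lambda/\sqrt{\lambda^2+C}$. Hence $\phi$ would be strictly increasing for $\lambda > 0$, forcing the minimum at $\lambda = 0$ and contradicting $\lvert s \rvert_T > 0$. Only case (1) is therefore possible, and minimization of $(\lambda - \lvert s \rvert_{T'})^2$ at $\lambda = \lvert s \rvert_T$ yields $\lvert s \rvert_T = \lvert s \rvert_{T'}$, as required. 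The main technical subtlety is that the specific support pair — and hence the constant $C$ in case (3) — can change at finitely many breakpoint values of $\lambda$; I would dispose of this by noting that $\phi$ is convex on $[0,\infty)$, because the squared distance to a fixed point is convex on the Hadamard space $\bhv$ and $\overline{\ort}$ is a convex subset. Convexity guarantees that one-sided derivatives exist everywhere and that the monotonicity analysis above rules out cases (2) and (3) on both sides of $\lvert s \rvert_T$.
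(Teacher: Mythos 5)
Your proof is correct and takes a genuinely different route from the paper's. The paper argues globally: it splits on whether $E(T) \subseteq E(T')$; in the affirmative case it reads the conclusion directly off the cone-path distance formula, and otherwise it constructs the auxiliary tree $\widetilde{T}$ with $E(\widetilde{T}) = E(T)\cap E(T')$ and shows $d(\widetilde{T},T') < d(T,T')$ via \eqref{eq:bhv-distance-explicit}, contradicting that $T$ is the projection. You instead fix one split $s\in E(T)$ at a time, vary only its length along the geodesic ray $\lambda\mapsto T_\lambda$ inside $\overline{\ort}$, and use first-order optimality of the projection on that ray, splitting on the combinatorial relation of $s$ to $T'$ (shared, compatible-but-absent, or incompatible). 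This per-split argument never needs to build a comparison tree, and it treats the compatible-but-absent subcase explicitly --- a scenario the paper's dichotomy silently folds into its second case, where the asserted existence of some $i\geq 1$ with $\|A_i\|_T>0$ does not immediately follow without an extra argument of exactly your kind. Your convexity observation is the right way to deal with the finitely many $\lambda$ at which the geodesic support pair of $(T_\lambda,T')$ can switch: convexity of $\phi$ along the geodesic ray gives monotonicity and right-continuity of $\phi'_+$, so strict positivity of the derivative away from breakpoints propagates to all of $(0,\infty)$, forcing the minimum to $\lambda=0$ and the contradiction you want.
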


\begin{proof}
	Let $(A_0, \ldots A_k)$, $(B_0, \ldots B_k)$ be the support pair of the
	geodesic from $T$ to $T^\prime$.

	If $E(T) \subseteq E(T^\prime)$, due to (\ref{eq:bhv-distance-explicit}), the distance is given by
	\begin{align*}
		d(T, T^\prime) = \sqrt{\sum_{s \in E(T)}  (\lvert s \rvert_{T} - \lvert s \rvert_{T^\prime})^2 + \sum_{s \in E(T^\prime) \setminus E(\tilde{T})}\lvert s \rvert_{T^\prime}^2}\,.
	\end{align*}
	The first term $\sum_{s \in E(T)}  (\lvert s \rvert_{T} - \lvert s \rvert_{T^\prime})^2$
	is minimal if and only if it is 0. Hence, the claim follows.

	Next, assume that $E(T) \not \subseteq E(T^\prime)$. As $E(T) \neq E(T)
		\cap E(T^\prime)$, there is $i \in \{1,2,\ldots,k\}$ such that $\lVert A_i
		\rVert_T > 0$.
	Define $\widetilde{T} \in \overline{\ort}$ with
	$E(\widetilde{T}) = E(T) \cap E(T^\prime)$ by $\lvert s \rvert_{\tilde{T}}
		= \lvert s \rvert_{T^\prime}$ for all $s\in E(\widetilde{T})$.  By definition, it holds $\bigcup_{i=1}^k B_i = E(T^\prime) \setminus E(T) = E(T^\prime) \setminus E(\tilde{T})$.
	Thus, we obtain, again by (\ref{eq:bhv-distance-explicit}),
	\begin{align*}
		d(T, T^\prime) & = \sqrt{\sum_{s \in A_0}  (\lvert s \rvert_{T_1} - \lvert s \rvert_{T_1})^2 + \sum_{i =1}^k(\lVert A_i \rVert + \lVert B_i \rVert)^2}
		\geq \sqrt{\sum_{i =1}^k(\lVert A_i \rVert + \lVert B_i \rVert)^2}                                                                                     \\
		               & > \sqrt{\sum_{i =1}^k \lVert B_i \rVert^2}
		= \sqrt{\sum_{s \in E(T^\prime) \setminus E(\tilde{T})}\lvert s \rvert_{T^\prime}^2}
		= d(\tilde{T}, T^\prime)\,,
	\end{align*}
	a contradiction to the assumption that $P_{\overline{\ort}}(T^\prime) = T \neq \widetilde{T}$.

\end{proof}

We will also need the following Lemma, c.f. \cite[Theorem 2.1]{Owen2011only}.

\begin{lemma}
    \label{lemma:shared_edges}
    
	Let $N\geq4$ and $T_1, T_2 \in \bhv$. Let $\cE \subseteq E(T_1) \cap E(T_2)$.
	  Then, there are lower dimensional BHV spaces $\TT_{k_j}$ of dimensions $k_j$ and pairs of
	trees $(T_1^j, T_2^j)$, $j=1,\ldots,\vert \cE\vert +1$ such that
	\begin{enumerate}[label=(\roman*)]
		\item $T_1^j, T_2^j \in \mathbb{T}_{k_j}$ for all $j \in \{1,2,\ldots,\vert \cE\vert +1\}$,
		\item 
  $\sum_{j=1}^{\vert \cE\vert+1} (k_j - 2) = (N-2)-\vert \cE\vert$,
		\item $d^2(T_1, T_2) = \sum_{s \in \cE}(\lvert s \rvert_{T_1} - \lvert s \rvert_{T_2})^2 + \sum_{j=1}^{\vert \cE\vert +1} d^2(T_1^j, T_2^j)$.
	\end{enumerate}
\end{lemma}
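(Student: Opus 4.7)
The plan is to proceed by induction on $|\mathcal{E}|$. The base case $|\mathcal{E}| = 0$ is immediate: take $T_1^1 = T_1$, $T_2^1 = T_2$ and $k_1 = N$, so that conditions (i)--(iii) hold trivially.

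For the inductive step pick a distinguished split $s \in \mathcal{E}$, writing $s = A \vert B$. Since $s \in E(T_1) \cap E(T_2)$, every other split of $T_1$ or $T_2$ is compatible with $s$ and is therefore entirely supported on side $A$ or side $B$; in particular the remaining shared splits partition as $\mathcal{E}\setminus\{s\} = \mathcal{E}^A \sqcup \mathcal{E}^B$. Let $T_i^A \in \mathbb{T}_{|A|+1}$ (and $T_i^B \in \mathbb{T}_{|B|+1}$) be obtained by restricting $T_i$ to its side-$A$ (resp.\ side-$B$) splits and attaching a pendant leaf representing the severed $s$-edge; pendant lengths are irrelevant since $d_{\bhv}$ ignores them.

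The heart of the proof is the one-split identity
\begin{align*}
    d^2(T_1,T_2) = (|s|_{T_1}-|s|_{T_2})^2 + d^2(T_1^A,T_2^A) + d^2(T_1^B,T_2^B).
\end{align*}
To prove it, let $(\mathcal{A},\mathcal{B})$ be the support pair of $\gamma_{T_1}^{T_2}$. Since $s \in A_0 = B_0$, the explicit formula~\eqref{eq:bhv-distance-explicit} already separates the contribution $(|s|_{T_1}-|s|_{T_2})^2$. For each $i \geq 1$, split $A_i = A_i^A \sqcup A_i^B$ and $B_i = B_i^A \sqcup B_i^B$ according to the side of $s$. The refined sequence defines an alternative piecewise-linear path from $T_1$ to $T_2$ whose squared length relates to the original by Cauchy--Schwarz,
\begin{align*}
    (\|A_i\|+\|B_i\|)^2 \geq (\|A_i^A\|+\|B_i^A\|)^2 + (\|A_i^B\|+\|B_i^B\|)^2,
\end{align*}
with equality precisely when $\|A_i^A\|/\|B_i^A\| = \|A_i^B\|/\|B_i^B\|$. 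Uniqueness of geodesics in the Hadamard space $\bhv$ forces both paths to have the same length, so Cauchy--Schwarz must be an equality. One then checks that the restricted sequences $(A_i^A, B_i^A)_i$ and $(A_i^B, B_i^B)_i$ satisfy the support-pair and Owen--Provan conditions~\eqref{eq:bhv-geodesics1}--\eqref{eq:bhv-geodesics2} for geodesics in the respective lower-dimensional BHV spaces, so~\eqref{eq:bhv-distance-explicit} identifies the two bracketed sums with $d^2(T_1^A,T_2^A)$ and $d^2(T_1^B,T_2^B)$.

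Once the one-split identity is in hand, applying the induction hypothesis to $(T_1^A,T_2^A)$ with shared splits $\mathcal{E}^A$ and to $(T_1^B,T_2^B)$ with $\mathcal{E}^B$ yields $|\mathcal{E}^A|+1$ and $|\mathcal{E}^B|+1$ sub-pairs respectively, so by $|\mathcal{E}^A|+|\mathcal{E}^B| = |\mathcal{E}|-1$ they concatenate into $|\mathcal{E}|+1$ pairs, and (iii) follows by telescoping. The dimension count (ii) is then a bookkeeping exercise tracking the convention used to attribute the pendant leaves introduced at each severed shared edge. I expect the main technical hurdle to be verifying the strict Owen--Provan condition~\eqref{eq:bhv-geodesics2} for the restricted sequences: a hypothetical refining partition of $A_i^A, B_i^A$ witnessing a violation has to be lifted back to a refining partition of the original $A_i, B_i$ that violates the condition for the original support pair, and this transfer argument is precisely where the Cauchy--Schwarz equality established above becomes essential.
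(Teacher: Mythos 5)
Your inductive structure is the same as the paper's iterative one: both reduce to a one-split bisection identity and then recurse over the shared splits. The real divergence is in how that identity is obtained. The paper takes it directly from \cite[Theorem 2.1]{Owen2011only} (Owen's theorem that bisecting two trees at a common split decomposes the squared BHV distance exactly as required), so the remaining work there is only the case analysis for where the next chosen split lands and the dimension bookkeeping. You instead attempt to re-derive the bisection identity from the support-pair characterization of geodesics plus Cauchy--Schwarz, which is a legitimate alternative route, but your proposal does not close it.

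Two concrete issues. First, ``uniqueness of geodesics in the Hadamard space $\bhv$ forces both paths to have the same length'' is not a valid inference; uniqueness says nothing about lengths of distinct paths. What you actually need is minimality: the refined sequence, \emph{if} it gives a realizable piecewise-linear path, is no longer than the geodesic by your Cauchy--Schwarz step, while a geodesic is shortest, so the lengths coincide and Cauchy--Schwarz is tight. That ``if realizable'' carries weight you have not discharged --- you must check the refined segments stay in valid orthants, which relies on the fact that $A$-side and $B$-side splits are automatically mutually compatible across $s$. Second, the hurdle you flag at the end (verifying the Owen--Provan conditions \eqref{eq:bhv-geodesics1}--\eqref{eq:bhv-geodesics2} for the restricted sequences) is not actually needed: for the lower bound it suffices that the restricted sequences define valid paths in $\TT_{|A|+1}$ and $\TT_{|B|+1}$, whose lengths dominate $d(T_1^A,T_2^A)$ and $d(T_1^B,T_2^B)$ from above; the matching upper bound comes from concatenating the two lower-dimensional geodesics with linear interpolation of $|s|$. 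Meanwhile the dimension count (ii), which you defer as bookkeeping, is exactly what the paper carries out carefully (cases (a) and (b) in its proof). As it stands the proof is incomplete precisely at the step you call its heart; the economical fix, as in the paper, is to invoke \cite[Theorem 2.1]{Owen2011only} for the bisection identity and spend your effort on the recursion and dimension count.
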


\begin{proof}

    If $\cE = \emptyset$ the assertion is trivial. 
    Hence we now assume that $\lvert\cE\rvert =: m\geq 1$. We will then iteratively construct the tuples $(T_1^j, T_2^j)$, $j=1,\ldots,\cE$ by bisection at common splits $s\in \cE$.
    
    Suppose the root's label is 0. We begin by choosing a split
        \begin{equation}
        \label{eq:bisec_0}
    A\vert B = s \in \argmin_{\substack{C \vert D \in \cE\,,\\ 0 \in D}} \lvert C \rvert \,.
    \end{equation}

    By \cite[Theorem 2.1]{Owen2011only}, bisection of $T_1, T_2$ at  the shared split $s$ yields four trees $(T_1^1, T_2^1),(\widetilde{T}_1^1, \widetilde{T}_2^1)$, where a new leaf $v$ is added in lieu of the deleted split: the set of leaves of $T_1^1, T_2^1$ is given by $A$ and $v$ serves as root, the set of leaves of $\widetilde{T}_1^1,\widetilde{T}_2^1$ is $(B \setminus\{0\})\cup \{v\}$ where $0$ remains the root. Setting $k_1 := \lvert A \rvert$, and $\tilde{k}_1 := \lvert B \rvert - 1 + 1$, \cite[Theorem 2.1]{Owen2011only} showed that      
    $T_1^1,T_2^1 \in \mathbb{T}_{k_1}$, $\widetilde{T}_1^1,\widetilde{T}_2^1 \in  \mathbb{T}_{\tilde{k}_1}$, 
    \begin{equation}
        \label{eq:bisec_1}
        d^2(T_1, T_2) = \left(\lvert s \rvert_{T_1} - \lvert s \rvert_{T_2}\right)^2 + d^2\left(T_1^1, T_2^1\right) + d^2\left(\widetilde{T}_1^1, \widetilde{T}_2^1\right)\,,
    \end{equation}
     and we add at once that
    \begin{equation}
        \label{eq:bisec_2}
        (k_1-2)+(\tilde{k}_1-2) = \lvert A \rvert + \lvert B \rvert - 4 = (N-2) - 1\,.
    \end{equation}
    If $m=1$, setting $k_2 = \tilde{k}_1$, and $T_i^2:= \widetilde{T}_i^1 $ for $i=1,2$,  the assertion follows from (\ref{eq:bisec_1}) and (\ref{eq:bisec_2}).
    
    In case of $m>1$, we choose the next split
        \[A'\vert B' = s' \in \argmin_{\substack{C \vert D \in \cE \setminus\{s\}\,,\\ 0 \in D}} \lvert C \rvert \,.
    \]
    Since $s,s^\prime$ are compatible, $0 \in B\cap B'$ and $ B \cap A^\prime x= \emptyset \overset{A^\prime \subset A \cup B}{\implies} A^\prime \subset A$, yielding a contradiction to minimality of $A$ in (\ref{eq:bisec_0}), only one of the following two cases holds
    \begin{enumerate}
        \item[(a)] $A \cap A^\prime = \emptyset \overset{A \subset A^\prime \cup B^\prime}{\implies} A \subset B^\prime \setminus\{0\}$ and $A^\prime \subset B \setminus\{0\}$
        \item[(b)] $A \cap B^\prime = \emptyset \overset{B \subset A^\prime \cup B^\prime}{\implies} A \subset A^\prime$ and $B^\prime \subset B$.
    \end{enumerate}
    Recalling that $\widetilde{T}_i^1$ are trees over the leaf set $(B\setminus\{0\})\cup\{v\}$ with root $0$, in case of (a) the split $s'$ corresponds to the split $\widetilde{s}^\prime :=A^\prime|(B^\prime \setminus A)\cup\{v\}$ of $\widetilde{T}_i^1$, $i=1,2$; whereas in case (b) it corresponds to $\widetilde{s}^\prime :=(A^\prime \setminus A)\cup\{v\}|B^\prime$. Deleting this split results in a new vertex $v'$ and, once again invoking \cite[Theorem 2.1]{Owen2011only}, we obtain four trees (see Figure \ref{fig:proof-bisection-lem}:

        \begin{figure}
        \centering
        \subfloat[\it]{\includegraphics[width=.45\textwidth]{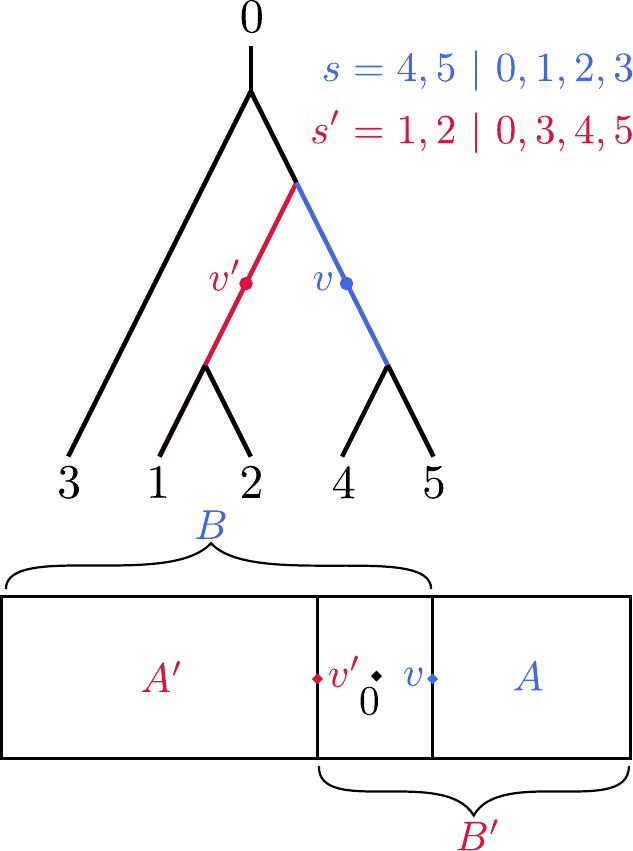}}\qquad
	   \subfloat[\it ]{\includegraphics[width=.45\textwidth]{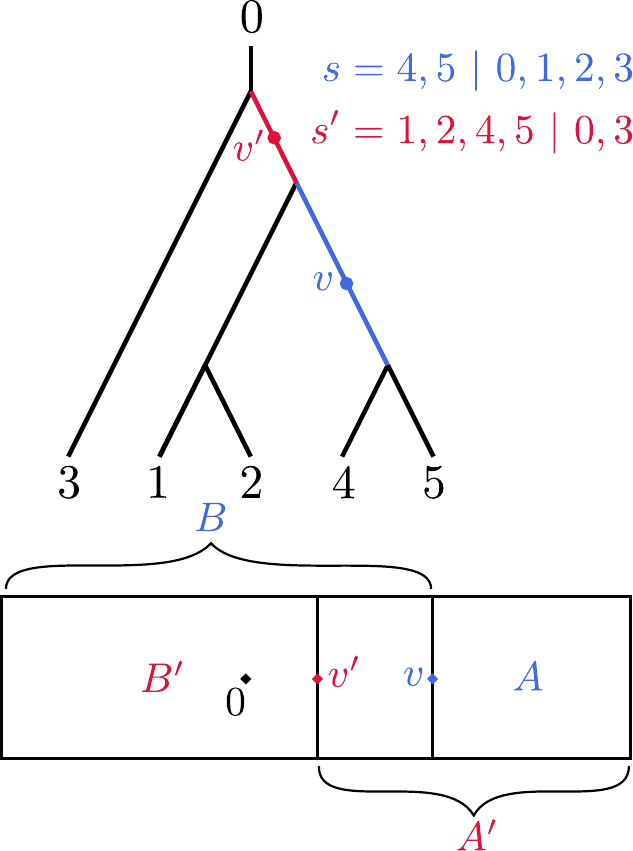}}\qquad
        \caption{\it Schematic display of split sets for the trees in the proof of Lemma \ref{lemma:shared_edges} with examples on the top. Left: case (a), right: case (b)}
        \label{fig:proof-bisection-lem}
    \end{figure}

    The first two are $T^2_1,T^2_2 \in \TT_{k_2}$ with leaf set $A^\prime$ (case (a)) and $(A^\prime \setminus A)\cup\{v\}$ (case (b)), respectively, with new root $v'$, where we have set $k_2 := \lvert A^\prime\lvert$ (case (a)) and $k_2 := \lvert A^\prime\rvert - \lvert A\rvert + 1$ (case (b)), respectively. 
    
    The other two trees are $\widetilde{T}^2_1,\widetilde{T}^2_2\in \TT_{\widetilde{k}_2} $ with leaf set $(B^\prime \cup\{v,v'\})\setminus A $ (case (a)) and $B^\prime \cup\{v'\}$ (case (b)), respectively, with root $0$, where we have set $\widetilde{k}_2 := \lvert B^\prime \rvert - \lvert A \rvert +1$ (case (a)) and $:= \lvert B^\prime\lvert$ (case (b)), respectively. Due to \cite[Theorem 2.1]{Owen2011only}, we obtain

    \begin{equation}
        \label{eq:bisec_3}
        d^2(\widetilde{T}^1_1, \widetilde{T}^1_2) = \left(\lvert \widetilde{s}^\prime  \rvert_{\widetilde{T}^1_1} - \lvert \widetilde{s}^\prime \rvert_{\widetilde{T}^1_2}\right)^2 + d^2\left(T_1^2, T_2^2\right) + d^2\left(\widetilde{T}_1^2, \widetilde{T}_2^2\right)\,,
    \end{equation}
     and at once
    \begin{equation}
        \label{eq:bisec_4}
        (k_2-2)+(\tilde{k}_2-2) = \lvert A' \rvert + \lvert B' \rvert - \lvert A \rvert- 3 = N+1 - k_1 - 3 = N-2 -2 -(k_1-2) \,.
    \end{equation}

    If $m=2$, setting $k_3 := \tilde{k}_2$, and $T_i^3:= \widetilde{T}_i^2 $ for $i=1,2$,  the assertion follows from (\ref{eq:bisec_1}),  (\ref{eq:bisec_2}), (\ref{eq:bisec_3}) and (\ref{eq:bisec_2}), since $\lvert \widetilde{s}^\prime \rvert_{\widetilde{T}^1_i} = \lvert s^\prime \rvert_{T_i}$ for $i=1,2$.

    If $m>2$ iteration of the above step for all remaining splits in $\cE\setminus\{s,s'\}$ yields the assertion.

    \end{proof}

\noindent
{\it Proof of Theorem \ref{thm:tancone}.}
	As shown in \cite[Lemma 2]{bhvsticky}, we have $\Sigma_T = 
    \Sigma_T^\parallel \ast \perpdir$.
	Since $\ort$ is isometric to a Euclidean orthant of dimension $m=N-l-2$, one 	has by the definition of $\Sigma_T^\parallel$ that 	$\Sigma_T^\parallel \cong S^{N-l-3}$. 
	Therefore, we are left to understand the structure of $\perpdir$.
	Let $\sigma_1, \sigma_2 \in \perpdir$ with corresponding 
	geodesics $\overline{\gamma}_1, \overline{\gamma}_2: [0,1]\to\bhv$ 
    starting at $T$ with directions $\sigma_1,\sigma_2$, respectively. 
    We set $T_i = \gamma_i(1)$, $i=1,2$. By definition of $\perpdir$, we have
    that $T_1, T_2 \in \prj_{\overline{\ort}}^{-1}(\{T\})$. We know by 
    Lemma \ref{lemma:proj} that 
    \begin{equation}
    \label{eq:same_lengths}
        \lvert T_1\rvert_s = \lvert T_2 \rvert_s =  \lvert T \rvert_s
    \end{equation}
    for all $s\in E(T)$. In particular, we have $E(T) \subset E(T_1)\cap E(T_2)$.

    Now  Next, we apply Lemma \ref{lemma:shared_edges} (for notation purposes further below, we take recourse to additional tildes here) for $T_1, T_2$ and $\cE = E(T)$, giving us
    pairs of trees $(\widetilde{T}_1^j, \widetilde{T}_2^j) \in \TT_{\tilde{k}_j}$, $j=1,\ldots,\lvert E(T) \rvert +1$ such that
    \begin{align}
        \label{eq:dimension_sum}
        \sum_{j=1}^{\lvert E(T) \rvert + 1} (\tilde{k}_j - 2) = N-2-\lvert E(T) \rvert =l\,,
    \end{align}
    and
    \begin{equation}
        \label{eq:distance_orth}
	       d^2(T_1, T_2) = \sum_{s \in \cE} (\lvert s \rvert_{T_1} - \lvert s \rvert_{T_2})^2 + \sum_{j=1}^{\lvert E(T) \rvert + 1} d^2(\widetilde{T}_1^j, \widetilde{T}_2^j) \stackrel{\eqref{eq:same_lengths}}{=} \sum_{j=1}^{\lvert E(T) \rvert + 1} d^2(\widetilde{T}_1^j, \widetilde{T}_2^j) \,. 
    \end{equation}
    At this point, we might have some $\tilde{k}_j=2$ for some $j \in \{1,\ldots, \lvert E(T) \rvert +1\}$. 
    As $\TT_2$ only contains its star tree one then has $d(\widetilde{T}_1^j,\widetilde{T}_2^j)=0$.
    Thus, neither does such $\tilde{k}_j$ contribute to \eqref{eq:dimension_sum} nor do the trees $\widetilde{T}_1^j,\widetilde{T}_2^j \in \TT_{\tilde{k}_j}$ contribute to \eqref{eq:distance_orth}.
    
    Therefore, we can safely remove every $\tilde{k}_j=2$, yielding $k_1, \ldots k_m$ and $(T_1^j, T_2^j) \in \TT_{k_j}$ with 
    $j=1,2,\ldots, m$ and $m \leq E(T)+1$ such that 
    \begin{align}
        \label{eq:dimension_sum_trimmed}
        \sum_{j=1}^{m} (k_j - 2) =l\,,
    \end{align}
    and
    \begin{equation}
        \label{eq:distance_orth_trimmed}
	       d^2(T_1, T_2) = \sum_{j=1}^{m} d^2(T_1^j, T_2^j) \,. 
    \end{equation}
    Another application of Lemma \ref{lemma:shared_edges} for $T$ and $T_i$, $i=1,2$, since $\cE = E(T)$ and hence $T^j = \stree \in \TT_{k_j}$ for all $j=1,\ldots,m$ yields at once
    \begin{align}
        \label{eq:distance_orth_reference}
	       d^2(T, T_i) = \sum_{j=1}^{m} d^2(\stree, T_i^j) \,. 
    \end{align}
    We have by \eqref{eq:bhv-distance-explicit} that 
    \begin{align}
    \label{eq:scale_dist_to_T}
		d(T, \gamma_i^j(\lambda)) = \sqrt{\sum_{s \in E(T_i)\setminus E(T)} \lvert s \rvert^2_{\gamma_i(\lambda))}} = \lambda \cdot \sqrt{\sum_{s \in E(T_i) \setminus E(T)} \lvert s \rvert^2_{T_i}} = \lambda \cdot d(T, T_i)\,, \quad i=1,2\,.
	\end{align}
    Next, let $(\mathcal{A}, \mathcal{B})$ with $\mathcal{A} = (A_0, \ldots, A_k)$, $\mathcal{B} =(B_0, \ldots, B_k)$ 
    be the support pair of the geodesic between $T_1,T_2$. 
    Note that since $E(T) \subset E(T_i) \subseteq C(T)$ for $i=1,2$, for every $\lambda \in (0,1]$, one has $E(\gamma_i(\lambda)) = E(T_i)$, $i=1,2$. Then, we have for every $\lambda \in (0,1]$ and $j = 1,2,\ldots,k$
    \begin{align}\label{eq:pull-out-lambda}
        \lVert A_j \rVert_{\gamma_1(\lambda)} = \lambda \cdot \lVert A_j \rVert_{T_1} \text{ and } \lVert B_j \rVert_{\gamma_2(\lambda)} = \lambda \cdot \lVert B_j \rVert_{T_2}\,,
    \end{align}
    and consequently
    \[
        \frac{\lVert A_r \rVert_{\gamma_1(\lambda)}}{\lVert B_r \rVert_{\gamma_2(\lambda)}} = \frac{\lVert A_r \rVert_{T_1}}{\lVert B_r \rVert_{T_2}}\,,\quad r=1,\ldots,k\,.
    \]
    Now, \eqref{eq:bhv-geodesics1} and \eqref{eq:bhv-geodesics2}, similarly \eqref{eq:bhv-geodesics2} for $T_1, T_2$ implies at once its validity for $\gamma_1(\lambda),\gamma_2(\lambda)$ for all $\lambda \in  (0,1]$, yield that $(\mathcal{A}, \mathcal{B})$ must also be the support pair of the geodesic between $\gamma_1(\lambda)$ and $\gamma_2(\lambda)$ for all $\lambda \in (0,1]$. Again by \eqref{eq:bhv-distance-explicit}, we have 
    \begin{eqnarray}
    \label{eq:scale_dist_between}
      \nonumber
		d(\gamma_1(\lambda), \gamma_2(\lambda)) &=& \sqrt{\sum_{s \in A_0}  (\lvert s \rvert_{\gamma_2(\lambda)} - \lvert s \rvert_{\gamma_1(\lambda)})^2 +  \sum_{r =1}^k(\lVert A_r \rVert_{\gamma_1(\lambda)} + \lVert B_r \rVert_{\gamma_2(\lambda)})^2}\\
  \nonumber
  &\stackrel{\eqref{eq:same_lengths}}{=}&\sqrt{\sum_{s \in A_0 \setminus E(T)}  (\lvert s \rvert_{\gamma_2(\lambda)} - \lvert s \rvert_{\gamma_1(\lambda)})^2 +  \sum_{r =1}^k(\lVert A_r \rVert_{\gamma_1(\lambda)} + \lVert B_r \rVert_{\gamma_2(\lambda)})^2}\\
  \nonumber
  &\stackrel{\eqref{eq:pull-out-lambda}}{=}& \lambda \cdot \sqrt{\sum_{s \in A_0 \setminus E(T)} (\lvert s \rvert_{T_1} - \lvert s \rvert_{T_1})^2 +  \sum_{r =1}^k(\lVert A_r \rVert_{T_1} + \lVert B_r \rVert_{T_2})^2}\\
  &=& \lambda \cdot d(T_1,T_2)\,.
            \end{eqnarray}

    Let $\sigma_i^j = \dir_\stree(T_i^j)$ for $i=1,2$ and $j=1,2,\ldots,m$. Then 
    \begin{eqnarray}
		\label{eq:angle_decomp}
        \nonumber
		\cos(\angle_T(\sigma_1, \sigma_2)) & =& \lim_{\lambda\searrow 0}
		\frac{d^2(T, \gamma_1(\lambda)) + d^2(T, \gamma_2(\lambda)) - d^2(\gamma_1(\lambda), \gamma_2(\lambda))}{2 d(T, \gamma_1(\lambda)) d(T, \gamma_2(\lambda))}                                                                                                        \\
      \nonumber
                                          & \stackrel{\eqref{eq:scale_dist_to_T} \text{ and } \eqref{eq:scale_dist_between}}{=}& \frac{d^2(T, T_1) + d^2(T, T_2) - d^2(T_1, T_2)}{2 d(T, T_1) d(T, T_2)}\\ 
      \nonumber
		                                   & \stackrel{\eqref{eq:distance_orth_trimmed} \text{ and } \eqref{eq:distance_orth_reference}}{=}&\sum_{j=1}^{m} \frac{d(\stree, T_1^j)}{d(T, T_1)} \frac{d(\stree, T_2^j)}{d(T, T_2)}
		\frac{d^2(\stree, T_1^j) + d^2(\stree, T_2^j) - d^2(T_1^j, T_2^j)}{2 d(\stree, T_1^j) d(\stree, T_2^j)}                                                                             \\
		                                   & =& \sum_{j=1}^{m} \frac{d(\stree, T_1^j)}{d(T, T_1)} \frac{d(\stree, T_2^j)}{d(T, T_2)}\cdot \cos\left(\angle_\stree\left(\sigma_1^j, \sigma_2^j\right)\right)
    \end{eqnarray}

    Now, set for $i \in \{1,2\}\,, j \in \{1,2,\ldots,m-1\}$
    \begin{align}
    \label{eq:eta}
    	\eta_i^j = \begin{cases}
    	    0 &\text{if } d(\stree, T_i^j) = 0\,,\\
               \arccos\left(\frac{d(\stree, T_i^j)}{\sqrt{d^2(T,T_i) - \sum_{r<j}d^2(\stree, T_i^r)}}\right) \quad &\text{ else.}
    	\end{cases}
    \end{align}
    Note that $\eta_i^j \in [0,\pi]$ since
    \begin{align*}
    d^2(\stree, T_i^j) + \sum_{r<j}d^2(\stree, T_i^r) = \sum_{r=1}^j d^2(\stree, T_i^r)
    \leq
    \sum_{r=1}^m d^2(\stree, T_i^r) \stackrel{\eqref{eq:distance_orth_reference}}{=} d^2(T,T_i) \\\Leftrightarrow \\
    \frac{d(\stree, T_i^j)}{\sqrt{d^2(T,T_i) - \sum_{r<j}d^2(\stree, T_i^r)}} \leq 1\,.
    \end{align*}
    
    Hence, we have 
    \[
    \sin(\eta_i^j) = \sqrt{1 - \cos^2(\eta_i^j)} = \sqrt{1 - \frac{d^2(\stree, T_i^j)}{d^2(T,T_i) - \sum_{k<j}d^2(\stree, T_i^k)}} = \sqrt{\frac{d^2(T,T_i) - \sum_{k\leq j}d^2(\stree, T_i^k)}{d^2(T,T_i) - \sum_{k< j}d^2(\stree, T_i^k)}}
    \]
    Then, we have for $i=1,2$ and $1\leq j < m$:
    \begin{align*}
        \cos(\eta_{i}^j) \cdot \prod_{k=1}^{j-1} \sin(\eta_i^k) &= \frac{d(\stree, T_i^j)}{\sqrt{d^2(T,T_i) - \sum_{\ell<j}d^2(\stree, T_i^\ell)}} \cdot \prod_{k=1}^{j-1}\sqrt{\frac{d^2(T,T_i) - \sum_{\ell\leq k}d^2(\stree, T_i^\ell)}{d^2(T,T_i) - \sum_{\ell< k}d^2(\stree, T_i^\ell)}}\\
        &= \frac{d(\stree, T^j_i)}{d(T,T_i)}\,,
    \end{align*}  
    and 
    \begin{align*}
        \prod_{k=1}^{m-1} \sin(\eta_i^k) = \frac{\sqrt{d^2(T,T_i) - \sum_{\ell\leq m-1}d^2(\stree, T_i^\ell)}}{d(T,T_i)} \stackrel{\eqref{eq:distance_orth_reference}}{=} \frac{d(\stree, T^m_i)}{d(T,T_i)} \,,
    \end{align*}
    Thus, \eqref{eq:angle_decomp} becomes
	\begin{align}\label{eq:join-distance-proof-theorem}\nonumber
		\cos(\angle_T(\sigma_1, \sigma_2)) & =  \sum_{j=1}^{m} \left(\prod_{k=1}^{j-1}\sin(\eta_1^k) \sin(\eta_2^k)\right)\cos(\eta_1^j)\cos(\eta_2^j) \cdot \cos\left(\angle_\stree\left(\sigma_1^j, \sigma_2^j\right)\right) \\
		                                   & \quad+ \left(\prod_{k=1}^{m-1} \sin(\eta_1^k) \sin(\eta_2^k)\right) \cdot \cos\left(\angle_\stree\left(\sigma_1^m, \sigma_2^m\right)\right) \,,  
	\end{align}
     revealing the structure from (\ref{eq:join-distance}) of a nested  spherical join:
    \[
        \perpdir \cong \mathbb{L}_{k_1} \ast \Bigg( \mathbb{L}_{k_2} \ast \Big(\mathbb{L}_{k_3} \ast \big(\cdots  \ast( \mathbb{L}_{k_{m-1}} \ast \mathbb{L}_{k_m})\cdots \big)\Big) \Bigg) \,.
    \]
	The structure of the tangent cone follows from \cite[Proposition I.5.15]{bridson}. Let $C_0 M$ denote the Euclidean cone over a metric space $M$, c.f. \cite[Definiton I.5.6]{bridson}.
    Then, by 
    \begin{align*}
        \mathfrak{T}_T = C_0 \Sigma_T &\cong C_0 S^{N-l-3} \times C_0 \perpdir \\
        &\cong \RR^{N-l-2} \times  C_0\left(\mathbb{L}_{k_1} \ast \Bigg( \mathbb{L}_{k_2} \ast \Big(\mathbb{L}_{k_3} \ast \big(\cdots  \ast( \mathbb{L}_{k_{m-1}} \ast \mathbb{L}_{k_m})\cdots \big)\Big) \Bigg)\right) \\
         &\cong \RR^{N-l-2} \times C_0\mathbb{L}_{k_1} \times C_0\Bigg( \mathbb{L}_{k_2} \ast \Big(\mathbb{L}_{k_3} \ast \big(\cdots  \ast( \mathbb{L}_{k_{m-1}} \ast \mathbb{L}_{k_m})\cdots \big)\Big) \Bigg) \\
         &\ \vdots\\
         & \cong \RR^{N-l-2} \times C_0\mathbb{L}_{k_1} \times \cdots \times C_0\mathbb{L}_{k_1}\\
         &\stackrel{\text{Proposition \ref{prop:bhv_cone}}}{\cong} \RR^{N-l-2} \times \TT_{k_1} \times \cdots \times \TT_{k_m}\,.
    \end{align*}

Now, let us pick $\sigma_1 := \sigma =  (\eta_1^1, \ldots, \eta_1^{m-1}, \sigma_1,\ldots,\sigma_m) \in \perpdir$.
As an element in $\Sigma_T$, it is given by $(0, \sigma_1^\parallel, \sigma_1)$, where $\sigma_1^\parallel \in \Sigma_T^\parallel$
is arbitrary. 

Next, we decompose down $\sigma_2 := \dir_T(T, T^\prime)$. 
Let $\mathfrak{T}^\perp_T := C_0 \perpdir $ and let $\varpi_\perp: 
\mathfrak{T}_T \to \mathfrak{T}_T^\perp$ denote the canonical projection.
Writing $\sigma_2$ as element of $\Sigma^\parallel\ast \Sigma_\perp$, one has
$\sigma_2 = (\eta_0^i, \sigma_2^\parallel, \sigma_2^\perp)$, where for the origin $\cO \in \mathfrak{T}_T^\perp$
\[
    \eta_0^1 = \arccos\left(\frac{d(\cO,\varpi_\perp(\log_T(T^\prime)))}{d(T, T^\prime)}\right)\,.
\]
Then, we obtain by the definition of spherical joins 
\begin{eqnarray}
\label{eq:perp_pull}
\nonumber
   d(T, T^\prime) \cdot \cos(\angle_T(\sigma, T^\prime)) &=& d(T, T^\prime) \cos(\eta_0^1) \cdot \cos(\angle_\cO(\sigma, \varpi_\perp(T^\prime))) \\
   &=& d(\cO,\varpi_\perp(\log_T(T^\prime))) \cdot \cos(\angle_\cO(\sigma, \varpi_\perp(T^\prime)))
\end{eqnarray}

At last, we prove (ii). As $\mathfrak{T}^\perp_T \cong \TT_{k_1} \times \cdot \TT_{k_m}$, we have
\[  
    d(\cO, \varpi_\perp(\log_T(T^\prime))  = \sum_{i=1}^m d^2(\stree, \varpi_i(\log_T(T^\prime)))\,,
\]
The orthogonal directions $\perpdir$ correspond to the addition of splits 
that are compatible with the topology. In particular, $\sigma_2^\perp$ 
corresponds to the the addition of splits in $T^\prime$ that are compatible 
with the splits of $T$.

Thus, we can identify  $\varpi_\perp(\log_T(T^\prime))$ with the tree 
${T'}^\perp := T + \sum_{s \in C(T) \setminus E(T)} \lvert s \rvert_{T'} \cdot s$. 

In particular, we have $\sigma_2^\perp = \dir_T$ and $d(\cO, \varpi_\perp(\log_T(T^\prime)) 
= d(T, {T'}^\perp)$. 
Another application of Lemma \ref{lemma:shared_edges} for $T, {T^\prime}^\perp$
and $\cE = E(T)$ then yields the explicit trees ${T^\prime}^i = \varpi_j(\log_\mean(T^\prime)) \in \TT_{k_j}$, 
$j =1,\ldots,m$, and we have, as before,
\[\sigma_2^\perp = (\eta_2^1, \ldots, \eta_2^{m-1}, \dir_\stree({T^\prime}^1, \ldots, {T^\prime}^m)\,,\]
where for $j=1,\ldots,m$
\begin{align*}
    	\eta_2^j = \begin{cases}
    	    0 &\text{if } d(\stree, (T^j) = 0\,,\\
               \arccos\left(\frac{d(\stree, {T^\prime}^j)}{\sqrt{d^2(T,{T^\prime}^\perp) - \sum_{r<j}d^2(\stree, {T^\prime}^j)}}\right) \quad &\text{ else.}
    	\end{cases}
\end{align*}
Finally, we obtain 
\begin{eqnarray*}
   \frac{d(T, T^\prime)}{d(T,\varpi_\perp(\log_T(T^\prime)))} &\cdot& \cos(\angle_T(\sigma,wT^\prime)) \stackrel{\eqref{eq:perp_pull}}{=} 
   \cos(\angle_\cO(\sigma, \varpi_\perp(T^\prime)))\\
   &\stackrel{\eqref{eq:join-distance-proof-theorem}}{=}& \sum_{j=1}^{m} \left(\prod_{k=1}^{j-1}\sin(\eta_1^k) \sin(\eta_2^k)\right)\cos(\eta_1^j)\cos(\eta_2^j) \cdot \cos\left(\angle_\stree\left(\sigma_1^j, \sigma_2^j\right)\right) \\
		                                   &&+ \left(\prod_{k=1}^{m-1} \sin(\eta_1^k) \sin(\eta_2^k)\right) \cdot \cos\left(\angle_\stree\left(\sigma_1^m, \sigma_2^m\right)\right) \,,  \\ 
		 &=& \sum_{i=1}^{m-1} \left(\prod_{j=1}^{i-1} \sin(\eta_j)\right)\cos(\eta_i) \cdot \frac{d(\stree, \varpi_i(\log_T(T^\prime)))}{d(T,\varpi_\perp(\log_T(T^\prime)))} \cdot \cos(\angle_{\stree}(\sigma_i, \varpi_i(\log_T(T^\prime)))) \\
	&& + \left(\prod_{j=1}^{m-1} \sin(\eta_j)\right) \cdot \frac{d(\stree, \varpi_m(\log_T(T^\prime)))}{d(T,\varpi_\perp(\log_T(T^\prime)))} \cdot \cos(\angle_{\stree}(\sigma_m, \varpi_m(\log_T(T^\prime)))) \,.
\end{eqnarray*}
Rearranging yields the assertion of (ii).

%

\subsection{Proof of Lemma \ref{lem:prox}}

 (i): 
 If $\angle_\stree(\tau, T) =0$, i.e. $\tau = \dir_\stree T$. Since 
 $$ f_T(\sigma) \geq -1 = f_T(\dir_\stree (T))\,,\quad  \angle_\stree(\sigma, \tau) \geq 0 = \angle_\stree(\sigma, \tau)$$
 for all $\sigma \in \Sigma_\stree$, we have at once that $\tau = \dir_\stree (T)$ is the unique element in $\prox_T(\tau)$.

 Now assume that $\angle_\stree(\tau, T) >0$, i.e. $\tau \neq \dir_\stree T$. Assume that $\sigma \in \prox_T(\tau)$ with $\lambda = \angle_\stree(\sigma, \tau)$.
 
 Case I: $\lambda > \angle_\stree(T, \tau)$. Then
 $$f_T(\sigma) + \frac{\angle_\stree(\sigma, \tau)^2}{2\nu} \geq - \cos (0) + \frac{\lambda^2}{2\nu}> f_T(\dir_\stree(T)) + \frac{\angle_\stree(T, \tau)^2}{2\nu}$$
a contradiction to $\sigma \in P_T(\tau)$.

 Hence we are in 
 Case II: $\lambda \leq \angle_\stree(T, \tau) \leq \pi$.
 Since $\bar\beta^{\dir_\stree (T)}_\tau$ is a geodesic, we have by the triangle inequality that
\begin{eqnarray*}
\lambda + \angle_\stree(\bar\beta^{\dir_\stree (T)}_\tau(\lambda) ,T) &=& \angle_\stree(\tau, \bar\beta^{\dir_\stree (T)}_\tau(\lambda))+\angle_\stree(\bar\beta^{\dir_\stree (T)}_\tau(\lambda) ,T)\\
&=& \angle_\stree(\tau,T)\\
&\leq& \angle_\stree(\tau, \sigma) + 
\angle_\stree(\sigma,T)\\
&=&\lambda + \angle_\stree(\sigma,T)\,.
\end{eqnarray*}
This yields
\begin{eqnarray*}
 f_T(\sigma) + \frac{\angle_\stree(\sigma, \tau)^2}{2\nu}&=& -\cos \angle_\stree(T,\sigma) + \frac{\lambda^2}{2\nu}\\&\geq&  -\cos \angle_\stree(\bar\beta^{\dir_\stree (T)}_\tau(\lambda),T) + \frac{\angle_\stree(\bar\beta^{\dir_\stree (T)}_\tau(\lambda), \tau)^2}{2\nu}
\end{eqnarray*}
and hence equality above due to $\sigma \in \prox_T(\tau)$. Thus $\bar\beta^{\dir_\stree (T)}_\tau(\lambda) \in \prox_T(\tau)$ and $\lambda$ can be obtained by minimizing 
\begin{eqnarray}\label{eq:geodesic-prox-op}
 f_T(\bar\beta^{\dir_\stree (T)}_\tau(\lambda)) + \frac{\lambda^2}{2\nu}\,.
 \end{eqnarray}
In particular, if $\angle_\stree(T, \tau) <\pi $, since  $\bar\beta^{\dir_\stree (T)}_\tau(\lambda)$ has the same distance to $\tau$ and $T$ respectively, as $\sigma$, due to uniqueness of geodesics of length $<\pi$ (see Remark \ref{rm:Cat-1}), we have uniqueness
$$\left\{\bar\beta^{\dir_\stree (T)}_\tau(\lambda)\right\} = \prox_T(\tau)\,.$$

Moreover, if $\angle_\stree(T, \tau) =\pi $, then 
$\angle_\stree(\bar\beta^{\dir_\stree (T)}_\tau(\lambda), T)= \pi - \lambda$, hence $ f_T(\bar\beta^{\dir_\stree (T)}_\tau(\lambda)) = - \cos (\pi-\lambda)= \cos \lambda$ and thus
\begin{eqnarray*}
\frac{d^2}{d\lambda^2} \left(f_T(\bar\beta^{\dir_\stree (T)}_\tau(\lambda)) + \frac{\lambda^2}{2\nu}\right)  &=& -\cos \lambda + \frac{1}{\nu}\,.
\end{eqnarray*}
By hypothesis, $\nu \leq 1$, hence the above is positive for all $0<\lambda<2\pi $, and thus (\ref{eq:geodesic-prox-op}), for $0\leq \lambda \leq \pi$, is uniquely minimized at $\lambda =0$, i.e. $\{\tau\}=  P_T(\tau)$.

(ii): W.l.o.g. assume $\|T\| = 1$.
	Since $\stree$ is the cone point of $\bhv$, we have $\angle_{\stree}(T_\tau, T) =
		\arccos\left(\frac{d^2(\stree, T) + d^2(\stree, T) - d^2(T_\tau, T)}{2d(\stree, T_\tau) d(\stree, T)}\right)$.
	Consequently, the convex hull of the geodesic triangle spanned by $\stree, T_\tau, T$ is
	isometric to the convex hull of a triangle in $\RR^2$ with equal edge lengths, see
	\cite[Proposition II.2,9]{bridson}, and we can utilize Euclidean geometry
	for the proof.
 
	Abbreviating $\theta := \angle_\stree(\sigma, \tau)$, $\beta := \overline{\beta}_\sigma^\tau$ and $\gamma := \overline{\gamma}_{T_\tau}^{T}$ 
    we thus search for $\lambda^\prime \in [0,1]$, given  $\lambda \in [0,1]$, such that 	$\lambda \cdot \theta = \angle_\stree\left(\sigma, \gamma(\lambda^\prime)\right)$ (illustrated inFigure \ref{fig:sphdist}).
    
    \begin{figure}
		\centering
		\includegraphics[scale=1]{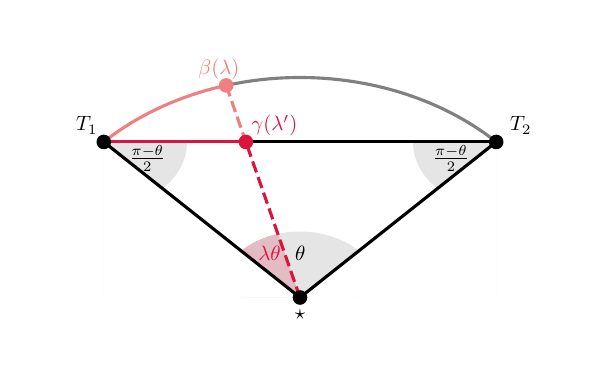}
		\caption{The geodesic triangle in the proof of \ref{lem:prox}. Here, we identify $\Sigma_\stree$ with the link $\mathbb{L}_n$ by Proposition~\ref{prop:bhv_cone}, and then construct the geodesic $\beta$ by reparametrizing $\gamma$ 
        and projecting back to the link.
        }
		\label{fig:sphdist}
	\end{figure}
	
 Using the law of sines, we obtain
	\begin{align*}
		\lambda^\prime \cdot d(T_\tau, T) = d\left(T_\tau, \gamma(\lambda^\prime)\right) = \frac{\sin(\lambda \cdot \theta)}{\sin\left(\pi - \left(\lambda\theta + \frac{\pi - \theta}{2}\right) \right)}\,.
	\end{align*}
	Since, on the other hand,
	\begin{align*}
		d(T_\tau,T) = \sqrt{d(\stree, T_\tau) + d(\stree, T) - 2 \cos(\theta)} =  \sqrt{ \left(2  - e^{i \theta} - e^{- i \theta}\right)} = \sqrt{ \left(e^{i \frac{\theta}{2}} - e^{- i \frac{\theta}{2}}\right)^2} = 2 \sin\left(\frac{\theta}{2}\right)\,,
	\end{align*}
	we have the assertion
	\begin{align*}
		\lambda^\prime = \frac{\sin(\lambda \cdot \theta)}{2 \sin\left(\frac{\theta}{2}\right) \sin\left(\frac{\pi+(1-\lambda)\theta}{2}\right)} 
    \,.
	\end{align*}
\qed

\end{document}